\documentclass[10pt,a4paper]{article}
\usepackage{graphicx}
\usepackage[a4paper,top=20mm,bottom=23mm,left=22mm,right=22mm,headheight=15pt,footskip=11mm]{geometry}
\usepackage[T1]{fontenc}
\usepackage{lmodern}
\usepackage[scaled=0.98]{zi4}

\usepackage{authblk} % package for author list
\usepackage[titletoc,title]{appendix}
\usepackage{hyperref}
\hypersetup{colorlinks=true,citecolor=blue,linkcolor=blue,filecolor=blue,urlcolor=blue,breaklinks=true,hyperfootnotes=false}
\usepackage[dvipsnames]{xcolor}
\usepackage{color}
\usepackage[marginal]{footmisc}
\usepackage{url}
\usepackage{mathtools}
\usepackage{amsmath}
\usepackage{amssymb}
\usepackage[shortlabels]{enumitem}
\usepackage{graphicx,epic,eepic,epsfig,latexsym,verbatim}
\usepackage{dsfont}

\usepackage{framed}
\definecolor{shadecolor}{rgb}{0.9,0.9,0.9}
\usepackage{subcaption}
\usepackage{caption}
\usepackage{float}
\usepackage{tikz}
\usetikzlibrary{positioning}
\usetikzlibrary{plotmarks}
\usetikzlibrary{arrows.meta}
\usepackage[most]{tcolorbox}
\usepackage{relsize}

\usepackage{pgfplots}
\pgfplotsset{compat=newest}
\usepgfplotslibrary{patchplots}

\definecolor{LightGray}{RGB}{220,220,220}
\definecolor{myred}{RGB}{236, 17, 0}
\definecolor{myblue}{RGB}{10, 88, 153}
\definecolor{myorange}{RGB}{236, 137, 0}
\definecolor{mygreen}{RGB}{26, 152, 81}

\usepackage{amsthm}
\newtheorem{theorem}{Theorem}[section]
\newtheorem{lemma}[theorem]{Lemma}
\newtheorem{proposition}[theorem]{Proposition}
\newtheorem{corollary}[theorem]{Corollary}
\newtheorem{definition}[theorem]{Definition}

\def\squareforqed{\hbox{\rlap{$\sqcap$}$\sqcup$}}
\def\qed{\ifmmode\squareforqed\else{\unskip\nobreak\hfil
      \penalty50\hskip1em\null\nobreak\hfil\squareforqed
      \parfillskip=0pt\finalhyphendemerits=0\endgraf}\fi}
\def\endenv{\ifmmode\;\else{\unskip\nobreak\hfil
      \penalty50\hskip1em\null\nobreak\hfil\;
      \parfillskip=0pt\finalhyphendemerits=0\endgraf}\fi}

\newcounter{remark}

\newcounter{example}

\mathchardef\ordinarycolon\mathcode`\:
\mathcode`\:=\string"8000
\def\vcentcolon{\mathrel{\mathop\ordinarycolon}}
\begingroup \catcode`\:=\active
\lowercase{\endgroup
  \let :\vcentcolon
}

\usepackage{colortbl}
\usepackage{pifont}% http://ctan.org/pkg/pifont
\definecolor{Gray}{gray}{0.92}
\definecolor{Gray2}{gray}{0.75}
\definecolor{maroon}{cmyk}{0,0.87,0.68,0.32}
\usepackage{booktabs}
\usepackage{makecell}%To keep spacing of text in tables
\usepackage{diagbox}
\usepackage{multirow}

\newcommand{\ket}[1]{|#1\rangle}

\makeatletter
\newcommand*\rel@kern[1]{\kern#1\dimexpr\macc@kerna}
\newcommand*\widebar[1]{%
  \begingroup
  \def\mathaccent##1##2{%
    \rel@kern{0.8}%
    \overline{\rel@kern{-0.8}\macc@nucleus\rel@kern{0.2}}%
    \rel@kern{-0.2}%
  }%
  \macc@depth\@ne
  \let\math@bgroup\@empty \let\math@egroup\macc@set@skewchar
  \mathsurround\z@ \frozen@everymath{\mathgroup\macc@group\relax}%
  \macc@set@skewchar\relax
  \let\mathaccentV\macc@nested@a
  \macc@nested@a\relax111{#1}%
  \endgroup
}
\makeatother

\usepackage{algorithm}
\usepackage{algorithmic}

\usepackage{mathrsfs}
\usepackage{stmaryrd}

\usepackage{pgfplots}

\usepackage{listings}
\usepackage{lstlinebgrd}
\usepackage{tabularx}
\usepackage{needspace}
\usepackage{microtype}
\usepackage{fancyhdr}
\usepackage{titlesec}
\usepackage{eso-pic}
\usepackage{mathtools}
\tcbuselibrary{skins,breakable,listings}

\DeclareRobustCommand{\LeanQuantum}{\textnormal{\textsc{Lean-QuantumAlg}}}

\DeclareUnicodeCharacter{00B7}{\ensuremath{\cdot}}
\definecolor{QSTNavy}{RGB}{24,55,91}
\definecolor{QSTTeal}{RGB}{20,119,128}
\definecolor{QSTInk}{RGB}{31,40,54}
\definecolor{QSTMuted}{RGB}{102,116,135}
\definecolor{QSTFrame}{RGB}{139,160,184}
\definecolor{QSTCodeBack}{RGB}{247,250,252}
\definecolor{QSTCodeGutter}{RGB}{238,245,247}
\definecolor{QSTCodeFocus}{RGB}{228,244,242}
\definecolor{QSTCodeFocusRule}{RGB}{20,119,128}
\definecolor{QSTLineNo}{RGB}{103,119,137}
\definecolor{QSTPanelBlue}{RGB}{33,94,148}
\definecolor{QSTPanelBlueDark}{RGB}{21,64,107}
\definecolor{QSTPanelBlueSoft}{RGB}{239,246,252}
\definecolor{QSTPanelBlueFrame}{RGB}{92,139,181}
\definecolor{QSTPanelCodeBack}{RGB}{247,251,255}
\definecolor{QSTPanelMeta}{RGB}{70,91,118}
\definecolor{QSTSoft}{RGB}{232,241,244}
\definecolor{QSTWarn}{RGB}{140,83,43}
\definecolor{QSTDone}{RGB}{20,119,128}
\definecolor{QITLeanDecl}{RGB}{15,77,151}
\definecolor{QITLeanNamespace}{RGB}{16,108,137}
\definecolor{QITLeanProof}{RGB}{49,118,89}
\definecolor{QITLeanType}{RGB}{36,97,153}
\definecolor{QITLeanDomain}{RGB}{38,83,128}
\definecolor{QITLeanComment}{RGB}{94,111,124}
\definecolor{QITLeanString}{RGB}{26,118,116}
\definecolor{QITLeanWarn}{RGB}{145,69,60}

\DeclareUnicodeCharacter{03A6}{\ensuremath{\Phi}}
\DeclareUnicodeCharacter{03A8}{\ensuremath{\Psi}}
\DeclareUnicodeCharacter{03B1}{\ensuremath{\alpha}}
\DeclareUnicodeCharacter{03B2}{\ensuremath{\beta}}
\DeclareUnicodeCharacter{03B3}{\ensuremath{\gamma}}
\DeclareUnicodeCharacter{03B4}{\ensuremath{\delta}}
\DeclareUnicodeCharacter{03B5}{\ensuremath{\varepsilon}}
\DeclareUnicodeCharacter{03B7}{\ensuremath{\eta}}
\DeclareUnicodeCharacter{03B8}{\ensuremath{\theta}}
\DeclareUnicodeCharacter{03B9}{\ensuremath{\iota}}
\DeclareUnicodeCharacter{03BA}{\ensuremath{\kappa}}
\DeclareUnicodeCharacter{03BB}{\ensuremath{\lambda}}
\DeclareUnicodeCharacter{03BC}{\ensuremath{\mu}}
\DeclareUnicodeCharacter{03C0}{\ensuremath{\pi}}
\DeclareUnicodeCharacter{03C1}{\ensuremath{\rho}}
\DeclareUnicodeCharacter{03C3}{\ensuremath{\sigma}}
\DeclareUnicodeCharacter{03C4}{\ensuremath{\tau}}
\DeclareUnicodeCharacter{03C6}{\ensuremath{\phi}}
\DeclareUnicodeCharacter{03C8}{\ensuremath{\psi}}
\DeclareUnicodeCharacter{03C9}{\ensuremath{\omega}}
\DeclareUnicodeCharacter{2115}{\ensuremath{\mathbb{N}}}
\DeclareUnicodeCharacter{211D}{\ensuremath{\mathbb{R}}}
\DeclareUnicodeCharacter{2192}{\ensuremath{\to}}
\DeclareUnicodeCharacter{2194}{\ensuremath{\leftrightarrow}}
\DeclareUnicodeCharacter{21A6}{\ensuremath{\mapsto}}
\DeclareUnicodeCharacter{21AA}{\ensuremath{\hookrightarrow}}
\DeclareUnicodeCharacter{2200}{\ensuremath{\forall}}
\DeclareUnicodeCharacter{2203}{\ensuremath{\exists}}
\DeclareUnicodeCharacter{2208}{\ensuremath{\in}}
\DeclareUnicodeCharacter{2209}{\ensuremath{\notin}}
\DeclareUnicodeCharacter{2211}{\ensuremath{\sum}}
\DeclareUnicodeCharacter{221E}{\ensuremath{\infty}}
\DeclareUnicodeCharacter{2227}{\ensuremath{\wedge}}
\DeclareUnicodeCharacter{2228}{\ensuremath{\vee}}
\DeclareUnicodeCharacter{2282}{\ensuremath{\subset}}
\DeclareUnicodeCharacter{2286}{\ensuremath{\subseteq}}
\DeclareUnicodeCharacter{2293}{\ensuremath{\sqcap}}
\DeclareUnicodeCharacter{2294}{\ensuremath{\sqcup}}
\DeclareUnicodeCharacter{2297}{\ensuremath{\otimes}}
\DeclareUnicodeCharacter{22A4}{\ensuremath{\top}}
\DeclareUnicodeCharacter{22A5}{\ensuremath{\bot}}
\DeclareUnicodeCharacter{2264}{\ensuremath{\le}}
\DeclareUnicodeCharacter{2265}{\ensuremath{\ge}}
\DeclareUnicodeCharacter{2260}{\ensuremath{\ne}}
\DeclareUnicodeCharacter{00D7}{\ensuremath{\times}}
\DeclareUnicodeCharacter{27E8}{\ensuremath{\langle}}
\DeclareUnicodeCharacter{27E9}{\ensuremath{\rangle}}

\lstdefinelanguage{LeanQIT}{
	sensitive=true,
	morekeywords=[1]{def,theorem,lemma,structure,class,instance,abbrev,inductive,noncomputable},
	morekeywords=[2]{namespace,end,variable,variables,section,open,where},
	morekeywords=[3]{by,exact,have,show,let,letI,fun,forall,exists,obtain,rcases,refine,intro,intros,rw,change,calc,from,with,inferInstance},
	morekeywords=[4]{Type,Prop,Sort,Set},
	morekeywords=[5]{axiom,sorry,admit},
	morekeywords=[6]{State,Channel,POVM,Ensemble,HypothesisTestingEffect,CMatrix,MatrixMap,SubnormalizedState,PureVector,EReal,Fintype,DecidableEq,Nonempty,Prod,BddAbove,BddBelow,IsLeast},
	morecomment=[l]{--},
	morecomment=[s]{/-}{-/},
	morestring=[b]",
}

\lstdefinestyle{QITLeanListing}{
	language=LeanQIT,
	basicstyle=\fontsize{9.8}{11.5}\selectfont\ttfamily\color{QSTInk},
	keywordstyle=[1]\color{QITLeanDecl}\bfseries,
	keywordstyle=[2]\color{QITLeanNamespace}\bfseries,
	keywordstyle=[3]\color{QITLeanProof},
	keywordstyle=[4]\color{QITLeanType}\bfseries,
	keywordstyle=[5]\color{QITLeanWarn}\bfseries,
	keywordstyle=[6]\color{QITLeanDomain}\bfseries,
	commentstyle=\itshape\color{QITLeanComment},
	stringstyle=\color{QITLeanString},
	numbers=none,
	xleftmargin=0pt,
	framexleftmargin=0pt,
	columns=fullflexible,
	keepspaces=true,
	showstringspaces=false,
	tabsize=2,
	breaklines=true,
	breakatwhitespace=false,
	breakindent=1.15em,
	breakautoindent=true,
	aboveskip=0pt,
	belowskip=0pt,
	literate=
	{_}{{\textunderscore\allowbreak}}1
	{\\Phi}{{\ensuremath{\Phi}}}1
	{\\Psi}{{\ensuremath{\Psi}}}1
	{\\alpha}{{\ensuremath{\alpha}}}1
	{\\beta}{{\ensuremath{\beta}}}1
	{\\gamma}{{\ensuremath{\gamma}}}1
	{\\delta}{{\ensuremath{\delta}}}1
	{\\epsilon}{{\ensuremath{\varepsilon}}}1
	{\\eta}{{\ensuremath{\eta}}}1
	{\\theta}{{\ensuremath{\theta}}}1
	{\\iota}{{\ensuremath{\iota}}}1
	{\\kappa}{{\ensuremath{\kappa}}}1
	{\\lambda}{{\ensuremath{\lambda}}}1
	{\\mu}{{\ensuremath{\mu}}}1
	{\\pi}{{\ensuremath{\pi}}}1
	{\\rho}{{\ensuremath{\rho}}}1
	{\\sigma}{{\ensuremath{\sigma}}}1
	{\\tau}{{\ensuremath{\tau}}}1
	{\\phi}{{\ensuremath{\phi}}}1
	{\\psi}{{\ensuremath{\psi}}}1
	{\\chi}{{\ensuremath{\chi}}}1
	{\\omega}{{\ensuremath{\omega}}}1
	{\\mathbbN}{{\ensuremath{\mathbb{N}}}}1
	{\\mathbbR}{{\ensuremath{\mathbb{R}}}}1
	{\\mathbbC}{{\ensuremath{\mathbb{C}}}}1
	{\\Rnonneg}{{\ensuremath{\mathbb{R}_{\ge 0}}}}1
	{\\to}{{\ensuremath{\to}}}1
	{\\mapsto}{{\ensuremath{\mapsto}}}1
	{\\hookrightarrow}{{\ensuremath{\hookrightarrow}}}1
	{\\forall}{{\ensuremath{\forall}}}1
	{\\exists}{{\ensuremath{\exists}}}1
	{\\infty}{{\ensuremath{\infty}}}1
	{\\in}{{\ensuremath{\in}}}1
	{\\notin}{{\ensuremath{\notin}}}1
	{\\sum}{{\ensuremath{\sum}}}1
	{\\wedge}{{\ensuremath{\wedge}}}1
	{\\vee}{{\ensuremath{\vee}}}1
	{\\otimes}{{\ensuremath{\otimes}}}1
	{\\top}{{\ensuremath{\top}}}1
	{\\bot}{{\ensuremath{\bot}}}1
	{\\le}{{\ensuremath{\le}}}1
	{\\ge}{{\ensuremath{\ge}}}1
	{\\ne}{{\ensuremath{\ne}}}1
	{\\cdot}{{\ensuremath{\cdot}}}1
	{\\langle}{{\ensuremath{\langle}}}1
	{\\rangle}{{\ensuremath{\rangle}}}1
	{¬}{{\ensuremath{\neg}}}1
	{×}{{\ensuremath{\times}}}1
	{θ}{{\ensuremath{\theta}}}1
	{ι}{{\ensuremath{\iota}}}1
	{κ}{{\ensuremath{\kappa}}}1
	{μ}{{\ensuremath{\mu}}}1
	{ρ}{{\ensuremath{\rho}}}1
	{φ}{{\ensuremath{\phi}}}1
	{ψ}{{\ensuremath{\psi}}}1
	{ᴴ}{{\ensuremath{^{\mathrm H}}}}1
	{•}{{\ensuremath{\mathbin{\bullet}}}}1
	{⁅}{{\ensuremath{\lbrack\!\lbrack}}}1
	{⁆}{{\ensuremath{\rbrack\!\rbrack}}}1
	{₀}{{\ensuremath{_0}}}1
	{ₖ}{{\ensuremath{_k}}}1
	{ℂ}{{\ensuremath{\mathbb{C}}}}1
	{ℕ}{{\ensuremath{\mathbb{N}}}}1
	{ℝ}{{\ensuremath{\mathbb{R}}}}1
	{←}{{\ensuremath{\leftarrow}}}1
	{→}{{\ensuremath{\to}}}1
	{↔}{{\ensuremath{\leftrightarrow}}}1
	{∀}{{\ensuremath{\forall}}}1
	{∃}{{\ensuremath{\exists}}}1
	{∈}{{\ensuremath{\in}}}1
	{∑}{{\ensuremath{\sum}}}1
	{∧}{{\ensuremath{\wedge}}}1
	{≃}{{\ensuremath{\simeq}}}1
	{≠}{{\ensuremath{\ne}}}1
	{≤}{{\ensuremath{\le}}}1
	{⊗}{{\ensuremath{\otimes}}}1
	{⟨}{{\ensuremath{\langle}}}1
	{⟩}{{\ensuremath{\rangle}}}1
}

\newif\ifqitLineFocused
\providecommand{\qitFocusSpec}{}

\makeatletter
\lst@Key{qitfocus}{}{\def\qitFocusSpec{#1}}
\makeatother

\newcommand{\qitHighlightSpan}[2]{%
	\ifnum\value{lstnumber}<#1\relax
	\else
		\ifnum\value{lstnumber}>#2\relax
		\else
			\qitLineFocusedtrue
		\fi
	\fi
}

\newcommand{\qitLineBackground}[3]{%
	\qitLineFocusedfalse
	\qitFocusSpec
	\ifqitLineFocused
		\color{QSTCodeFocus}%
		\rule[-#3]{#1}{\dimexpr#2+#3\relax}%
		\kern-#1%
		\color{QSTCodeFocusRule}%
		\rule[-#3]{1.35pt}{\dimexpr#2+#3\relax}%
	\fi
}

\tcbset{
	qit focus/.style={
		listing options={
			style=QITLeanListing,
			qitfocus={#1},
			linebackgroundcolor={\color{QSTPanelCodeBack}},
			linebackgroundsep=1.1mm,
			linebackgroundwidth=\dimexpr\linewidth+2.2mm\relax,
			linebackgroundheight=\dimexpr\ht\strutbox+1.1pt\relax,
			linebackgrounddepth=\dimexpr\dp\strutbox+1.1pt\relax,
			linebackgroundcmd=\qitLineBackground
		}
	}
}

\newtcblisting{qitleancode}[1][]{%
	enhanced,
	listing only,
	listing engine=listings,
	listing options={style=QITLeanListing},
	colback=QSTPanelCodeBack,
	colframe=QSTPanelBlueFrame!62,
	colbacktitle=QSTPanelBlueDark,
	coltitle=white,
	fonttitle=\small\sffamily\bfseries,
	toptitle=0.8mm,
	bottomtitle=0.8mm,
	lefttitle=1.5mm,
	titlerule=0pt,
	boxrule=0.4pt,
	arc=1.5pt,
	boxsep=0pt,
	left=1.6mm,
	right=1.6mm,
	top=1.25mm,
	bottom=1.25mm,
	before skip=0.45\baselineskip,
	after skip=0.65\baselineskip,
	#1
}

\definecolor{QudeBlue}{RGB}{34,55,199}
\definecolor{QudeViolet}{RGB}{108,49,225}
\definecolor{QudePale}{RGB}{245,247,255}

\newcommand{\QudeLeapLogo}[1]{%
  \IfFileExists{images/qudeleap_logo.png}{%
    \includegraphics[#1]{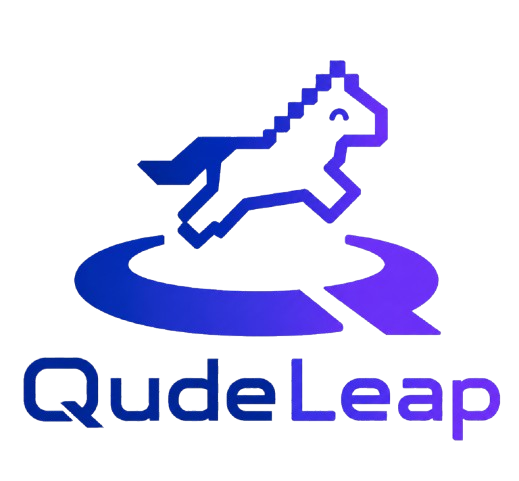}%
  }{%
    \IfFileExists{qudeleap_logo.png}{%
      \includegraphics[#1]{qudeleap_logo.png}%
    }{%
      \begingroup\setlength{\fboxsep}{2pt}%
      \fbox{\sffamily\scriptsize QudeLeap}%
      \endgroup
    }%
  }%
}

\titleformat{\section}
  {\Large\sffamily\bfseries\color{QSTPanelBlueDark}}
  {\thesection}{0.65em}{}
  [\vspace{0.15em}\color{QSTPanelBlueFrame!55}\titlerule]
\titleformat{\subsection}
  {\large\sffamily\bfseries\color{QSTNavy}}
  {\thesubsection}{0.65em}{}
\titleformat{\subsubsection}
  {\normalsize\sffamily\bfseries\color{QSTInk}}
  {\thesubsubsection}{0.65em}{}

\usepackage[utf8]{inputenc}
\usepackage[numbers,sort&compress]{natbib}
\usepackage{amsfonts}
\usepackage{bm}
\usepackage{bbm}
\usepackage{stackengine}
\graphicspath{{images/}}

\usetikzlibrary{arrows.meta,positioning,fit,calc,decorations.pathreplacing,patterns}

\usepackage{listings}
\definecolor{codebg}{RGB}{40,44,52}
\definecolor{codefg}{RGB}{248,248,242}
\definecolor{keywordcolor}{rgb}{0.7, 0.1, 0.1}   % red
\definecolor{tacticcolor}{rgb}{0.0, 0.1, 0.6}    % blue
\definecolor{commentcolor}{rgb}{0.4, 0.4, 0.4}   % grey
\definecolor{symbolcolor}{rgb}{0.0, 0.1, 0.6}    % blue
\definecolor{sortcolor}{rgb}{0.1, 0.5, 0.1}      % green
\definecolor{attributecolor}{rgb}{0.7, 0.1, 0.1} % red

\newtcolorbox{codeListingBox}[2][]{
  enhanced,
  breakable,
  colback=gray!5,
  colframe=gray!55,
  boxrule=0.4pt,
  arc=0.6mm,
  left=4pt,right=4pt,top=4pt,bottom=4pt,
  title={#2},
  fonttitle=\sffamily\small,
  coltitle=black,
  attach boxed title to top left={yshift=-2mm,xshift=4mm},
  boxed title style={colback=gray!18,sharp corners,boxrule=0pt,boxsep=2pt},
  #1
}

\newtcolorbox{mybox}[1][]{
  enhanced, breakable,
  colback=blue!5!white,
  colframe=blue!60!black,
  boxrule=0.6pt, arc=1mm,
  fonttitle=\bfseries,
  coltitle=white,
  title={#1},
  attach boxed title to top left={yshift=-2mm,xshift=3mm},
  boxed title style={colback=blue!60!black},
}

\DeclareMathOperator{\Tr}{Tr}
\DeclareMathOperator{\rk}{rk}

\usepackage{thmtools,thm-restate}
\usepackage{nicefrac}
\newcommand{\mathlib}{\textsc{mathlib}}
\newcommand{\cslib}{\textsc{cslib}}
\newcommand{\lean}{\textsc{Lean}}

\newcommand{\numendpoints}{62}
\newcommand{\numauditrows}{33}
\newcommand{\registryTotal}{142}
\newcommand{\registryQML}{73}
\newcommand{\registryQMLTheorems}{56}

\DeclareUrlCommand{\decl}{\urlstyle{tt}}

\definecolor{darkblue}{rgb}{0.,0.,0.4}

\makeatletter
\renewcommand{\maketitle}{%
  \thispagestyle{empty}%
  \begingroup
  \setlength{\parindent}{0pt}%
  \noindent
  \begin{minipage}[t]{0.76\textwidth}
    \vspace{0pt}%
    {\raggedright\hyphenpenalty=10000\exhyphenpenalty=10000
      \sffamily\bfseries\color{QSTInk}\fontsize{23.5}{27.5}\selectfont
      \@title\par}
  \end{minipage}%
  \hfill
  \begin{minipage}[t]{0.18\textwidth}
    \vspace{0pt}\raggedleft
    \QudeLeapLogo{width=2.65cm}
  \end{minipage}
  \vspace{7mm}

  \noindent
  {\color{QudeBlue}\rule{0.22\textwidth}{1.8pt}}%
  {\color{QudeViolet}\rule{0.78\textwidth}{1.8pt}}\par
  \vspace{5mm}

  \begin{center}
    {\sffamily\normalsize\color{QSTInk}\@author\par}
    \vspace{2mm}
    {\small\sffamily\color{QSTMuted}\par}
  \end{center}
  \vspace{3mm}
  \@thanks
  \endgroup
  \setcounter{footnote}{0}%
}
\makeatother

\title{An Agentic Formalization for Certified Quantum Neural Network Design}

\author[1,2]{Mingrui Jing\thanks{Equal contribution.}}
\author[1,2]{Lei Zhang\thanks{Equal contribution.}}
\author[1,2]{Yusheng Zhao}
\author[1,2]{Hongshun Yao}
\author[2]{Xin Wang\thanks{felixxinwang@hkust-gz.edu.cn}}
\affil[1]{\small QudeLeap Research, Shanghai 200030, China}
\affil[2]{\small The Hong Kong University of Science and Technology (Guangzhou), Guangdong 511453, China}

\begin{document}
\maketitle

\vspace{-5mm}
\begin{center}
  \small\sffamily\color{QSTMuted}%
  \raisebox{-0.5ex}{\QudeLeapLogo{height=3.2mm}}\hspace{2pt}%
  \href{https://github.com/QudeLeap/Lean-QuantumAlg}{github.com//QudeLeap/Lean-QuantumAlg}
\end{center}

\begin{abstract}
A central model in quantum machine learning is the quantum neural network
(QNN), whose design requires balancing expressivity and trainability.
Technically, expressivity is studied through circuit-function analysis, such as
quantum signal processing, while trainability is analyzed using
dynamical-Lie-algebra (DLA) methods. To support certified QNN design, we
formalize these major components of QNN theory in a connected \lean{}~4
development checked by a proof kernel, where every analytic input is either
proved or exposed as a named hypothesis. On the expressivity side, we prove exact if-and-only-if characterizations of
single-qubit QNNs, a resource-counted quantum phase processing theorem, and an
overparameterization ceiling that bounds the quantum Fisher information rank by
the DLA dimension. On the trainability side, we derive the direct-sum
loss-variance law through a de-circularized second-moment interface. A
parameterized Casimir-uniqueness engine discharges the required inputs for fully
controllable, orthogonal, and matchgate circuit families, while single-qubit and
product-Clifford ensembles close the two-design assumptions directly. A capstone
theorem pairs the conditional variance law with exact loss reconstruction in DLA
coordinates. The development record identifies eight corrections and clarifications that were
not explicit in the informal arguments. We expect this work to provide a
machine-checkable foundation for QNN theory and a step toward AI-assisted or
automated design of quantum machine learning algorithms.
\end{abstract}

\setcounter{tocdepth}{2}
\begingroup
\small
\tableofcontents
\endgroup

\section{Introduction}\label{sec:intro}
Quantum computing is a promising paradigm for fast and efficient computation, with the potential to provide substantial advantages for solving valuable problems. Within this broader effort, quantum machine learning (QML) studies how quantum devices can be used to accelerate learning models for classical and quantum data. A central model class in QML is the parameterized quantum circuit (PQC), often called a quantum neural network (QNN), whose gate parameters are trained for tasks executed on quantum hardware~\cite{Biamonte2017c,cerezo2022challengesa,Schuld2015a}. QML has therefore become an active research direction at the interface of quantum computing, data-driven modeling, and machine learning. A major goal is to design QML algorithms that can exploit the computational power available on NISQ devices and early fault-tolerant quantum computers.

In the theory of QNNs, usefulness depends on two requirements that must be satisfied simultaneously. A QNN must be expressive enough to represent the target function class, and it must remain trainable enough to provide a usable optimization signal. These two requirements are often in tension. Highly expressive circuits can be difficult to optimize, while strongly constrained circuits may train well but lack the capacity needed for meaningful learning tasks. Gradient-variance barren plateaus are one standard obstruction to trainability~\cite{mcclean2018barren}. The trainability notion studied in this paper is related to, but distinct from, barren plateaus: rather than focusing directly on gradients, we study ensemble concentration of loss values through the Lie-algebraic formulation introduced below.

This expressivity--trainability tension cannot be resolved reliably or conclusively by finite-scale numerical experiments alone, especially when direct classical simulation becomes costly. Circuit designers therefore need structural mathematical criteria that explain when a QNN can represent rich function classes and when its loss landscape avoids severe concentration. On the expressivity side, such criteria have developed around quantum signal processing and its descendants~\cite{lin2022qasc,gilyen2019qsvt,yu2022singlequbit,wang2023qpp,schuldsupervised,schuld2018feature,gil-fuster2024expressivity}. On the trainability side, Lie-algebraic theory relates loss concentration to the dynamical Lie algebra \( \mathfrak{g} \) through formulas such as
\[
    \operatorname{Var}[\ell]
    =
    \frac{P_{\mathfrak{g}}(\rho) P_{\mathfrak{g}}(O)}
    {\dim \mathfrak{g}},
\]
where \( \rho \) is the input state, \( O \) is the observable, and \( P_{\mathfrak{g}} \) denotes the relevant projection-dependent quantity~\cite{ragone2024lieb,fontana2024characterizingb,larocca2023theory,wiersema2024classification}.

The central difficulty is that these two bodies of theory do not combine automatically. Conditions that improve loss-variance scaling often impose algebraic restrictions on the circuit, and those same restrictions can make the model efficiently reconstructable by classical methods~\cite{cerezo2024does,goh2023liealgebraic}. Thus, a circuit that is trainable may fail to be advantageous, while a circuit that is highly expressive may fail to be optimizable. Designing QNNs that are expressive, trainable, and potentially classically advantageous therefore requires treating expressivity and trainability as a coupled design problem rather than as two independent checklists.

Both theories reach circuit designers as derived rules. The analytic inputs behind these rules (ensemble averages, Schur identities, metric identifications, and related representation-theoretic facts) are often used silently once they have become part of the design folklore. A proof kernel can certify each derivation once it is written formally. However, the connection between a formal statement and its physical interpretation remains a human responsibility, and that responsibility also needs structure, organization, and auditability. A design theory written in machine-checkable form, with every analytic input either proved or exposed as a named hypothesis, addresses both needs at once. Providing such a layer for QNN expressivity and loss concentration is the objective of this paper.

Building this layer is more difficult than formalizing a single theorem. The relevant physical claims pass through matrices and coefficient arithmetic, with representation bridges connecting Chebyshev and Fourier forms, eigenphases and projected blocks, and the Fisher matrix and its metric interpretation. A certificate that covers the full derivation chain must either prove these bridges or name them explicitly as hypotheses, thereby making the faithfulness of the final statements auditable. Interactive theorem proving has recently reached research-scale problems in quantum information and quantum algorithms~\cite{meiburg2025formalization,kol2026machineverified,ehatamm2026endtoend}. To our knowledge, no previous machine-checked development has jointly treated QNN expressivity and Lie-algebraic loss concentration as a connected theory surface.

In this paper, we provide such a development in \lean{}4 over \mathlib{}~\cite{mathlib} and \cslib{}~\cite{cslib}. The formalization operates on concrete objects: parametrized circuits, their losses, and their dynamical Lie algebras, all realized over explicit matrices in $M_{2^n}(\mathbb{C})$, together with the statements the two theories make about them. Its two stacks share a formalized trigonometric-polynomial substrate on which the parameter-shift rule is exact. A continuous-integration gate records the axioms used by every headline endpoint, while a separate hand-completed audit records how each registered formal statement renders its source-level mathematical claim. In this development, proof drafting follows the statement-first loop illustrated in Fig.~\ref{fig:formalization}: the authors curate a registry of target claims; an AI coding agent then decomposes each claim, synthesizes a \lean{} proof, and iterates against the kernel until compilation succeeds; the authors subsequently review statement faithfulness and verification hygiene. Of the \registryQMLTheorems{} QML theorem-nodes registered in the development atlas, the vast majority were produced through this loop, so that proof correctness is adjudicated by the kernel rather than by the drafter. Three theorems anchor the contribution: the exact single-qubit characterization (Theorem~\ref{thm:yzzyz}), the derived direct-sum loss-variance law (Theorem~\ref{thm:variance}), and the capstone that pairs that law with exact classical reconstruction (Theorem~\ref{thm:dichotomy}). Our contributions are as follows. The expressivity stack provides exact single-qubit QNN characterizations, a parameter-shift cost class, a resource-counted phase-processing theorem, and the overparameterization capacity ceiling; see Sec.~\ref{sec:expressivity}. The trainability stack derives the direct-sum loss-variance law, states its \( \mathfrak{so}(4) \) boundary, and treats both exponentially concentrating and local product families. Its capstone pairs the conditional variance law with exact reconstruction; see Sec.~\ref{sec:framework}. The methodological layer comprises named-hypothesis interfaces, a parametrized Casimir-uniqueness engine, kernel-computable finite designs, and an audited development workflow; see Sec.~\ref{sec:design}. Finally, Sec.~\ref{sec:background} fixes the formal setting and carries two small formalized proofs through in full. Beyond these specific results, we expect this development to serve as a machine-readable foundation for formal quantum machine learning. By organizing expressivity, trainability, and reconstruction within a compositional formal framework, it also provides reusable knowledge for emerging AI-assisted formalization and the design of quantum neural networks and quantum machine learning algorithms.

\section{Background}\label{sec:background}
\subsection{Proof formalization}\label{sec:background_itp}

An interactive theorem prover is an environment in which definitions, statements, and proofs are written in a formal language and checked by a small, independently auditable program, the kernel; Fig.~\ref{fig:formalization} sketches the loop this paper's development ran through.
Once a development compiles, believing its theorems no longer requires reading its proofs; what remains is to trust the kernel, the stated axioms, and the fidelity of the formal statements to their intended meaning.
This work uses \lean{}~4 together with \mathlib{}, a community library that spans the algebra, analysis, and topology on which our development rests~\cite{moulton2025mathlib}.
\lean{} records the axioms consumed by every theorem. We call a quoted endpoint kernel-clean when this list stays within $\{\texttt{propext}, \texttt{Classical.choice}, \texttt{Quot.sound}\}$, with no unfinished obligations or compiler-trusted native evaluation.

Two recent formalizations indicate what this discipline can achieve at the scale of physics.
Meiburg, Lessa, and Soldati verified the generalized quantum Stein's lemma in \lean{}~\cite{meiburg2025formalization}. Their development repaired a gap in the original proof and made further issues, including extended-real arithmetic, explicit.
Kol \emph{et al.} machine-verified a long-open conjecture of Farhi, Goldstone, and Gutmann on QAOA performance~\cite{kol2026machineverified,farhiquantum}. The result illustrates that formal development can support new quantum-information mathematics, and its headline theorem uses only the classical axiom set.

From this genre we inherit two working norms.
Axiom hygiene should be stated and checkable, not asserted; we enforce it with a continuous-integration gate rather than a one-time audit.
Because the kernel certifies formal proofs rather than their intended interpretation, statement faithfulness remains a human audit task, treated explicitly in Sec.~\ref{sec:design}.

\begin{figure}[t]
\centering
\resizebox{\textwidth}{!}{%
\begin{tikzpicture}[
  box/.style={draw, rounded corners=2pt, align=center, inner sep=5pt,
              font=\small, text width=31mm, minimum height=13mm},
  hypbox/.style={draw, dashed, rounded corners=2pt, align=center, inner sep=5pt,
                 font=\small, text width=34mm, minimum height=11mm},
  arr/.style={-{Latex[length=2mm]}, thick},
  farr/.style={-{Latex[length=2mm]}, thick, dashed},
  lbl/.style={font=\scriptsize, align=center},
]
  % ---- inner loop (top row): registry -> draft -> kernel -> certified
  \node[box, fill=blue!8] (reg) {\textbf{Registry record}\\[1pt]\scriptsize statement first: human-curated claim $+$ source anchor};
  \node[box, fill=gray!12, right=11mm of reg] (draft) {\textbf{Draft (agent)}\\[1pt]\scriptsize statement generation: \lean{} statement $+$ proof};
  \node[box, fill=gray!12, right=11mm of draft] (kernel) {\textbf{Kernel}\\[1pt]\scriptsize checks every inference step};
  \node[box, fill=green!8, right=11mm of kernel] (cert) {\textbf{Certified theorem}\\[1pt]\scriptsize axioms tracked per declaration};
  \draw[arr] (reg) -- (draft);
  \draw[arr] (draft) -- (kernel);
  \draw[arr] (kernel) -- (cert);
  \draw[farr] (kernel.north) to[out=115, in=65]
        node[midway, above, lbl] {reject: agent repair} (draft.north);
  % ---- named-hypothesis exit (tucked under the draft-kernel edge)
  \node[hypbox, below=5mm of $(draft.south)!0.5!(kernel.south)$] (hyp)
        {\textbf{Named hypothesis}\\[1pt]\scriptsize an input resisting proof becomes a field, visible in the type};
  \draw[arr] (draft.south) to[out=-60, in=170]
        node[midway, left, lbl, xshift=-1mm] {input resists\\proof} (hyp.west);
  \draw[arr] (hyp.east) to[out=10, in=-120]
        node[midway, right, lbl, xshift=1mm] {statement quantified\\over the bundle} (kernel.south);
  % ---- downstream: certified theorems enter the paper through the gates
  \node[box, fill=green!8, below=24mm of cert] (paper) {\textbf{Paper}\\[1pt]\scriptsize quoted declarations $+$ audited faithfulness};
  \draw[arr] (cert) -- (paper)
        node[midway, right, lbl] {hygiene gates\\$+$ human review};
\end{tikzpicture}%
}
\caption{The development loop for agentic formalization. Each target statement is first recorded with its intended mathematical meaning and source reference. An agent then translates the statement into Lean, develops the proof, and revises it in response to kernel errors. Analytic inputs that are not yet proved are made explicit as named hypotheses, so their role remains visible and concrete instances can discharge them later. Theorems included in the manuscript must also pass automated checks for proof completeness, axiom hygiene, and declaration consistency, followed by independent human review for faithfulness to the original claim. In this workflow, agents handle formalization and proof repair, human experts define and review the specifications, and the Lean kernel provides the final mechanical check.}
\label{fig:formalization}
\end{figure}

The text this loop produces is concrete, and one small instance fixes its shape.
We begin with the parameter-shift rule: under the frequency-one convention used here, the derivative of a variational loss is recovered from two shifted evaluations~\cite{schuld2019evaluating}.
Section~\ref{sec:costclass} proves that every loss coordinate of the formalized multi-gate ansatz belongs to this class, so the rule applies exactly there.

\begin{lemma}[Parameter-shift rule; \texttt{trig\_parameter\_shift}]\label{lem:psr}
Let $f(\theta) = a + b\cos\theta + c\sin\theta$ with $a, b, c \in \mathbb{R}$. Then for every $\theta$,
\begin{equation}\label{eq:psr}
f'(\theta) \;=\; \frac{f(\theta + \pi/2) - f(\theta - \pi/2)}{2}.
\end{equation}
\end{lemma}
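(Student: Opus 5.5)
The plan is a direct two-sided computation: evaluate both sides of \eqref{eq:psr} in closed form and check that they agree. Nothing earlier in the paper is needed beyond elementary calculus and the angle-addition formulas.

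\emph{Left-hand side.} Since $\cos$ and $\sin$ are differentiable with $\cos' = -\sin$ and $\sin' = \cos$, and constants have zero derivative, linearity of the derivative gives
\[
f'(\theta) = -b\sin\theta + c\cos\theta .
\]
In \lean{} I would assemble this with \texttt{HasDerivAt} lemmas for $\cos$ and $\sin$, combined by the addition and constant-multiplication rules, and then identify the \texttt{deriv} value.

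\emph{Right-hand side.} By the shift identities,
\[
\cos(\theta\pm\pi/2) = \mp\sin\theta, \qquad \sin(\theta\pm\pi/2) = \pm\cos\theta .
\]
These follow from angle addition together with $\cos(\pi/2)=0$ and $\sin(\pi/2)=1$; \mathlib{} also provides them as \texttt{Real.cos\_add\_pi\_div\_two}, \texttt{Real.sin\_sub\_pi\_div\_two} and their companions. Substituting them in,
\[
f(\theta+\pi/2) = a - b\sin\theta + c\cos\theta, \qquad f(\theta-\pi/2) = a + b\sin\theta - c\cos\theta .
\]
The constant $a$ cancels in the difference, and the difference is exactly $2(-b\sin\theta + c\cos\theta)$. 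Halving it reproduces $f'(\theta)$, which closes the argument.

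\emph{Main obstacle.} Mathematically there is none. The only fiddly point is formal bookkeeping: the derivative must be stated for the function $f$ as a lambda rather than for an expression, and the shift lemmas must be rewritten in the right orientation (with $\theta - \pi/2$ written as a subtraction, not as an addition of $-\pi/2$). After the rewrites, a final \texttt{ring} step closes the goal.
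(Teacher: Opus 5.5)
Your proposal is correct and follows essentially the same route as the paper: the derivative is assembled from \texttt{HasDerivAt} facts for $\cos$ and $\sin$ (scaled, summed, and shifted by the constant $a$), and the shifted evaluations are expanded via angle addition with $\cos(\pi/2)=0$ and $\sin(\pi/2)=1$ before \texttt{ring} closes the goal. Using \mathlib{}'s ready-made shift lemmas instead of the general angle-addition formulas would only be a cosmetic variation.
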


Its \lean{} form runs stage for stage with the calculus proof, and the left-hand side is the genuine analytic derivative, not a finite-difference stand-in:

\begin{codeListingBox}{The parameter-shift rule (\texttt{Trigonometric.lean}, verbatim)}
\begin{lstlisting}
theorem trig_parameter_shift (a b c θ : ℝ) :
    deriv (fun t => a + b * Real.cos t + c * Real.sin t) θ
      = ((a + b * Real.cos (θ + Real.pi / 2) + c * Real.sin (θ + Real.pi / 2))
          - (a + b * Real.cos (θ - Real.pi / 2) + c * Real.sin (θ - Real.pi / 2))) / 2 := by
  have hd : deriv (fun t => a + b * Real.cos t + c * Real.sin t) θ
      = -(b * Real.sin θ) + c * Real.cos θ := by
    have h1 : HasDerivAt (fun t : ℝ => b * Real.cos t) (-(b * Real.sin θ)) θ := by
      simpa using (Real.hasDerivAt_cos θ).const_mul b
    have h2 : HasDerivAt (fun t : ℝ => c * Real.sin t) (c * Real.cos θ) θ :=
      (Real.hasDerivAt_sin θ).const_mul c
    have hH : HasDerivAt (fun t => a + b * Real.cos t + c * Real.sin t)
        (-(b * Real.sin θ) + c * Real.cos θ) θ := by
      have hsum : HasDerivAt ((fun t : ℝ => b * Real.cos t) + fun t => c * Real.sin t)
          (-(b * Real.sin θ) + c * Real.cos θ) θ := by
        exact h1.add h2
      simpa [Pi.add_apply, add_assoc] using hsum.const_add a
    exact hH.deriv
  rw [hd, Real.cos_add, Real.cos_sub, Real.sin_add, Real.sin_sub,
    Real.cos_pi_div_two, Real.sin_pi_div_two]
  ring
\end{lstlisting}
\end{codeListingBox}

The proof has the shape a calculus reader expects.
The block headed \texttt{hd} differentiates term by term: \mathlib{}'s derivative facts for cosine and sine enter through \texttt{HasDerivAt}, are scaled by the constants (\texttt{const\_mul}), summed (\texttt{h1.add h2}), and shifted by the constant offset (\texttt{const\_add}).
The closing \texttt{rw} line expands the shifted evaluations with the angle-addition formulas and the exact values $\cos(\pi/2) = 0$, $\sin(\pi/2) = 1$, and \texttt{ring} confirms the two sides agree as polynomials in $\cos\theta$ and $\sin\theta$.

Most of the library's daily work is not calculus but algebra, so a second small lemma fixes that shape too.
The trainability development of Sec.~\ref{sec:framework} rests on a Hermitian, Hilbert--Schmidt-orthonormal basis of the dynamical Lie algebra, and the elementary fact below drives the invariance of its Casimir element (\decl{casimir_mem_adCommutantGG}), the algebraic crux of the variance law.

\begin{lemma}[Antisymmetry of the structure coefficients; \texttt{hsInner\_bracket\_antisymm}]\label{lem:antisymm}
Let $\{B_j\}$ be a Hermitian, Hilbert--Schmidt-orthonormal basis of $\mathfrak{g}$ and $f_{jkl} = \langle B_l, [B_j, B_k]\rangle$ its structure coefficients. Then $f_{jkl} = -f_{jlk}$ for all $j, k, l$.
\end{lemma}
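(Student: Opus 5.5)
The plan is to reduce the claim to cyclicity of the trace together with Hermiticity of the basis elements. I fix the Hilbert--Schmidt inner product as $\langle X, Y\rangle = \Tr(X^\dagger Y)$, conjugate-linear in the first slot, matching \mathlib{}'s convention. Since each $B_l$ is Hermitian, this gives
\[
f_{jkl} = \Tr\bigl(B_l [B_j, B_k]\bigr) = \Tr(B_l B_j B_k) - \Tr(B_l B_k B_j).
\]
Analogously,
\[
f_{jlk} = \Tr(B_k B_j B_l) - \Tr(B_k B_l B_j).
\]

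Applying cyclicity of the trace, $\Tr(B_l B_j B_k) = \Tr(B_k B_l B_j)$ and $\Tr(B_l B_k B_j) = \Tr(B_k B_j B_l)$. Substituting, the expression for $f_{jkl}$ becomes $\Tr(B_k B_l B_j) - \Tr(B_k B_j B_l)$, which is exactly $-f_{jlk}$. Equivalently, the statement is the ad-invariance identity $\langle B_l, [B_j,B_k]\rangle = -\langle [B_j, B_l], B_k\rangle$ combined with Hermiticity, which moves $B_l$ back into the conjugated slot without a sign change. Orthonormality is not needed; only Hermiticity of the three elements involved is used.

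In \lean{}, the steps are as follows.
\begin{itemize}
\item Unfold \texttt{hsInner} and the commutator.
\item Rewrite $B_l^\dagger = B_l$ and $B_k^\dagger = B_k$ using the \texttt{IsHermitian} hypotheses.
\item Distribute with \texttt{Matrix.mul\_sub}, \texttt{Matrix.sub\_mul} and \texttt{Matrix.trace\_sub}.
\item Apply \texttt{Matrix.trace\_mul\_cycle} (or \texttt{trace\_mul\_comm} after reassociating) to match the terms, and close with \texttt{ring} or \texttt{abel}.
\end{itemize}

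The main obstacle is bookkeeping rather than mathematics. If the inner product is valued in $\mathbb{C}$ but the structure coefficients are stated as real numbers, the argument must go through the real part of the trace (or a proof that the trace is real). In that case I would prove the complex identity first and take \texttt{Complex.re} of both sides, using linearity of the real part. A secondary nuisance is getting reassociation of triple products to line up with the exact form of the cyclic-trace lemma. I would handle this by stating the two cyclic rewrites as explicit \texttt{have}s before the final \texttt{ring}.
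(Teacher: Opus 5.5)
Your proposal is correct and follows essentially the paper's proof. You unfold the Hilbert--Schmidt inner product and the bracket, use Hermiticity to drop the conjugate transposes, apply the two cyclic-trace rewrites $\Tr(B_lB_jB_k)=\Tr(B_kB_lB_j)$ and $\Tr(B_lB_kB_j)=\Tr(B_kB_jB_l)$ (exactly the paper's \texttt{eA} and \texttt{eB}), and close with \texttt{ring}; since \texttt{hsInner} is $\mathbb{C}$-valued in the development, the real-part bookkeeping you anticipate is not needed.
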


On paper this is two applications of trace cyclicity; in \lean{} it is eight lines:

\begin{codeListingBox}{Ad-invariance of the Hilbert--Schmidt form (\texttt{CasimirInvariant.lean}, verbatim)}
\begin{lstlisting}
theorem hsInner_bracket_antisymm (b : DLAHermBasis gens) (i k l : Fin b.dim) :
    hsInner (b.B l) ⁅b.B i, b.B k⁆ = - hsInner (b.B k) ⁅b.B i, b.B l⁆ := by
  simp only [hsInner, Ring.lie_def, b.herm, Matrix.mul_sub, Matrix.trace_sub, ← Matrix.mul_assoc]
  have eA : (b.B l * b.B i * b.B k).trace = (b.B k * b.B l * b.B i).trace :=
    Matrix.trace_mul_cycle (b.B l) (b.B i) (b.B k)
  have eB : (b.B l * b.B k * b.B i).trace = (b.B k * b.B i * b.B l).trace := by
    rw [Matrix.mul_assoc, Matrix.trace_mul_comm (b.B l) (b.B k * b.B i)]
  rw [eA, eB]; ring
\end{lstlisting}
\end{codeListingBox}

Here \texttt{simp only} unfolds the inner product, the bracket, and Hermiticity of the basis in one step; the two \texttt{have} lines are the two applications of trace cyclicity; \texttt{ring} closes the remaining identity.
The one convention a reader might question, the bare minus sign with no conjugation, is pinned by Hermiticity (\texttt{b.herm}) exactly where it is used.
Nothing in either lemma is deep, and that is the point: an analytic argument and an algebraic one both survive translation with their shapes intact, and the kernel replays both chains on every commit.
The remainder of the paper reports what the same discipline does to a theory whose informal proofs are not one page long.

\subsection{Technical preliminaries}\label{sec:background_qml}

We fix the minimal quantum-information vocabulary the two theories share, together with the objects the formalization manipulates.
Everything formalized in this paper is a concrete matrix: states, observables, gate generators, and the algebras they generate all live in $M_N(\mathbb{C})$ with $N = 2^n$.
Circuits enter the development in two forms, as bare operator products and as typed circuits carrying resource counts, with a proved bridge identifying the two (Sec.~\ref{sec:sqsp}); the analytic inputs the theories consume enter as named fields of hypothesis bundles (Sec.~\ref{sec:interface}).
An $n$-qubit system has Hilbert space $\mathbb{C}^{N}$ with $N = 2^{n}$. A state is a Hermitian, positive-semidefinite, unit-trace matrix $\rho$; an observable is Hermitian; and a measurement returns $\Tr[\rho O]$.
A circuit acts by conjugation $\rho \mapsto U \rho U^{\dagger}$ with $U$ unitary, so what a learning model reports is the expectation of $O$ in the evolved state.
Two conventions recur below, and the formalization treats both as hypotheses. Observables are traceless because their trace components contribute constants with zero gradient. Parameter ensembles reproduce the Haar average on the gate-generated group to second order.

A quantum neural network, in the sense used throughout, is a parametrized circuit trained against an expectation-value loss,
\begin{equation}\label{eq:loss}
U(\bm\theta) = \prod_k e^{-i\theta_k H_k/2},
\qquad
\ell_{\bm\theta}(\rho, O) = \Tr\!\big[U(\bm\theta)\,\rho\,U^\dagger(\bm\theta)\, O\big].
\end{equation}
We use the half-angle rotation convention so that an involutory generator produces frequency-one loss coordinates; conventions without the factor $1/2$ are obtained by rescaling the corresponding parameter.
Trainability is a statement about how this loss behaves across a family of growing circuits, and the failure mode has a standard name.

\begin{definition}[Exponential concentration and barren plateau; \texttt{HasBarrenPlateau}]\label{def:bp}
A sequence $X : \mathbb{N} \to \mathbb{R}$ is exponentially concentrated at $\mu \in \mathbb{R}$ when
\begin{equation}\label{eq:expconc}
\exists\, b > 1,\; \exists\, C \ge 0,\;\; \forall n \in \mathbb{N}:\quad
\lvert X_n - \mu \rvert \;\le\; C\, b^{-n} .
\end{equation}
Let $\{U_n(\bm\theta)\}_{n\in\mathbb{N}}$ be a family of $n$-qubit parametrized circuits with states $\rho_n$, observables $O_n$, and parameters drawn from distributions $\mu_n$, and write
\begin{equation}\label{eq:bp}
\ell_n(\bm\theta) = \Tr\!\big[U_n(\bm\theta)\,\rho_n\,U_n^\dagger(\bm\theta)\, O_n\big],
\qquad
v_n \;=\; \mathop{\mathbb{E}}_{\bm\theta\sim\mu_n}\!\big[\ell_n(\bm\theta)^2\big]
      - \Big(\mathop{\mathbb{E}}_{\bm\theta\sim\mu_n}\!\big[\ell_n(\bm\theta)\big]\Big)^{\!2}.
\end{equation}
The family exhibits a barren plateau when the variance sequence $(v_n)$ is exponentially concentrated at $0$.
\end{definition}

The two layers are four lines of \lean{}, and the paper-facing plateau theorems all conclude in this predicate:

\begin{codeListingBox}{Exponential concentration and the barren plateau (\texttt{Trainability.lean}, verbatim)}
\begin{lstlisting}
def ExpConcentrated (X : ℕ → ℝ) (μ : ℝ) : Prop :=
  ∃ b : ℝ, 1 < b ∧ ∃ C : ℝ, 0 ≤ C ∧ ∀ n, |X n - μ| ≤ C / b ^ n

def HasBarrenPlateau (variance : ℕ → ℝ) : Prop := ExpConcentrated variance 0
\end{lstlisting}
\end{codeListingBox}

The definition quantifies concentration. By Chebyshev's inequality, it implies
$\Pr_{\bm\theta\sim\mu_n}\big[\,\lvert \ell_n - \mathbb{E}[\ell_n]\rvert \ge \delta\,\big] \le C/(\delta^2 b^{\,n})$ for every $\delta > 0$. Thus, the loss concentrates about its mean with exponentially high probability.
This conclusion concerns loss values, not the variance of an individual gradient component. McClean \emph{et al.}\ formulated barren plateaus through gradient variance~\cite{mcclean2018barren}, whereas the Lie-algebraic theory used here controls loss variance~\cite{ragone2024lieb}. We retain the latter convention in the formal predicate and distinguish it from gradient concentration in the accompanying interpretation.
Cost-function locality~\cite{cerezo2021cost} and the kernel-side concentration analogue~\cite{thanasilp2024exponentiala} are part of the same picture.

Expressivity is the complementary question, and it comes in two exact forms, one functional and one dynamical.

\begin{definition}[Expressivity]\label{def:expressivity}
Let $\mathcal{C} = \{U(\bm\theta)\}_{\bm\theta\in\Theta}$ be a parametrized circuit family.
\begin{enumerate}[label=(\roman*)]
\item \emph{Functional expressivity} is the realizable class of $\mathcal{C}$: the set of functions (or operator transformations) $f$ for which some parameter setting makes the circuit's output equal $f$ exactly. An expressivity characterization is an if-and-only-if description of that class.
\item \emph{Dynamical expressivity} is described by the connected Lie subgroup generated by the permitted one-parameter gates. Its Lie algebra is the DLA $\mathfrak{g}$, and $\dim\mathfrak{g}$ measures the number of reachable infinitesimal directions.
\end{enumerate}
\end{definition}

\begin{figure}[t]
\centering
% ---- (a) barren plateau: landscape concentration as n grows
\begin{tikzpicture}[baseline=(a.south)]
  \node (a) at (2.1,-0.55) [font=\small] {(a)};
  \draw[-{Latex[length=1.6mm]}] (0,0) -- (4.35,0) node[right, font=\scriptsize] {$\bm\theta$};
  \draw[-{Latex[length=1.6mm]}] (0,0) -- (0,2.95) node[above, font=\scriptsize] {$\ell_n(\bm\theta)$};
  % mean line
  \draw[dashed, gray] (0,1.25) -- (3.9,1.25);
  \node[font=\tiny, gray, anchor=west] at (3.92,1.25) {$\mathbb{E}[\ell_n]$};
  % loss profiles, in phase, amplitude shrinking exponentially
  \draw[semithick, blue!25] plot[domain=0.05:3.8, samples=90]
        (\x, {1.25 + 0.95*sin(2.1*\x r) + 0.22*sin(5.3*\x r)});
  \draw[semithick, blue!55] plot[domain=0.05:3.8, samples=90]
        (\x, {1.25 + 0.34*sin(2.1*\x r) + 0.08*sin(5.3*\x r)});
  \draw[thick, blue!85!black] plot[domain=0.05:3.8, samples=90]
        (\x, {1.25 + 0.07*sin(2.1*\x r) + 0.015*sin(5.3*\x r)});
  % n-direction, outside the curve domain
  \draw[-{Latex[length=1.6mm]}, gray] (4.12,2.25) -- (4.12,1.5)
        node[midway, right, font=\scriptsize] {$n$};
  \node[font=\scriptsize, anchor=west] at (0.12,2.72) {$v_n \le C\, b^{-n}$};
\end{tikzpicture}
\hfill
% ---- (b) functional expressivity: circuit -> output signal -> Fourier spectrum
\begin{tikzpicture}[baseline=(b.south)]
  \node (b) at (3.1,-0.55) [font=\small] {(b)};
  % circuit (top, enlarged, blocks well separated)
  \node[font=\scriptsize, anchor=east] at (0.15,2.15) {$\ket{0}$};
  \draw[gray, thick] (0.2,2.15) -- (5.75,2.15);
  \node[draw, rounded corners=1.5pt, fill=gray!15, font=\scriptsize, inner sep=3.5pt] at (0.8,2.15) {$R(x)$};
  \node[draw, rounded corners=1.5pt, fill=blue!8,  font=\scriptsize, inner sep=3.5pt] at (2.0,2.15) {$W(\bm\theta_1)$};
  \node[draw, rounded corners=1.5pt, fill=gray!15, font=\scriptsize, inner sep=3.5pt] at (3.2,2.15) {$R(x)$};
  \node[draw, rounded corners=1.5pt, fill=blue!8,  font=\scriptsize, inner sep=3.5pt] at (4.4,2.15) {$W(\bm\theta_2)$};
  \node[draw, rounded corners=1.5pt, fill=green!8, font=\scriptsize, inner sep=3.5pt] at (5.4,2.15) {$\langle O\rangle$};
  % output signal (bottom left)
  \draw[-{Latex[length=1.4mm]}] (0.2,0.35) -- (2.15,0.35);
  \draw[-{Latex[length=1.4mm]}] (0.2,0.35) -- (0.2,1.45);
  \node[font=\tiny, anchor=south west] at (0.15,1.42) {$f_{\bm\theta}(x)$};
  \draw[thick, blue!60!black] plot[domain=0.25:2.0, samples=60]
        (\x, {0.85 + 0.28*sin(4.5*\x r) + 0.14*sin(9*\x r + 1.1)});
  % Fourier arrow
  \draw[{Latex[length=1.4mm]}-{Latex[length=1.4mm]}, thick] (2.35,0.85) -- (3.15,0.85)
        node[midway, above, font=\tiny] {Fourier};
  % spectrum (bottom right): finite stems on -L..L
  \draw[-{Latex[length=1.4mm]}] (3.3,0.35) -- (5.3,0.35) node[right, font=\tiny] {$\omega$};
  \foreach \o/\h in {3.55/0.25, 3.95/0.6, 4.35/0.85, 4.75/0.6, 5.15/0.25}
    { \draw[thick, blue!60!black] (\o,0.35) -- (\o,{0.35+\h});
      \fill[blue!60!black] (\o,{0.35+\h}) circle (1.1pt); }
  \node[font=\tiny] at (3.55,0.18) {$-L$};
  \node[font=\tiny] at (5.15,0.18) {$L$};
  \node[font=\tiny, anchor=south] at (4.35,1.35) {$c_\omega(\bm\theta)$};
\end{tikzpicture}
\hfill
% ---- (c) dynamical expressivity: nested reachable sets, labels led outside
\begin{tikzpicture}[baseline=(c.south)]
  \node (c) at (1.9,-0.55) [font=\small] {(c)};
  % full unitary group, label above the boundary
  \draw[gray, thick] (1.9,1.3) ellipse (1.9 and 1.28);
  \node[font=\scriptsize, gray, anchor=south] at (1.9,2.62) {$\mathbb{U}(2^n),\ \dim \sim 4^n$};
  % generic reachable subgroup, enlarged, label inside
  \draw[fill=blue!10, draw=blue!55!black, rotate around={-12:(1.6,1.1)}]
        (1.6,1.1) ellipse (1.28 and 0.7);
  \node[font=\scriptsize] at (2.2,1.18) {$G_{\mathfrak{g}},\ \dim\mathfrak{g}$};
  % polynomial (local) family, deepest; label fits inside
  \draw[fill=green!10, draw=green!45!black] (1.0,0.95) ellipse (0.44 and 0.27);
  \node[font=\tiny, align=center] at (1.0,0.95) {$\mathrm{poly}(n)$};
  % axis: expressivity grows outward, below the ellipses
  \draw[-{Latex[length=1.6mm]}, gray] (1.0,0.56) .. controls (2.1,0.08) .. (3.55,0.7);
  \node[font=\tiny, gray] at (2.3,-0.08) {expressivity $\uparrow$};
\end{tikzpicture}
\caption{The three notions of Definitions~\ref{def:bp} and~\ref{def:expressivity}.
(a)~A barren plateau as landscape concentration: sections of the loss at growing
qubit count $n$ oscillate with exponentially shrinking amplitude about the mean,
$v_n \le C b^{-n}$, so the landscape flattens at scale.
(b)~Functional expressivity: the output $f_{\bm\theta}(x)$ has finite Fourier
support $\omega \in \{-L,\dots,L\}$, while the trainable parameters determine
its coefficients. Section~\ref{sec:sqsp} specifies the YZZYZ circuit formalized here.
(c)~Dynamical expressivity: $G_{\mathfrak{g}}\subseteq\mathbb{U}(2^n)$ denotes
the connected Lie subgroup generated by the available gates, with
$\dim G_{\mathfrak{g}}=\dim\mathfrak{g}$. This dimension ranges from polynomial
growth to full controllability. Equation~(\ref{eq:variance}) relates it to loss
concentration only under the stated hypothesis bundle.}
\label{fig:prelim}
\end{figure}

Figure~\ref{fig:prelim} places the two readings beside the loss-concentration predicate of Definition~\ref{def:bp}. The signal-processing stack of Sec.~\ref{sec:expressivity} characterizes the functional reading. For the circuits studied here, the realizable classes are trigonometric, with output
\begin{equation}\label{eq:fourier}
f_{\bm\theta}(x) \;=\; \sum_{\omega = -L}^{L} c_\omega(\bm\theta)\, e^{i\omega x},
\end{equation}
The trainable parameters determine the coefficients exactly~\cite{yu2022singlequbit}. Equivalently, the signal-processing word realizes polynomial pairs in $e^{ix}$ subject to degree, parity, and normalization constraints. In the Chebyshev picture, one such constraint is $P P^{*} + (1 - x^{2})\, Q Q^{*} = 1$ with $x=\cos\theta$~\cite{lin2022qasc,gilyen2019qsvt}.
Quantum phase processing lifts the same class to arbitrary unitaries by applying a trigonometric polynomial to eigenphases, and block encoding embeds a general operator into the corner of a unitary so that such polynomial transformations become circuit constructions~\cite{gilyen2019qsvt,wang2023qpp}.
On the trainability side the governing object is the dynamical Lie algebra of the gate generators~\cite{larocca2022diagnosing,larocca2025barren}.

\begin{definition}[Dynamical Lie algebra]\label{def:dla}
For gate generators $H_1,\dots,H_K$, the dynamical Lie algebra is the smallest Lie algebra containing the skew-Hermitian generators,
\begin{equation}\label{eq:dla}
\mathfrak{g} \;=\; \big\langle\, iH_1,\dots,iH_K \,\big\rangle_{\mathrm{Lie}} \;\subseteq\; \mathfrak{u}(2^n).
\end{equation}
\end{definition}

The DLA connects the two readings of Definition~\ref{def:expressivity}. Let $G_{\mathfrak{g}}$ denote the connected Lie subgroup generated by the available one-parameter gates. Its dimension can be compared with $\dim\mathfrak{su}(2^n)=4^n-1$~\cite{larocca2022diagnosing,larocca2023theory}. Full controllability corresponds to $\mathfrak{g}=\mathfrak{su}(2^n)$; polynomial-dimensional algebras have restricted dynamical expressivity.

A finite-dimensional DLA inside $\mathfrak{u}(2^n)$ is compact and hence reductive. Write its Hilbert--Schmidt-orthogonal decomposition into simple ideals and a possible centre as $\mathfrak{g}=\bigoplus_j\mathfrak{g}_j$. Each ideal has a purity $P_{\mathfrak{g}_j}(A)=\sum_{B_i\in\mathfrak{g}_j}|\langle B_i,A\rangle|^2$, measuring the projection of $A$ onto that ideal.
The direct-sum loss-variance law is conditional on the second-moment bundle formalized in Sec.~\ref{sec:interface}. Its inputs include ensemble invariance, the per-ideal Schur identities, projection orthogonality, and exclusion of cross-ideal blocks. When these conditions hold, the law of Ragone \emph{et al.} takes the form~\cite{ragone2024lieb,fontana2024characterizingb}
\begin{equation}\label{eq:variance}
\mathrm{Var}[\ell] \;=\; \sum_j \frac{P_{\mathfrak{g}_j}(\rho)\,P_{\mathfrak{g}_j}(O)}{\dim\mathfrak{g}_j},
\end{equation}
For a simple algebra this expression collapses to $P_{\mathfrak{g}}(\rho)\,P_{\mathfrak{g}}(O)/\dim\mathfrak{g}$. Exponential loss concentration follows only when the denominator grows exponentially and the corresponding purity factors satisfy the scale bounds required by the family theorem.
The same algebra bounds the quantum Fisher information rank and hence the overparametrization onset (Sec.~\ref{sec:qfim})~\cite{larocca2023theory}. It also supplies the coordinates used by $\mathfrak{g}$-sim, which evolves observables through $\dim\mathfrak{g}\times\dim\mathfrak{g}$ transfer matrices~\cite{goh2023liealgebraic,cerezo2024does}.
Parts of the simulability argument are structural and parts are average-case; the capstone of Sec.~\ref{sec:dichotomy} states precisely which portion the formalization captures.

\section{Formalization of QNN Expressivity}\label{sec:expressivity}
This section formalizes both readings of Definition~\ref{def:expressivity}. The functional part develops the trigonometric and circuit substrates, the exact single-qubit characterizations, culminating in the if-and-only-if of Theorem~\ref{thm:yzzyz}, parameter-shift cost classes, and phase processing. It then records the route from signal-processing algorithms to recurrent architectures. The final subsection treats the DLA capacity ceiling.

Throughout, we read the signal-processing objects as QNN expressivity results.
This interpretation is supported by the quantum recurrent embedding neural network~\cite{jing2025qrenn}, an architecture family shown to subsume quantum signal processing, its singular-value transformation, and DQC1. Characterizing these circuit classes therefore describes the expressivity of a corresponding QNN family.

\subsection{The trigonometric-polynomial substrate}\label{sec:substrate}

The functional characterizations rest on two formalized foundations: a polynomial class that names what circuits realize, and a circuit layer that makes the circuits themselves objects of proof.
This subsection fixes the first; the next introduces the second.
The polynomial class serves both pillars of the paper.
A \texttt{TrigPolynomial} is a finite $\mathbb{C}$-linear combination of characters $e^{i\langle\omega,x\rangle}$. The library proves closure under addition, scalar multiplication, pointwise multiplication, and conjugation. It also proves coefficient uniqueness: if the polynomial vanishes as a function, every coefficient is zero (\texttt{coeffAt\_eq\_of\_eval\_eq}).
Uniqueness is what lets circuit identities be read off pointwise matrix equalities, and it is used on both sides of the paper.
A companion layer relates the Fourier and Chebyshev pictures through proved coefficient maps and a pointwise normalization correspondence. The gate-level identification between the reflection-based word at $x=\cos\theta$ and the trigonometric word is also a theorem (\texttt{Bridge.lean}).
The substrate also carries the parameter-shift rule of Lemma~\ref{lem:psr}, the analytic fact behind gradient evaluation on this class.

\subsection{Single-qubit signal processing: the circuit, its formalization, and the exact characterization}\label{sec:sqsp}

A quantum signal-processing circuit is an alternation: fixed single-qubit rotations interleaved with a rotation that carries the signal $x$, so that the entries of the resulting word are polynomials in the signal~\cite{gilyen2019qsvt,lin2022qasc}.
Read as a learning model, the same alternation is the single-qubit native QNN: the signal rotations encode the data, the fixed rotations become trainable blocks, and the output is a partial Fourier series whose coefficients the trainable angles fix~\cite{yu2022singlequbit}.
Figure~\ref{fig:qsp} shows the YZZYZ convention this development formalizes, the word
\begin{equation}\label{eq:qspword}
U^{\mathrm{WZW}}_{\bm\theta,\bm\varphi,L}(x)
\;=\; R_Z(\varphi)\, W(\theta_0,\varphi_0) \prod_{j=1}^{L} R_Z(x)\, W(\theta_j,\varphi_j),
\qquad W(\theta,\varphi) = R_Y(\theta)\, R_Z(\varphi).
\end{equation}

\begin{figure}[t]
\centering
\begin{tikzpicture}[
  gate/.style={draw, rounded corners=1.5pt, inner sep=4pt, font=\scriptsize, minimum height=6mm},
]
  \node[font=\scriptsize, anchor=east] at (-0.1,0) {$\ket{0}$};
  \draw[gray, thick] (0,0) -- (3.15,0);
  \draw[gray, thick] (3.95,0) -- (11.9,0);
  \node[gate, fill=blue!8]  at (0.95,0) {$W(\theta_L,\varphi_L)$};
  \node[gate, fill=gray!15] at (2.5,0)  {$R_Z(x)$};
  \node[font=\small] at (3.55,0) {$\cdots$};
  \node[gate, fill=blue!8]  at (4.75,0) {$W(\theta_1,\varphi_1)$};
  \node[gate, fill=gray!15] at (6.3,0)  {$R_Z(x)$};
  \node[gate, fill=blue!8]  at (7.85,0) {$W(\theta_0,\varphi_0)$};
  \node[gate, fill=gray!15] at (9.4,0)  {$R_Z(\varphi)$};
  \node[gate, fill=green!8] at (10.75,0) {$\langle O\rangle$};
  \draw[decorate, decoration={brace, mirror, amplitude=4pt}, gray]
        (0.25,-0.55) -- (6.95,-0.55)
        node[midway, below=5pt, font=\scriptsize, gray] {$L$ signal-processing blocks};
\end{tikzpicture}
\caption{The YZZYZ single-qubit QSP word of Eq.~(\ref{eq:qspword}), read left to right
in time order (operators compose right to left). Trainable blocks
$W(\theta,\varphi) = R_Y(\theta) R_Z(\varphi)$ (blue) alternate with the
data-encoding rotation $R_Z(x)$ (gray); as a learning model this is the
data re-uploading single-qubit QNN.}
\label{fig:qsp}
\end{figure}

The circuit itself is a formalized object, and this is the second foundation announced above.
The word of Eq.~(\ref{eq:qspword}) enters \lean{} twice, deliberately: once as a bare operator product,

\begin{codeListingBox}{The YZZYZ word as an operator (\texttt{Fourier.lean}, verbatim)}
\begin{lstlisting}
def qspYZZYZ (φ θ₀ φ₀ : ℝ) (ps : List (ℝ × ℝ)) (x : ℝ) : Gate (Qubits 1) :=
  ps.foldl (fun U p => U * (rotZStd x * (rotY p.1 * rotZStd p.2)))
    (rotZStd φ * (rotY θ₀ * rotZStd φ₀))
\end{lstlisting}
\end{codeListingBox}

and once as a typed circuit built from named blocks with a resource profile, with a proved bridge (\decl{qspYZZYZIndexedCircuit_matrix}) stating that the typed circuit's matrix is exactly the operator product.
The duplication is deliberate.
The realizability characterizations and gradient rules use the operator form. Resource counting and block assembly use the typed circuit. The bridge theorem identifies their matrices, preventing the two representations from diverging within the formal development.
Circuit-level formalization is what lets later sections state theorems \emph{about circuits}, with gate counts as data in the statement, rather than theorems about matrices that a reader must trust to describe circuits.

What such a circuit expresses is then not merely bounded but characterized exactly, and in both standard gate conventions the characterization is an if-and-only-if.

\begin{theorem}[Single-qubit QNN expressivity, trigonometric form]\label{thm:yzzyz}
A polynomial pair $(A, B)$ satisfies the degree--parity--normalization predicate $\mathrm{IsYZPair}\,L$ if and only if there exist angles $(\varphi, \theta_0, \varphi_0)$ and a list of $L$ rotation pairs that realize, for every $x \in \mathbb{R}$, the YZZYZ circuit word as
\begin{equation*}
\begin{pmatrix} P & -Q \\ Q^{*} & P^{*} \end{pmatrix},
\qquad P = e^{-iLx/2}\, A(e^{ix}),\quad Q = e^{-iLx/2}\, B(e^{ix}).
\end{equation*}
\end{theorem}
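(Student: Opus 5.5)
The proof has two directions, soundness (the circuit word produces an $\mathrm{IsYZPair}\,L$ pair) and completeness (every such pair is realized). We take $\mathrm{IsYZPair}\,L$ to mean the following: $A,B$ are Laurent/ordinary polynomials in $z=e^{ix}$ of degree at most $L$ with parity $L \bmod 2$ after the $e^{-iLx/2}$ shift, and $|A(z)|^2+|B(z)|^2=1$ on the unit circle. We use the half-angle conventions $R_Z(x)=\mathrm{diag}(e^{-ix/2},e^{ix/2})$ and $R_Y(\theta)$ real orthogonal. Both directions run by induction on the list of rotation pairs, which matches the \texttt{foldl} structure of \texttt{qspYZZYZ}. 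All function-level identities are turned into coefficient identities by the uniqueness lemma \texttt{coeffAt\_eq\_of\_eval\_eq} of the trigonometric substrate.

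\textbf{Soundness.} We prove a strengthened invariant: after $k$ blocks the word equals $\begin{pmatrix} P_k & -Q_k\\ Q_k^* & P_k^*\end{pmatrix}$ with $P_k=e^{-ikx/2}A_k(e^{ix})$, $Q_k=e^{-ikx/2}B_k(e^{ix})$, and $(A_k,B_k)$ satisfying $\mathrm{IsYZPair}\,k$.
\begin{itemize}
\item The base case $k=0$ is a constant $SU(2)$ matrix $R_Z(\varphi)R_Y(\theta_0)R_Z(\varphi_0)$. Its entries are constants of modulus-squared sum one, and it has the required $\begin{pmatrix}a&-b\\ \bar b&\bar a\end{pmatrix}$ shape.
\item For the step, multiply on the right by $R_Z(x)R_Y(\theta)R_Z(\varphi')$. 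Writing $e^{-ix/2}=e^{-ix/2}\cdot 1$ and $e^{ix/2}=e^{-ix/2}\cdot z$, each new entry is $e^{-i(k+1)x/2}$ times a polynomial combination of $A_k,B_k$ and $z\cdot(\text{conjugate-reversed }A_k,B_k)$. This raises the degree by one and flips parity.
\item Normalization is preserved because the product of $SU(2)$ matrices is in $SU(2)$, i.e.\ $|P|^2+|Q|^2=\det=1$.
\end{itemize}
The only subtlety is that the $(2,1)$ and $(2,2)$ entries must be \emph{conjugates} of the top row. This follows from the $SU(2)$ shape being closed under multiplication, which I will prove once as a lemma on matrices of the form $\begin{pmatrix}a&-b\\ \bar b&\bar a\end{pmatrix}$.

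\textbf{Completeness.} This is the main obstacle; it is the standard QSP layer-stripping argument. Given $(A,B)$ with $\mathrm{IsYZPair}\,L$ and $L\ge1$, we choose angles $(\theta,\varphi')$ so that right-multiplying the target matrix by $(R_Z(x)R_Y(\theta)R_Z(\varphi'))^{-1}$ yields a pair of degree $L-1$.
\begin{itemize}
\item The normalization identity $A\bar A+B\bar B=1$, expanded coefficientwise, gives a relation between leading and trailing coefficients. Concretely, the top-degree coefficient of $A\tilde A+B\tilde B$ vanishes for $L\ge1$, where $\tilde{\cdot}$ denotes the conjugate reciprocal.
\item This relation forces $(a_L,b_L)$ and $(a_0,b_0)$ to be orthogonal-type vectors. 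That orthogonality is exactly what lets one $R_Y(\theta)R_Z(\varphi')$ rotation cancel both the top coefficient and the bottom one simultaneously.
\item We take $\theta,\varphi'$ from $\arg$ and $\arctan$-type formulas of the extreme coefficients. The degenerate case, where both extreme coefficients vanish, is split off: there any angles with the right parity work. We then check degree drop and parity via coefficient uniqueness.
\item Normalization of the reduced pair is again automatic by the $SU(2)$ lemma.
\end{itemize}

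The induction then bottoms out at $L=0$. There a constant unit pair $(a,b)$ must be written as $R_Z(\varphi)R_Y(\theta_0)R_Z(\varphi_0)$, which is the ZYZ Euler decomposition of $SU(2)$. I expect the hardest part, especially in \lean{}, to be the layer-stripping step: the case analysis on vanishing extreme coefficients, the choice of branch for the angle extraction, and keeping the half-integer phase factor $e^{-iLx/2}$ consistent. I would first try to isolate it as a single lemma about the two extreme coefficient vectors, so that the inductive step itself becomes a matrix identity discharged by \texttt{ring} after coefficient comparison.
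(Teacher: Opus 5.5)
Your proposal is correct and follows the route the paper describes. The realization direction is a constructive induction on the word: your layer-stripping, driven by the orthogonality $a_L\bar a_0+b_L\bar b_0=0$ read off from the normalization identity. The converse extracts the polynomial pair of the given circuit by induction along the word, obtains normalization from the $SU(2)$ shape, and identifies that pair with $(A,B)$ through coefficient uniqueness.

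One detail to watch: your angle formula must fall back on $(a_0,b_0)$ when $(a_L,b_L)=0$ but $(a_0,b_0)\neq 0$. Splitting off only the case where both extreme vectors vanish does not cover this.
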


The same holds in the YZY convention for real-coefficient pairs (\texttt{main\_yzy}), and in the reflection-based Chebyshev convention (\texttt{ReflectionBasedQuantumSignalProcessing.main}, with the $W_x$ variant): in each case an if-and-only-if theorem identifying the realizable polynomial class of an $L$-gate word~\cite{yu2022singlequbit,lin2022qasc,gilyen2019qsvt}.
The forward directions are constructive inductions on the word; the converse directions extract the polynomial pair of a given circuit and are where the normalization discipline earns its keep.
The \lean{} statement of Theorem~\ref{thm:yzzyz} is compact enough to quote:

\begin{codeListingBox}{Single-qubit QNN expressivity (\texttt{Fourier.lean}, verbatim)}
\begin{lstlisting}
theorem TrigonometricQuantumSignalProcessing.main (L : ℕ) (A B : ℂ[X]) :
    IsYZPair L A B ↔
      ∃ (φ θ₀ φ₀ : ℝ) (ps : List (ℝ × ℝ)), ps.length = L ∧
        ∀ x : ℝ, qspYZZYZ φ θ₀ φ₀ ps x = qspMatYZ L A B x
\end{lstlisting}
\end{codeListingBox}

Phase synthesis connects the abstract polynomial classes to circuit words. From a real polynomial bounded on the relevant interval, \texttt{PhaseSynthesis.lean} constructs the completion data and phase certificate used by the realizability theorems~\cite{gilyen2019qsvt}.

\subsection{The cost class and its exact gradients}\label{sec:costclass}

Between the circuit and its training sits one more expressivity question, this time about the loss itself.

\begin{proposition}[The QNN cost class; \texttt{cost\_trig}]\label{prop:costclass}
Let every generator $H_k$ be Hermitian and involutory, $H_k^\dagger=H_k$ and $H_k^2=I$, and let the observable $O$ be Hermitian. For the multi-gate ansatz of Eq.~(\ref{eq:loss}), each coordinate of the loss is a frequency-one trigonometric polynomial: for every parameter vector $\bm\theta$ and coordinate $k$ there are reals $a, b, c$, determined by the fixed data and the remaining angles, with
\begin{equation}\label{eq:costclass}
\theta_k \;\longmapsto\; \ell_{\bm\theta}(\rho, O) \;=\; a + b\cos\theta_k + c\sin\theta_k .
\end{equation}
\end{proposition}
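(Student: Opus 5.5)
The plan is to isolate the $k$-th gate and expand it with the involutory exponential identity. Because $H_k^2 = I$, the power series of the exponential splits into even and odd parts, giving
\[
e^{-i\theta_k H_k/2} \;=\; \cos(\theta_k/2)\, I \;-\; i\sin(\theta_k/2)\, H_k .
\]
Write the ordered product of Eq.~(\ref{eq:loss}) as $U(\bm\theta) = V_{>}\, e^{-i\theta_k H_k/2}\, V_{<}$, where $V_<$ and $V_>$ collect the gates before and after position $k$. These two matrices depend only on the remaining angles, so they are fixed data for this coordinate.

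Next, define $\rho' = V_<\,\rho\,V_<^\dagger$ and $O' = V_>^\dagger\, O\, V_>$. Both are Hermitian, since conjugation by a unitary preserves Hermiticity and $O$ is Hermitian. Trace cyclicity then gives
\[
\ell_{\bm\theta} \;=\; \Tr\big[e^{-i\theta_k H_k/2}\,\rho'\,e^{i\theta_k H_k/2}\,O'\big].
\]
Substitute the expansion and multiply out. This yields $\cos^2(\theta_k/2)\,\Tr[\rho' O']$, plus $\sin^2(\theta_k/2)\,\Tr[H_k\rho' H_k O']$, plus the cross term $i\sin(\theta_k/2)\cos(\theta_k/2)\,\Tr[(\rho' H_k - H_k\rho')\,O']$.

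Then apply the half-angle identities $\cos^2(\theta/2) = (1+\cos\theta)/2$, $\sin^2(\theta/2) = (1-\cos\theta)/2$ and $\sin(\theta/2)\cos(\theta/2) = \tfrac12\sin\theta$. These give the coefficients
\[
a = \tfrac12\big(\Tr[\rho'O'] + \Tr[H_k\rho'H_kO']\big),\quad
b = \tfrac12\big(\Tr[\rho'O'] - \Tr[H_k\rho'H_kO']\big),\quad
c = \tfrac{i}{2}\Tr\big[[\rho',H_k]\,O'\big].
\]
It remains to show that $a$, $b$, $c$ are real, because the statement asks for reals rather than complex constants. Here I would use the fact that $\Tr[XY]$ is real for Hermitian $X$ and $Y$. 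Both $\rho'$ and $H_k\rho'H_k$ are Hermitian, so $a$ and $b$ are real. The commutator $[\rho',H_k]$ is skew-Hermitian, so $i[\rho',H_k]$ is Hermitian, and therefore $c$ is real. Once the coefficients are known to be real, I would take real parts throughout, so that the equation is stated in $\mathbb{R}$ and matches the shape of Lemma~\ref{lem:psr}.

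I expect the main obstacle to be the bookkeeping of the ordered product, not the algebra. Splitting a list product at index $k$ needs a lemma of the form $\prod = \text{prefix}\cdot(\text{gate}_k)\cdot\text{suffix}$. The prefix and suffix must be shown independent of $\theta_k$, which needs updating $\bm\theta$ at coordinate $k$ and checking that the other factors are unchanged. Both factors must also be shown unitary, so that $V^\dagger$ is an honest inverse and Hermiticity transfers under conjugation.

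The exponential identity is the second point of friction. In \mathlib{} I would prove it by expanding $\exp$ of $(-i\theta/2)H$ as a series and splitting into even and odd powers using $H^{2m} = I$. An alternative is to characterize both sides as solutions of the same linear ODE with the same initial value. I would first try to find or prove a general lemma stating $\exp(tA) = \cosh t\cdot I + \sinh t\cdot A$ whenever $A^2 = I$, and then specialize it.
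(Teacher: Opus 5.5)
Your proposal is correct and follows essentially the paper's route: split the ordered product around gate $k$, absorb the prefix into $\rho$ and the suffix into $O$, and reduce to the single-gate expansion $\cos(\theta_k/2)I - i\sin(\theta_k/2)H_k$, whose half-angle coefficients $a,b,c$ you compute correctly and show to be real. Two small remarks. First, the reality of $a,b,c$ also uses that $\rho$ is Hermitian, which holds because $\rho$ is a state. Second, unitarity of $V_<,V_>$ is not needed: $(VXV^\dagger)^\dagger = VXV^\dagger$ for any $V$, and the cyclicity step only uses $U^\dagger = V_<^\dagger\, e^{i\theta_k H_k/2}\, V_>^\dagger$.
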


The statement is proved, not modeled, and it is proved \emph{about the formalized circuit}: \texttt{cost\_trig} splits the ordered gate product around coordinate $k$, reduces the resulting loss to the proved single-gate trigonometric form, and supplies the coefficients $a,b,c$ under exactly the hypotheses stated above.
The public theorem \texttt{MultiGateAnsatz.main} then applies Lemma~\ref{lem:psr} to each coordinate: the exact analytic gradient of the multi-gate loss is computed from two $\pi/2$-shifted circuit evaluations per parameter, with no finite-difference error~\cite{schuld2019evaluating}.

The same transfer extends to every frequency budget.
A deeper encoding, or a repeated one, produces a degree-$R$ cost $\ell(x) = a_0 + \sum_{p=1}^{R} (a_p \cos px + b_p \sin px)$, and Wierichs \emph{et al.}\ showed that its derivatives are still finite weighted sums of shifted evaluations~\cite{wierichs2022general}.

\begin{theorem}[Generalized parameter-shift rules, all $R$]\label{thm:genpsr}
Let $\ell$ be a degree-$R$ trigonometric cost with $R \ge 1$. Then, at the canonical shift points $x_\mu$ and weights of Ref.~\cite{wierichs2022general},
\begin{equation}\label{eq:genpsr}
\ell'(0) \;=\; \sum_{\mu=1}^{2R} \ell(x_\mu)\, w_\mu,
\qquad
\ell''(0) \;=\; -\,\ell(0)\,\frac{2R^2+1}{6} \;+\; \sum_{\mu=1}^{2R-1} \ell(\tilde{x}_\mu)\, \tilde{w}_\mu .
\end{equation}
\end{theorem}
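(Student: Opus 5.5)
The plan is to reduce Theorem~\ref{thm:genpsr} to finitely many trigonometric sum identities, using linearity in $\ell$. I take the canonical data of Ref.~\cite{wierichs2022general} to be
\[
x_\mu=\tfrac{(2\mu-1)\pi}{2R},\quad w_\mu=\tfrac{(-1)^{\mu-1}}{4R\sin^2(x_\mu/2)},\qquad \tilde x_\mu=\tfrac{\mu\pi}{R},\quad \tilde w_\mu=\tfrac{(-1)^{\mu-1}}{2\sin^2(\tilde x_\mu/2)}.
\]
Both sides of each identity in Eq.~(\ref{eq:genpsr}) are $\mathbb{R}$-linear in the coefficient vector $(a_0,a_p,b_p)$. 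It therefore suffices to verify them on the basis functions $1$, $\cos px$ and $\sin px$ for $1\le p\le R$. The left-hand sides are elementary: the first derivative at $0$ equals $0$, $0$ and $p$ respectively, and the second derivative equals $0$, $-p^2$ and $0$. This step mirrors the \texttt{hd} block of Lemma~\ref{lem:psr}: term-by-term \texttt{HasDerivAt} on a finite sum, followed by evaluation at $0$.

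\textbf{Parity reduction.} Next I pair the nodes under $x\mapsto 2\pi-x$. For the first-derivative rule, $x_{2R+1-\mu}=2\pi-x_\mu$ holds, the $\sin^2$ denominators agree, and the signs are opposite. Hence all $\cos(px_\mu)$ contributions, including the constant, cancel in pairs, while the $\sin$ contributions double. For the second-derivative rule, $\tilde x_{2R-\mu}=2\pi-\tilde x_\mu$ holds with equal signs, so the $\sin$ terms cancel and only the cosine identities remain. For the cosine identities I would subtract the constant case and work with $1-\cos(p\tilde x)$, since that is what pairs naturally with the $\sin^2(\tilde x/2)$ denominator. After this reduction, two families remain:
\[
D_p:=\sum_\mu w_\mu\sin(px_\mu)=p,\qquad E_p:=\sum_\mu \tilde w_\mu\bigl(1-\cos(p\tilde x_\mu)\bigr)=p^2+c_R ,
\]
where the constant $c_R$ must match the $-(2R^2+1)/6$ coefficient once the constant basis function is accounted for.

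\textbf{Second-difference recurrence.} I prove both families by a recurrence in $p$ rather than by closed-form kernel evaluation. The identity
\[
\sin((p+1)x)-2\sin(px)+\sin((p-1)x)=-4\sin^2(x/2)\sin(px)
\]
cancels the denominator exactly. This gives $D_{p+1}-2D_p+D_{p-1}=-\tfrac1R\sum_\mu(-1)^{\mu-1}\sin(px_\mu)$. The remaining alternating sum is the imaginary part of a geometric sum of $-e^{ip\pi/R}$-powers, and it vanishes for $1\le p<2R$. So $D_p$ is affine in $p$ on $0\le p\le R$, and with $D_0=0$ and $D_1=1$ we get $D_p=p$. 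The analogous cosine identity makes the second difference of $E_p$ a constant: there $\sum_\mu(-1)^{\mu-1}\cos(p\tilde x_\mu)$ is again a geometric sum, now with one node missing. Hence $E_p$ is quadratic with leading coefficient $1$.

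\textbf{Main obstacle.} I expect the hard part to be the base cases, $D_1=1$ and $E_1$, together with the constant $c_R$, since these are exactly where $(2R^2+1)/6$ must emerge. For $D_1$ I would first try writing $\sin x/\sin^2(x/2)=2\cot(x/2)$ and using the classical sum of alternating cotangents at odd multiples of $\pi/4R$. For the constant in $E$ I would use $\sum_{\mu=1}^{2R-1}\csc^2(\mu\pi/2R)=(4R^2-1)/3$, restricted to alternating signs. If these closed forms prove awkward in \lean{}, the fallback is coefficient uniqueness from the \texttt{TrigPolynomial} substrate (\texttt{coeffAt\_eq\_of\_eval\_eq}). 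One would show that the linear functional $\ell\mapsto\ell''(0)-\mathrm{RHS}$ vanishes on the $2R+1$ basis elements by evaluating a single generating kernel, which isolates all the trigonometric bookkeeping in one lemma.
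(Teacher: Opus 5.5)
Your reduction (linearity, the two reflections, and the second-difference recurrence in $p$) is sound, and it is a genuinely different route from the paper's. Two claims need correcting, and correcting them removes most of what you call the main obstacle.

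First, the alternating sum $\sum_\mu(-1)^{\mu-1}\sin(px_\mu)$ does not vanish for all $1\le p<2R$. At $p=R$ the ratio $-e^{ip\pi/R}$ equals $1$; directly, $\sin(Rx_\mu)=(-1)^{\mu-1}$, so the sum is $2R$. Affineness on $[0,R]$ only uses $1\le p\le R-1$, so nothing breaks. The $p=R$ value is in fact what fixes the slope. Since $2Rx_\mu$ is an odd multiple of $\pi$, you have $D_{2R-p}=D_p$. The recurrence at $p=R$ then reads $2D_{R-1}-2D_R=-2$, so $D_R-D_{R-1}=1$ and $D_p=p$, with no cotangent sum needed. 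Similarly $E_1=1$ is free: the second difference of $E_p$ is already $2$ at $p=0$, and $E_{-1}=E_1$, $E_0=0$.

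Second, there is no constant $c_R$ in $E_p$: $E_0=0$ forces $E_p=p^2$ exactly. The value $(2R^2+1)/6$ enters only through the separate constant-function identity $\sum_{\mu=1}^{2R-1}\tilde w_\mu=(2R^2+1)/6$. That is the one non-geometric input your route needs. ``Restricting to alternating signs'' has to be done as alternating sum $=$ full sum minus twice the even-index sum. This means applying $\sum_{k=1}^{m-1}\csc^2(k\pi/m)=(m^2-1)/3$ at both $m=2R$ and $m=R$, which gives $(2R^2+1)/3$ for the alternating cosecant sum.

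The paper instead differentiates term by term (\texttt{genTrigCost\_deriv\_zero}). It then identifies the shifted sum through a Dirichlet-kernel evaluation identity (\texttt{dirichlet\_first\_identity}), which is close to your fallback. In the kernel picture the weights are derivatives of the trigonometric interpolation kernel at the nodes. For the second rule, $K(x)=\frac{1}{2R}\bigl(1+2\sum_{p=1}^{R-1}\cos px+\cos Rx\bigr)=\frac{\sin(Rx)\cot(x/2)}{2R}$ satisfies $K''(\tilde x_\mu)=\tilde w_\mu$. Meanwhile $K''(0)=-\frac{1}{2R}\bigl(2\sum_{p=1}^{R-1}p^2+R^2\bigr)=-\frac{2R^2+1}{6}$ is just a power sum. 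So the kernel route never needs a cosecant-square identity, at the price of handling the kernel's closed form and its derivatives at the nodes. Your recurrence handles every frequency with geometric sums and reflection symmetry alone. It does, however, import the classical cosecant identity for the constant term, and that identity would itself need a proof, for instance by the kernel argument you hold in reserve.
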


Both rules are proved for every $R$ on a verified Dirichlet-kernel engine, with the genuine analytic derivative on the left, and the first reads in \lean{} as

\begin{codeListingBox}{The generalized shift rule (\texttt{GeneralizedParameterShift.lean}, verbatim)}
\begin{lstlisting}
theorem firstDeriv :
    deriv (genTrigCost G.a G.b R) 0
      = ∑ μ ∈ Finset.Icc 1 (2 * R), genTrigCost G.a G.b R (psPoint R μ) * psWeight R μ := by
  rw [genTrigCost_deriv_zero, ← dirichlet_first_identity G.a G.b R G.hR]
\end{lstlisting}
\end{codeListingBox}

with \decl{GeneralizedParamShift.secondDeriv} its curvature companion.
The one-line proof is the visible seam between two proved layers: \decl{genTrigCost_deriv_zero} is the term-by-term derivative, and \decl{dirichlet_first_identity} is the Dirichlet-kernel evaluation identity that the shifted sum computes.
The statement lives at the level of the trigonometric cost; the circuit-to-cost bridge beyond frequency one is a stated follow-up, and Proposition~\ref{prop:costclass} is its $R = 1$ case.
The same trigonometric structure governs data encoding in quantum kernels. For commuting encoding Hamiltonians, \decl{fourier_representation} expresses the fidelity kernel as a finite trigonometric polynomial with eigenvalue-difference frequencies~\cite{schuldsupervised,schuld2018feature}. Conjugate symmetry makes the kernel real, and integer gaps give a multidimensional Fourier series.
For embedding quantum kernels, the exact-realization core of Ref.~\cite{gil-fuster2024expressivity} is proved: every normalized feature map is realized, up to an explicit positive affine transform, by genuine density matrices (\decl{eqk_realizes}).
These are exact realizations within fixed degree, parity, and normalization classes, not $\varepsilon$-approximations of arbitrary functions; the density questions deliberately left open here are collected in Sec.~\ref{sec:discussion}.

\subsection{Quantum phase processing: the lift to arbitrary unitaries}\label{sec:qpp}

Quantum phase processing (QPP) applies a trigonometric polynomial to the eigenphases of an arbitrary unitary~\cite{wang2023qpp}.
The formalization closes a key construction: the complement polynomial, which informal proofs obtain from a root-counting argument, is derived inside a verified Laurent root-algebra engine. The projected-block theorem therefore assumes the boundedness condition stated by Wang \emph{et al.}, rather than a separate complement witness.

\begin{theorem}[Unitary polynomial transformation; \texttt{QPP.Witness.main}]\label{thm:qpp}
Let $F(x) = \sum_{\ell=-L}^{L} c_\ell e^{i\ell x}$ satisfy $\lvert F(x)\rvert \le 1$ for all $x$, and let $U$ be any $n$-qubit unitary.
Then there is an explicitly constructed circuit, with exactly $2L$ controlled-$U$-type queries and $4L + 3$ single-qubit rotations, whose projected block equals $F(U)$.
\end{theorem}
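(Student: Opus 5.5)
The plan is to reduce the $n$-qubit statement to the single-qubit characterization of Theorem~\ref{thm:yzzyz}, applied eigenphase by eigenphase, after first manufacturing a complementary polynomial from the hypothesis $\lvert F\rvert\le 1$.

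\textbf{Step 1: the complement.} From $\lvert F(x)\rvert \le 1$, the Laurent polynomial $1 - F(x)\overline{F(x)}$ has degree at most $L$ in $e^{ix}$ and is nonnegative on the unit circle. I will prove a Fejér--Riesz factorization inside the Laurent root-algebra engine: there is a Laurent polynomial $G$ of degree at most $L$ with $\lvert F\rvert^2 + \lvert G\rvert^2 = 1$ pointwise. The proof goes by pairing the roots $z$ and $1/\bar z$ off the circle, and showing that roots on the circle have even multiplicity (forced by nonnegativity).

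\textbf{Step 2: the polynomial pair.} Multiplying by $e^{iLx}$ turns $(F, G)$ into ordinary polynomials $(A, B)$ in $e^{ix}$ of degree at most $2L$. Writing $P = e^{-i(2L)x/2}A(e^{ix}) = F$ and $Q = G$ recovers the form of Theorem~\ref{thm:yzzyz} with $2L$ in place of $L$. I then check the predicate $\mathrm{IsYZPair}\,(2L)\,A\,B$: the degree bound, the parity condition, and the normalization $\lvert A\rvert^2 + \lvert B\rvert^2 = 1$ on the circle. If the parity convention does not fit directly, I would adjust $G$ by a unimodular phase $e^{ikx}$, which the freedom in Fejér--Riesz allows.

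\textbf{Step 3: realize on one qubit and count resources.} Theorem~\ref{thm:yzzyz} then yields angles $(\varphi, \theta_0, \varphi_0)$ and $2L$ rotation pairs whose YZZYZ word equals $\begin{pmatrix} F & -G\\ G^* & F^*\end{pmatrix}$ for every $x$. The word consists of $2L$ signal rotations $R_Z(x)$ and $2L+1$ blocks $W$ of two rotations each, plus the final $R_Z(\varphi)$. This gives $2(2L+1)+1 = 4L+3$ single-qubit rotations, matching the stated count.

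\textbf{Step 4: lift to $n$ qubits.} I replace each $R_Z(x)$ by the controlled-$U$-type query
\[
\lvert 0\rangle\langle 0\rvert\otimes U^{\dagger 1/2}\text{-free form } \; \lvert 0\rangle\langle 0\rvert\otimes I + \lvert 1\rangle\langle 1\rvert\otimes U ,
\]
conjugated by a fixed phase so that it acts as $R_Z(\lambda_j)$ on each eigenspace. Spectrally decompose $U=\sum_j e^{i\lambda_j}\lvert\psi_j\rangle\langle\psi_j\rvert$. On the invariant subspace $\mathbb{C}^2\otimes\lvert\psi_j\rangle$, every gate of the circuit acts as its single-qubit counterpart evaluated at $x=\lambda_j$. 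Hence the $\langle 0\rvert\cdot\lvert 0\rangle$ block of the full circuit is $\sum_j F(\lambda_j)\lvert\psi_j\rangle\langle\psi_j\rvert = F(U)$. Formally, this amounts to showing that the circuit matrix commutes with the block-diagonal spectral decomposition. The step is routine but bookkeeping-heavy: it needs the typed-circuit bridge and a functional-calculus lemma stating that a polynomial in $U, U^\dagger$ agrees with its evaluation on eigenphases.

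\textbf{Main obstacle.} I expect Step 1 to be the hardest: a constructive Fejér--Riesz over $\mathbb{C}$, including the multiplicity argument for roots on the unit circle and the degree control on $G$. That is exactly the complement witness the statement promises not to assume. I would attempt it first by induction on the degree, peeling off one root pair $(z, 1/\bar z)$, or one double root on the circle, at each step.
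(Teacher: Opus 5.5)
Your overall route matches the paper's proof. You build a complement $G$ from $\lvert F\rvert\le 1$ by root pairing in the Laurent engine, invoke Theorem~\ref{thm:yzzyz} with $2L$ signal slots, lift to $U$ spectrally, and read the count $2(2L+1)+1=4L+3$ off the typed word. The paper cites coefficient uniqueness to glue per-eigenspace realizations into one circuit; you get this for free, since Theorem~\ref{thm:yzzyz} returns a single angle list valid for every $x$.

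\textbf{The gap is in Step~4.} The query $\lvert 0\rangle\langle 0\rvert\otimes I+\lvert 1\rangle\langle 1\rvert\otimes U$ acts on $\mathbb{C}^2\otimes\ket{\psi_j}$ as $\mathrm{diag}(1,e^{i\lambda_j})=e^{i\lambda_j/2}R_Z(\lambda_j)$, and the prefactor depends on $j$. No fixed phase and no fixed single-qubit gates can remove it. Any such modification multiplies the $2\times 2$ determinant $e^{i\lambda_j}$ by a $j$-independent constant, while $\det R_Z(\lambda_j)=1$ for every $j$. So the modification fails as soon as $U$ has two distinct eigenvalues. If all $2L$ slots carry this query, the phases accumulate to $e^{iL\lambda_j}$ on each eigenspace, and the projected block is $U^{L}F(U)$ rather than $F(U)$. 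More generally, with forward controlled-$U$ queries alone, only nonnegative powers of $U$ can appear in the block.

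\textbf{The fix.} Alternate with the anti-controlled inverse $\lvert 0\rangle\langle 0\rvert\otimes U^{\dagger}+\lvert 1\rangle\langle 1\rvert\otimes I$, which acts on the same subspace as $e^{-i\lambda_j/2}R_Z(\lambda_j)$. The phases are scalars on each two-dimensional invariant subspace, so they commute past the $W$ blocks. With $L$ queries of each kind in the $2L$ slots, they cancel exactly. This still gives $2L$ controlled-$U$-\emph{type} queries, consistent with the statement's wording, and leaves the rotation count at $4L+3$.

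\textbf{Two smaller slips in Steps~1--2.}
\begin{enumerate}
\item $1-\lvert F\rvert^2$ has Laurent degree $2L$, not $L$, because $F\bar F$ has frequencies in $[-2L,2L]$. Fej\'er--Riesz then yields a one-sided $h$ of degree at most $2L$. Taking $G(x)=e^{-iLx}h(e^{ix})$, i.e.\ $B=h$ and $A(e^{ix})=e^{iLx}F(x)$, is exactly what $\mathrm{IsYZPair}\,(2L)$ needs.
\item The parity condition is automatic: with $2L$ slots it asks only for integer frequencies. Your phase-adjustment contingency is therefore unnecessary.
\end{enumerate}
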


The formal proof runs in four steps, and each step leans on a foundation laid earlier in this section.
First, the spectral theorem decomposes the unitary. On an eigenspace of $U$ with eigenphase $\lambda$, controlled-$U$ queries act as the scalar rotation $R_Z(\lambda)$. The phase-processing word then reduces to Eq.~(\ref{eq:qspword}) evaluated at $x=\lambda$ (\texttt{eigenstate\_decomposition}).
Second, the single-qubit characterization takes over: Theorem~\ref{thm:yzzyz} realizes the target polynomial on each eigenphase, and the substrate's coefficient uniqueness guarantees the realizations agree with one global circuit rather than a per-eigenspace patchwork.
Third, the block encoding must be unitary, which needs a complement: a second Laurent polynomial $G$ with $\lvert F\rvert^2 + \lvert G\rvert^2 = 1$ on the circle.
Informal proofs posit $G$ from a root-counting argument; here it is constructed by a verified Laurent root-algebra engine from the boundedness hypothesis alone.
Fourth, the typed circuit layer of Sec.~\ref{sec:sqsp} assembles the blocks and carries the query and rotation counts as data, so the resource profile is a checked conjunct rather than a remark.
One scope nuance is stated rather than hidden: the phase-evolution form \decl{realizes_target} is proved in the eigenstate-decoupled regime, with the general-input statement routed through the projected-block formulation of Theorem~\ref{thm:qpp}.

The corresponding endpoint returns, for every bounded Laurent target, a typed circuit rather than a bare existence claim:

\begin{codeListingBox}{The QPP witness with exact resource counts (\texttt{Witness.lean}, verbatim)}
\begin{lstlisting}
theorem main
    (L : ℕ) (U : Gate (Qubits n)) (coeff : Fin (2 * L + 1) → ℂ)
    (h : QPP.SourceBoundedLaurentPolynomial L coeff) :
    ∃ realization : QPP.CircuitProjectedBlockWitness L U coeff,
      ResourceProfile.HasExactCounts realization.typedCircuit.resources
        (2 * L) 0 (4 * L + 3) 0
\end{lstlisting}
\end{codeListingBox}

Read against Ref.~\cite{wang2023qpp}, the $2L$ controlled-query and $4L{+}3$ rotation profile is the source's own resource count.

\subsection{From quantum algorithms to architectures: QSVT and the road to recurrent QNNs}\label{sec:qsvt}

The signal-processing family does not stop at learning models; its multi-qubit descendant is one of the great unifiers of quantum algorithmics.
The quantum singular value transformation (QSVT) applies a polynomial to the singular values of a block-encoded operator and provides a common framework for search, phase-estimation, and Hamiltonian-simulation primitives~\cite{gilyen2019qsvt}.
The same subtree formalizes QSVT at a larger scale. For a projected block encoding and a definite-parity target, it constructs a phased circuit whose projected block is the polynomial transform. The quoted endpoint is \decl{QSVT.Witness.Projected.main}, together with its Hermitian real-parity, real-arbitrary, and complex companions.

The algorithmic reading suggests a design principle for the models themselves: if the expressivity theory of QNNs is the theory of signal-processing circuits, then new QNN architectures can be designed \emph{from} quantum algorithms rather than by ansatz guesswork.
The quantum recurrent embedding neural network follows this route~\cite{jing2025qrenn}. It uses an encode--process alternation with residual-style connections, contains the signal-processing hierarchy, and analyses loss scaling through a DLA.
Formalizing this architecture lies outside the present library. A future development would need a recurrent layer composition and a proof that its circuit family contains the QSP words used here. The existing typed circuits, representation bridge, coefficient uniqueness, and QSP/QSVT witnesses are relevant ingredients. We claim neither a QRENN structure nor a subsumption theorem.
The paper adds no new QSVT mathematics; its resource-counted expressivity claims rest on the single-qubit and phase-processing layers of the preceding subsections.

\subsection{Dynamical expressivity: the capacity ceiling}\label{sec:qfim}

The section now switches to the second reading of Definition~\ref{def:expressivity}: not which functions a circuit realizes, but how much of the unitary group its parameters can explore.
The instrument that measures the exploration is the quantum Fisher information matrix, and it is constructed rather than axiomatized.

\begin{definition}[Quantum Fisher information matrix; \texttt{qfim}]\label{def:qfim}
For a unit state $\psi$ and Hermitian generators $h_1, \dots, h_M$,
\begin{equation}\label{eq:qfim}
[F]_{ab} \;=\; 4\,\mathrm{Cov}_\psi(h_a, h_b),
\qquad
\mathrm{Cov}_\psi(A, B) \;=\; \mathrm{Re}\,\langle\psi| A B|\psi\rangle - \mathrm{Re}\,\langle\psi|A|\psi\rangle\,\mathrm{Re}\,\langle\psi|B|\psi\rangle .
\end{equation}
\end{definition}

\begin{codeListingBox}{The QFIM as generator covariance (\texttt{QuantumFisher.lean}, verbatim)}
\begin{lstlisting}
def qCov (ψ : n → ℂ) (A B : Matrix n n ℂ) : ℝ :=
  (expval ψ (A * B)).re - (expval ψ A).re * (expval ψ B).re

def qfim (ψ : n → ℂ) (h : Fin M → Matrix n n ℂ) : Matrix (Fin M) (Fin M) ℝ :=
  Matrix.of fun a b => 4 * qCov ψ (h a) (h b)
\end{lstlisting}
\end{codeListingBox}

Three layers of machinery stand on this definition.
The matrix itself is shown to be the real Gram matrix of the centered states $(h_a - \langle h_a\rangle)\ket{\psi}$, hence positive semidefinite (\decl{qfim_posSemidef}), so its rank genuinely counts independent directions.
Its identification with four times the Fubini--Study metric is carried as named data. Every statement that uses this bridge exposes it in the type (finding F5), rather than treating it as notation.
The overparametrization layer defines the achievable rank at $M$ parameters, the saturation predicate \texttt{IsOverparametrized}, and the onset $M_c$. It consists of \texttt{OverparamData} and an extension, \texttt{QNNOverparametrization}, with two explicit hypothesis bundles. A concrete positive-rank witness shows that the definitions are inhabited.

On this stack the three overparametrization theorems of Larocca \emph{et al.}~\cite{larocca2023theory} hold in the proved-versus-named form this paper uses throughout.

\begin{theorem}[Rank ceiling; \texttt{achievableRank\_le\_dlaDim}]\label{thm:rank}
The achievable rank of the quantum Fisher information matrix of a circuit generated in $\mathfrak{g}$ satisfies
\begin{equation}\label{eq:rank}
\rk F \;\le\; \dim\mathfrak{g},
\end{equation}
with the ceiling field discharged for the genuine QFIM by \texttt{qfim\_rank\_le\_dlaDim} and a non-vacuous onset certified by \texttt{QFIMOverparam.main}.
\end{theorem}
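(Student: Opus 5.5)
The plan is to realize the QFIM of Definition~\ref{def:qfim} as a Gram matrix whose vectors all lie in a space of dimension at most $\dim\mathfrak{g}$, and then bound its rank by that dimension. By \decl{qfim_posSemidef}, $F = 4\,\mathrm{Re}\,G^\dagger G$, where the $a$-th column of $G$ is the centered vector $v_a = (h_a - \langle h_a\rangle)\ket{\psi}$. For a circuit generated in $\mathfrak{g}$, the effective generator of parameter $a$ at the output is $h_a = V_a^\dagger H_a V_a$, with $V_a$ the suffix of gates after position $a$. Each $V_a$ lies in the group $G_{\mathfrak{g}}$ and each $iH_a$ lies in $\mathfrak{g}$. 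Since conjugation by elements of $G_{\mathfrak{g}}$ preserves $\mathfrak{g}$, every $ih_a$ lies in $\mathfrak{g}$. This is the first step: membership of all effective generators in the DLA.

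Fix a Hermitian Hilbert--Schmidt-orthonormal basis $B_1,\dots,B_d$ of $i\mathfrak{g}$, with $d=\dim\mathfrak{g}$, as in Lemma~\ref{lem:antisymm}. Write $h_a = \sum_j c_{aj} B_j$ with real $c_{aj}$. Centering is linear, so $v_a = \sum_j c_{aj} w_j$ with $w_j = (B_j - \langle B_j\rangle)\ket\psi$. This gives the factorization
\begin{equation*}
F = C\,\Phi\,C^{\mathsf T}, \qquad \Phi_{jk} = 4\,\mathrm{Cov}_\psi(B_j,B_k),
\end{equation*}
where $C$ is the $M\times d$ coefficient matrix and $\Phi$ is the $d\times d$ QFIM of the basis. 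Bilinearity of $\mathrm{Cov}_\psi$ in real coefficients gives this identity entrywise. Then $\rk F \le \rk \Phi \le d$, using $\rk(XY)\le\min(\rk X,\rk Y)$ from \mathlib{}. This bound holds for every parameter point and every $M$, so it also holds for the achievable rank, which is a supremum over these. That is exactly the ceiling field that \texttt{qfim\_rank\_le\_dlaDim} discharges.

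Non-vacuity is a separate step. I would exhibit one small instance: a single qubit with generators $X$ and $Y$, so that $\mathfrak{g}=\mathfrak{su}(2)$, at the state $\ket0$. The QFIM there is an explicit $2\times2$ matrix of positive rank that can be computed by \texttt{norm\_num}/\texttt{decide}. This populates \texttt{OverparamData} and certifies a finite onset $M_c$, as in \texttt{QFIMOverparam.main}.

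The main obstacle is the first step. Formally it needs closure of $\mathfrak{g}$ under $\mathrm{Ad}_V$ for $V$ a product of exponentials of generators. I would prove this by induction on the word: show $e^{-i\theta H/2} X e^{i\theta H/2}\in\mathfrak{g}$ for $X\in\mathfrak{g}$ and $iH\in\mathfrak{g}$. That in turn follows from the power series $e^{\mathrm{ad}}$ and closedness of the finite-dimensional subspace $\mathfrak{g}$. If that analytic route is too heavy, this closure is exactly the kind of input I would expose as a named hypothesis field on the circuit-in-$\mathfrak{g}$ predicate. The rank inequality itself then stays purely algebraic.
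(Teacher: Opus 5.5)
Your proposal is correct and matches the paper's structure. The displayed proof of \texttt{achievableRank\_le\_dlaDim} is only the projection of the ceiling field of the overparametrization data, your real-coefficient Gram factorization $F = C\,\Phi\,C^{\top}$ supplies the content of the discharge lemma \texttt{qfim\_rank\_le\_dlaDim} (whose proof the paper does not display; the real Hermitian-basis coordinates are what keep the bound at $\dim\mathfrak{g}$ rather than $2\dim\mathfrak{g}$), and non-vacuity is likewise a concrete positive-rank instance. The one addition is your derivation of $\mathfrak{g}$-membership for the conjugated generators $V_a^\dagger H_a V_a$, which the paper's \texttt{qfim} does not go through because it is defined directly on given generators $h_a$; if you formalize that step, the paper's own Ad-closure argument for $\mathfrak{g}$-sim abandoned the normed-space series route you propose and used an entrywise Cauchy-product proof instead (finding F7).
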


Its \lean{} form is two lines, the bound discharged by the inherited QFIM-level estimate:

\begin{codeListingBox}{The rank ceiling (\texttt{Overparametrization.lean}, verbatim)}
\begin{lstlisting}
theorem achievableRank_le_dlaDim (μ : Q.toOverparamData.ι) (M : ℕ) :
    Q.toOverparamData.achievableRank μ M ≤ dlaDim gens :=
  Q.toOverparamData.achievable_le_dlaDim μ M
\end{lstlisting}
\end{codeListingBox}

\begin{theorem}[Capacity; \texttt{capacity\_le\_dlaDim}, \texttt{capacity\_max\_of\_overparam}]\label{thm:capacity}
The effective quantum dimension $D_1$, the achievable QFIM rank, never exceeds $\dim\mathfrak{g}$, and once the QNN is overparametrized $D_1$ attains its saturated value for every training state.
\end{theorem}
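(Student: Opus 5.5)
The plan is to derive Theorem~\ref{thm:capacity} almost entirely from Theorem~\ref{thm:rank} and the definitions in the overparametrization layer, rather than reopening the QFIM analysis. First I would fix the definition: $D_1$ at parameter count $M$ and training state $\mu$ is, by definition in \texttt{OverparamData}, the achievable rank $\mathrm{achievableRank}\,\mu\,M$, the maximum over parameter settings of $\rk F$. With that identification, the first claim $D_1 \le \dim\mathfrak{g}$ is literally \texttt{achievableRank\_le\_dlaDim}, applied pointwise in $\mu$ and $M$. Its ceiling field is discharged for the genuine QFIM by \texttt{qfim\_rank\_le\_dlaDim}. That lemma rests on a simple argument: \texttt{qfim\_posSemidef} shows $F$ is the Gram matrix of the centered vectors $(h_a-\langle h_a\rangle)\ket{\psi}$. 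Every $h_a$ is a conjugate $U^\dagger H_k U$ with $iH_k\in\mathfrak{g}$, so every such vector lies in the real span of $\{(X-\langle X\rangle)\ket\psi : iX\in\mathfrak g\}$. That span has dimension at most $\dim\mathfrak g$, and the rank of a Gram matrix is the dimension of its span.

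For the second claim I would unfold \texttt{IsOverparametrized}. The predicate says that at $M$ parameters the achievable rank has reached its saturated value, namely the supremum of achievable ranks over all parameter counts. That supremum exists because of the ceiling just proved, since the family is bounded above in $\mathbb{N}$ by $\dim\mathfrak g$. So the sequence $M\mapsto \mathrm{achievableRank}\,\mu\,M$ is bounded and monotone. Monotonicity holds because adding a gate at angle zero leaves the state unchanged and only appends a row and column to $F$, and rank cannot drop when rows and columns are appended. The sequence therefore stabilises at some finite $M_c(\mu)$. The statement ``$D_1$ attains its saturated value for every training state'' then follows by instantiating the predicate's universal quantifier over $\mu$ and rewriting $D_1$ as the achievable rank.

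The main obstacle is making ``saturated value'' precise and uniform over training states. Stabilisation is per state, but the theorem quantifies over all $\mu$ with a single overparametrized $M$. I would first check whether \texttt{IsOverparametrized} is defined with the $\forall\mu$ already inside; in that case \texttt{capacity\_max\_of\_overparam} is a direct projection. If it is not, I would place monotonicity and the per-state saturation level in the \texttt{QNNOverparametrization} hypothesis bundle as named fields, so the proof becomes a rewrite plus \texttt{le\_antisymm} between the ceiling and the saturation witness. The non-vacuity certificate \texttt{QFIMOverparam.main} would be cited only to show the hypotheses are inhabited, not as part of the proof.
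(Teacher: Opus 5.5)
Your proposal follows the paper's route. $D_1$ (\texttt{effDimQuantum}) is by definition the achievable QFIM rank, so \texttt{capacity\_le\_dlaDim} is the ceiling of Theorem~\ref{thm:rank}: the field \texttt{achievable\_le\_dlaDim}, discharged for the genuine QFIM by \texttt{qfim\_rank\_le\_dlaDim}. Likewise, \texttt{capacity\_max\_of\_overparam} comes from unfolding \texttt{IsOverparametrized}, a predicate on $M$ alone that already gives saturation at every training state $\mu$.

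Your monotonicity and stabilisation argument is therefore not needed for this theorem. It concerns only whether an onset $M_c$ exists, and the paper certifies that separately via \texttt{QFIMOverparam.main}.
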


Both conclusions are derived; the effective \emph{classical} dimension $D_2$ of the source theorem enters through the named hypothesis bundle \texttt{QNNCapacityClassicalAssumptions}, whose single field asserts $D_2 \le \dim\mathfrak{g}$. The proved and assumed halves of the capacity statement are therefore separated in the type.

\begin{theorem}[Hessian rank at a minimum; \texttt{hessianRank\_le\_min}]\label{thm:hessian}
At a minimum of the linear loss with data operator $A$ and observable $O$, the Hessian rank satisfies
\begin{equation}\label{eq:hessian}
\rk \mathrm{Hess} \;\le\; \min\!\big(\dim\mathfrak{g},\; 2Nr - r^2 - r\big),
\qquad r = \min(\rk A, \rk O).
\end{equation}
\end{theorem}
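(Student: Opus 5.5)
We prove the two bounds in Theorem~\ref{thm:hessian} separately and then take the minimum. Throughout, the loss is $\ell(\bm\theta)=\Tr[U(\bm\theta)AU^\dagger(\bm\theta)O]$.

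\textbf{The DLA bound.} First we reduce the Hessian at a critical point to a quadratic form in Lie-algebra directions. Write $U(\bm\theta)=V$ and perturb as $U(\bm\theta+\delta)=e^{X(\delta)}V$, with $X(\delta)\in\mathfrak{g}$. Here $X$ is linear in $\delta$ to first order, $X=\sum_k\delta_k X_k$, where each $X_k\in\mathfrak{g}$ is a conjugated generator.

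We expand $\ell$ to second order around the minimum. The first-order term $\Tr([X,VAV^\dagger]O)$ vanishes for every $X\in\mathfrak{g}$, because each $X_k$ spans $\mathfrak{g}$-directions reachable at that point. The remaining second-order contributions, coming from $X^2$ and from the second-order correction of $X(\delta)$, are then a quadratic form $q(X)$ on $\mathfrak{g}$, evaluated at $X=J\delta$. Here $J:\mathbb{R}^M\to\mathfrak{g}$ is the linear Jacobian map.

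The only delicate term is the second-order correction to $X(\delta)$ (a BCH commutator). It lies in $\mathfrak{g}$ and enters only through the first-order form, which vanishes at the critical point. Hence $\mathrm{Hess}=J^{\top}QJ$, and $\rk\mathrm{Hess}\le\rk J\le\dim\mathfrak{g}$. This mirrors the route of Theorem~\ref{thm:rank}, where the QFIM is likewise a Gram form factoring through the Jacobian into $\mathfrak{g}$.

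\textbf{The $2Nr-r^2-r$ bound.} We drop the $\mathfrak{g}$ restriction and bound the rank of $q$ on all of $\mathfrak{u}(N)$. At the minimum, set $\tilde A=VAV^\dagger$. The quadratic form is
\[
q(X)=\Tr\big(\tilde A X^2 O+ \tilde A O X^2-2X\tilde A X O\big)/2\cdot(\pm)
=\tfrac12\Tr\big([X,[X,\tilde A]]O\big).
\]
It depends on $X$ only through $[X,\tilde A]$ and $[X,O]$. Stationarity forces $[\tilde A,O]$ to vanish in the relevant sense, so $\tilde A$ and $O$ can be simultaneously diagonalized.

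Assume without loss of generality that $\rk\tilde A=r$. Decompose $\mathbb{C}^N=S\oplus S^\perp$ with $S$ the support of size $r$. If $X$ is block-diagonal with respect to $S\oplus S^\perp$, and its $S$-block commutes with $\tilde A$ and $O$ there, then $X$ lies in the kernel. The directions that survive are the off-diagonal blocks $S\leftrightarrow S^\perp$, contributing real dimension $2r(N-r)$, together with the traceless non-commuting part of the $S$-block, of dimension at most $r^2-r$ after removing the $r$-dimensional diagonal. Adding these gives $2rN-2r^2+r^2-r=2Nr-r^2-r$.

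\textbf{Main obstacle.} The hardest step is the formal factorization of the Hessian through $J$ at a critical point in the ordered-product ansatz. The second-order BCH terms must be shown to drop out exactly, rather than merely lying in $\mathfrak{g}$. The kernel count is also sensitive to degenerate spectra. To handle both, I would first establish the second bound under a generic-spectrum simplification, and then state the identification of the Hessian with $q\circ J$ as a named hypothesis field, discharged for the $\mathfrak{g}$-part by the same Jacobian lemma used for \texttt{qfim\_rank\_le\_dlaDim}.
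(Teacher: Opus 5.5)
Your route is genuinely different from the paper's, and as written it has a gap. In the paper, \texttt{hessianRank\_le\_min} is just the minimum of the two fields \texttt{hessianRank\_le\_dlaDim} and \texttt{hessianRank\_le\_comb} of \texttt{QNNHessianRankAssumptions}. Even \texttt{hessianRank} is an abstract field there, and the paper says the differential-geometric content is assumed rather than proved. You are trying to derive those two fields.

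The step that fails is your claim that the first-order form $\omega(Y)=\Tr([Y,\tilde A]O)=\Tr(Y[\tilde A,O])$ vanishes on all of $\mathfrak g$ at a critical point, and with it the claim that stationarity forces $[\tilde A,O]=0$.
\begin{itemize}
\item Stationarity in $\bm\theta$ only gives $\omega(\hat A_k)=0$ for the conjugated generators, i.e.\ $\omega|_{\operatorname{im}J}=0$. In general $\operatorname{im}J$ is a proper subspace of $\mathfrak g$.
\item The BCH term $\tfrac12\sum_{j<k}\delta_j\delta_k\,\omega([\hat A_j,\hat A_k])$ therefore survives. Combined with the Jacobi identity it gives $\mathrm{Hess}_{jk}=\Tr([\hat A_j,[\hat A_k,\tilde A]]O)$ for $j\le k$.
\item This is an \emph{ordered} truncation of a bilinear form on $\mathfrak g$. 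It does not factor through $J$ and can have full rank. The QFIM of Theorem~\ref{thm:rank} differs precisely here: it is a Gram form $J^{\top}GJ$ at every point.
\end{itemize}

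This breaks the statement, not just your proof. Take one qubit, $A=\ket{0}\langle0|$, $O=X$, $H_k=\sin\phi_kX+\cos\phi_kZ$ with $\pi/2>\phi_1>\dots>\phi_M>0$, the $k=1$ factor leftmost, and $\bm\theta^*=0$.
\begin{itemize}
\item The loss is the $x$-component $n_x(\bm\theta)$ of the rotated Bloch vector of $e_z$.
\item Every rotation axis lies in the $xz$-plane, so each gate first moves $e_z$ along $y$, and the gradient vanishes.
\item The Hessian is $\cos\phi_{\min(j,k)}\sin\phi_{\max(j,k)}=\sin\phi_j\sin\phi_k\min(\cot\phi_j,\cot\phi_k)$. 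It is positive definite because $[\min(t_j,t_k)]$ is positive definite for $0<t_1<\dots<t_M$.
\item So $\bm\theta^*$ is a strict local minimum with $\rk\mathrm{Hess}=M$, while $\dim\mathfrak g=3$ and $2Nr-r^2-r=2$.
\end{itemize}

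What is missing is the hypothesis that the reached unitary is critical for $W\mapsto\Tr[WAW^\dagger O]$ on the whole reachable group, i.e.\ $\omega|_{\mathfrak g}=0$. This holds, for example, at a global minimum of a circuit whose image is the group, or when $\rk J=\dim\mathfrak g$ at $\bm\theta^*$. The source's ``minimum'' has to be read this way, and the paper avoids the issue by assuming both bounds outright.

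Under that hypothesis your first bound goes through. The second still cannot use simultaneous diagonalization, because you only get $[\tilde A,O]\perp\mathfrak g$, not $[\tilde A,O]=0$. Use instead the following:
\begin{itemize}
\item By the Jacobi identity, $T(X,Y)=\Tr([X,[Y,\tilde A]]O)=-\Tr([Y,\tilde A][X,O])$ is symmetric on $\mathfrak g$.
\item Hence $\mathrm{Hess}=J^{\top}TJ$ and $\rk\mathrm{Hess}\le\min(\rk J,\rk\mathrm{ad}_{\tilde A},\rk\mathrm{ad}_O)$.
\item Each adjoint rank is at most $2Nr'-r'^2-r'$ for the corresponding rank $r'$, since the commutant contains $\mathfrak u(N-r')$ plus $r'$ diagonal directions.
\item Monotonicity in $r'$ then gives the bound with $r=\min(\rk A,\rk O)$.
\end{itemize}
Degenerate spectra only enlarge the commutants, so your generic-spectrum step is unnecessary.
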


The $\min$ is derived from the two fields of the named hypothesis bundle \texttt{QNNHessianRankAssumptions}, which carry the differential-geometric facts the development does not prove. The combinatorial term is the source's, encoded rather than re-derived; over $\mathbb{N}$ the expression truncates at zero, so no side condition is needed in the type.

The two hypothesis bundles and the three theorem signatures are shown together below; \texttt{Q} denotes the ambient \texttt{QNNOverparametrization} instance fixed by the module's variable block.

\begin{codeListingBox}{The capacity and Hessian layer (\texttt{Overparametrization.lean}, statements, docstrings elided)}
\begin{lstlisting}
structure QNNCapacityClassicalAssumptions (gens : Set (Matrix (Fin N) (Fin N) ℂ)) where
  effDimClassical : ℕ
  effDimClassical_le_dlaDim : effDimClassical ≤ dlaDim gens

structure QNNHessianRankAssumptions (gens : Set (Matrix (Fin N) (Fin N) ℂ)) where
  obs : Matrix (Fin N) (Fin N) ℂ
  dataOp : Matrix (Fin N) (Fin N) ℂ
  hessianRank : ℕ → ℕ
  hessianRank_le_dlaDim : ∀ M, hessianRank M ≤ dlaDim gens
  hessianRank_le_comb : ∀ M,
    hessianRank M ≤ 2 * N * min dataOp.rank obs.rank
      - min dataOp.rank obs.rank ^ 2 - min dataOp.rank obs.rank

theorem capacity_le_dlaDim (μ : Q.toOverparamData.ι) (M : ℕ) :
    Q.effDimQuantum μ M ≤ dlaDim gens

theorem capacity_max_of_overparam {M : ℕ} (hover : Q.IsOverparametrized M)
    (μ : Q.toOverparamData.ι) :
    Q.effDimQuantum μ M = Q.toOverparamData.saturatedRank μ

theorem hessianRank_le_min (M : ℕ) :
    Q.hessianRank M ≤ min (dlaDim gens) (2 * N * Q.stateRank - Q.stateRank ^ 2 - Q.stateRank)
\end{lstlisting}
\end{codeListingBox}

The ceiling links the two pillars through a common algebraic dimension. Theorem~\ref{thm:rank} bounds the QFIM rank by $\dim\mathfrak{g}$, while Eq.~(\ref{eq:variance}) places the same dimension in the denominator of the conditional loss-variance law.

\section{Formalization of QNN Trainability}\label{sec:framework}
One interface carries the trainability theory's analytic inputs as named hypotheses. This section develops the direct-sum variance law (Theorem~\ref{thm:variance}), its family instances, and the capstone pairing loss variance with exact reconstruction (Theorem~\ref{thm:dichotomy}).

A terminological convention applies throughout: a statement is \emph{conditional} when its type consumes a named hypothesis that the development does not discharge, and \emph{unconditional} when every such input is closed by proof.
On the trainability side the relevant hypothesis is a second-moment bundle (Sec.~\ref{sec:interface}), whose fields carry the ensemble's two-design input and, in the reductive case, the per-ideal Schur and cross-block identities; on the expressivity side it is the Fisher--Fubini--Study identification and the capacity and Hessian bundles of Sec.~\ref{sec:qfim}.
\emph{Unconditional} results include the local product family ($1/3$, $1/27$, $1/108$ evaluations whose product-Clifford ensemble inputs are closed by construction), the Schur identities of the solver families, the real-form dimension bridge, the closed-form dimensions, and the per-bundle variance identities such as the TFIM value $1/(2n{-}1)$.
\emph{Conditional} results include the exponential-family loss concentration, where the two-design realization remains the visible residual, and the polynomial-DLA non-concentration statements for the matchgate and transverse-field Ising families, which are quantified over supplied bundles.
Section~\ref{sec:dichotomy} revisits this boundary at the capstone, and the limitations of Sec.~\ref{sec:discussion} carry it forward.

\begin{figure}[t]
\centering
\begin{tikzpicture}[
  hub/.style={draw, rounded corners=2pt, fill=blue!12, align=center,
              inner sep=6pt, font=\small, text width=40mm},
  leaf/.style={draw, rounded corners=2pt, fill=gray!8, align=center,
               inner sep=5pt, font=\footnotesize, text width=38mm},
  arr/.style={-{Latex[length=2mm]}, thick},
]
  \node[hub] (hub) {\textbf{DLA Hermitian basis}\\[2pt]\scriptsize
        $\mathfrak{g}=\langle iH_k\rangle_{\mathrm{Lie}}$;\ $\dim\mathfrak{g}$;\ Casimir $C$;\ purity $P_{\mathfrak{g}}$};
  \node[leaf, above left=9mm and 8mm of hub]  (var)    {Variance law\\\scriptsize simple + reductive interface};
  \node[leaf, above right=9mm and 8mm of hub] (schur)  {Casimir uniqueness\\\scriptsize one engine $\to$ su/so/sp families};
  \node[leaf, below left=9mm and 8mm of hub]  (gsim)   {Simulation + capacity\\\scriptsize g-sim, QFIM rank ceiling};
  \node[leaf, below right=9mm and 8mm of hub] (design) {Two-design input\\\scriptsize closed for single-qubit and product-Clifford ensembles};
  \draw[arr] (hub) -- (var);
  \draw[arr] (hub) -- (schur);
  \draw[arr] (hub) -- (gsim);
  \draw[arr] (design) -- (hub);
\end{tikzpicture}
\caption{The dynamical Lie algebra as the hub of the trainability development. Its
Hermitian basis feeds the variance law, the per-family Casimir-uniqueness proof, and the simulation
and capacity layers; the two-design input enters from outside, closed by proof for the
single-qubit and product-Clifford ensembles.}
\label{fig:dla-architecture}
\end{figure}

The two halves meet at the cost class characterized in Sec.~\ref{sec:costclass}. For the formalized multi-gate ansatz, each loss coordinate is a proved frequency-one trigonometric polynomial with an exact parameter-shift derivative.
The trainability layer asks a distributional question about the same loss: how its values vary across a circuit ensemble. In the theory used here, that question is Lie-algebraic.
Figure~\ref{fig:dla-architecture} previews the section's architecture, with the DLA's Hermitian basis as the hub every layer consumes.

\subsection{Components: from the algebra to the hypothesis bundle}\label{sec:interface}

The goal of this section is one theorem, the formal proof of Ragone's variance law, and everything in this subsection is a component of that proof.
The library mirrors four mathematical layers. The core defines the algebra and its dimension. The metric layer adds a Hermitian basis, the Casimir, and the purity. The invariant-theory layer constructs $(\mathfrak{g}\otimes\mathfrak{g})^{\mathfrak{g}}$ and proves that the Casimir belongs to it. The hypothesis-bundle layer then assembles the inputs consumed by the variance law.\footnote{The corresponding modules are \texttt{Core/}, \texttt{Core/VarianceFormula}, \texttt{Interface/AdModule}, \texttt{Interface/CasimirInvariant}, \texttt{Interface/RagoneInterface}, and \texttt{Interface/DoubledTwirl}.}
The Schur-discharge engine is kept in the family layer (\texttt{Algebras/PauliSchurFamily}), where Sec.~\ref{sec:trainability} uses it.
We walk the layers bottom up.

In the concrete-matrix setting of Sec.~\ref{sec:background_qml}, the DLA of Definition~\ref{def:dla} is the complex Lie span of the generators, and its dimension is the finrank of that span:

\begin{codeListingBox}{The DLA and its dimension (\texttt{DynamicalLieAlgebra.lean}, \texttt{LieAlgebraicBP.lean}, verbatim)}
\begin{lstlisting}
def dynamicalLieAlgebra (gens : Set (Matrix (Fin N) (Fin N) ℂ)) :
    LieSubalgebra ℂ (Matrix (Fin N) (Fin N) ℂ) :=
  LieSubalgebra.lieSpan ℂ (Matrix (Fin N) (Fin N) ℂ) gens

def dlaDim (gens : Set (Matrix (Fin N) (Fin N) ℂ)) : ℕ :=
  Module.finrank ℂ (dynamicalLieAlgebra gens).toSubmodule
\end{lstlisting}
\end{codeListingBox}

The shorthand and the definition differ in one place worth pausing on.
On paper the DLA is a real algebra $\mathfrak{g} \subseteq \mathfrak{u}(2^n)$, the span over $\mathbb{R}$ of nested commutators of the skew-Hermitian $iH_k$; the \lean{} definition closes over $\mathbb{C}$, so the object it builds is the complexification $\mathfrak{g}_{\mathbb{C}}$.
The gap is not cosmetic, since the $\dim\mathfrak{g}$ in the variance law means the real dimension, and a complex span could in principle count differently.
The repair is a proved bridge. The Hermitian real form has real dimension equal to the complex finrank (\decl{finrank_real_realForm_eq_finrank_complex_dla}). Thus, \texttt{dlaDim} agrees with the dimension used in the cited theory, subject to this identification.

On top of the algebra sits its metric skeleton.
A \texttt{DLAHermBasis} packages a Hermitian basis $\{B_j\}$ of $\mathfrak{g}$ that is orthonormal for the Hilbert--Schmidt inner product $\langle A, B\rangle = \Tr[A^\dagger B]$ and spans the algebra:

\begin{codeListingBox}{The Hermitian basis carrier (\texttt{VarianceFormula.lean}, verbatim, docstrings elided)}
\begin{lstlisting}
structure DLAHermBasis (gens : Set (Matrix (Fin N) (Fin N) ℂ)) where
  dim : ℕ
  B : Fin dim → Matrix (Fin N) (Fin N) ℂ
  herm : ∀ j, (B j)ᴴ = B j
  ortho : ∀ i j, hsInner (B i) (B j) = if i = j then 1 else 0
  span_eq : Submodule.span ℂ (Set.range B) = (dynamicalLieAlgebra gens).toSubmodule
\end{lstlisting}
\end{codeListingBox}

From it we define the $\mathfrak{g}$-purity and the quadratic Casimir in the doubled operator space,
\begin{equation}\label{eq:purity-casimir}
P_{\mathfrak{g}}(A) \;=\; \sum_{j=1}^{\dim\mathfrak{g}} \big|\langle B_j, A\rangle\big|^{2},
\qquad
C \;=\; \sum_{j=1}^{\dim\mathfrak{g}} B_j \otimes B_j .
\end{equation}
That $C$ is invariant under the diagonal adjoint action, i.e.\ lies in $(\mathfrak{g}\otimes\mathfrak{g})^{\mathfrak{g}}$, is a proved lemma (\decl{casimir_mem_adCommutantGG}), not an input.

The load-bearing structure is the second-moment interface.
A \texttt{RagoneSecondMoment} bundle carries a second-moment operator $M$ and an equation relating it to the scalar \decl{variance} field. It exposes two analytic inputs. The field \decl{mem_invariant} places $M$ in $(\mathfrak{g}\otimes\mathfrak{g})^{\mathfrak{g}}$, encoding the ensemble's two-design input. The field \decl{invariant_eq_spanC} is the Schur input that identifies this space with the line spanned by $C$.
Two bridge theorems justify the name \decl{variance}. The finite doubled twirl gives the empirical loss second moment (\decl{twirl2_hsInner_eq_loss_secondMoment}). If the single-copy twirl is scalar and the observable is traceless, the mean vanishes and this second moment is the centered variance (\decl{secondMoment_eq_centered_of_mean_zero}).
The shape of the interface is itself a result of the formalization: our first version back-solved, naming the field the claimed variance and positing the projection identity as a free assumption, so that ``deriving'' the variance law from it would have been circular (finding F1).
This episode is methodologically relevant because the circular version type-checks and compiles despite concealing the desired conclusion in its inputs. Repairing it produced the named-hypothesis architecture used throughout the paper.

\begin{definition}[Second-moment hypothesis bundle]\label{def:bundle}
Fix a Hermitian, Hilbert--Schmidt-orthonormal basis $\{B_j\}$ of $\mathfrak{g}$, a state $\rho$, and an observable $O$. A second-moment bundle is a pair $(\sigma^2, M)$ of a scalar and a doubled-space operator such that
(i) $\sigma^2 = \langle \rho\otimes\rho,\, M\rangle$;
(ii) $M \in (\mathfrak{g}\otimes\mathfrak{g})^{\mathfrak{g}}$ \emph{(the two-design input)};
(iii) $(\mathfrak{g}\otimes\mathfrak{g})^{\mathfrak{g}} = \mathrm{span}\{C\}$ \emph{(the Schur input)}; and
(iv) $\langle C,\, O\otimes O\rangle = \langle C,\, M\rangle$.
\end{definition}

The repaired structure is short enough to show in full:

\begin{codeListingBox}{The second-moment hypothesis bundle (\texttt{RagoneInterface.lean}, verbatim, docstrings elided)}
\begin{lstlisting}
structure RagoneSecondMoment {gens : Set (Matrix (Fin N) (Fin N) ℂ)} (b : DLAHermBasis gens)
    (ρ O : Matrix (Fin N) (Fin N) ℂ) where
  variance : ℝ
  secondMoment : Matrix (Fin N × Fin N) (Fin N × Fin N) ℂ
  var_eq : (variance : ℂ) = hsInner (ρ ⊗ₖ ρ) secondMoment
  mem_invariant : secondMoment ∈ gTensorGInvariant b
  invariant_eq_spanC : gTensorGInvariant b = Submodule.span ℂ {b.casimir}
  proj_orth : hsInner b.casimir (O ⊗ₖ O) = hsInner b.casimir secondMoment
\end{lstlisting}
\end{codeListingBox}

Reading the fields: \decl{var_eq} ties the scalar to the second moment paired against $\rho\otimes\rho$; \decl{mem_invariant} is the two-design content of the ensemble; \decl{invariant_eq_spanC} is the Schur input; \decl{proj_orth} is the residual-orthogonality property of a projection.
A theorem consuming a \texttt{RagoneSecondMoment} therefore displays, in its type, which analytic facts it charges to the ensemble; no other channel exists through which they could enter.
The scalar-multiple identity the old interface posited is now a two-line consequence of the two named fields.

\begin{lemma}[Derived Casimir membership; \texttt{proj\_mem}]\label{lem:projmem}
For every second-moment bundle $(\sigma^2, M)$ of Definition~\ref{def:bundle} there is a $\kappa \in \mathbb{C}$ with $M = \kappa\, C$.
\end{lemma}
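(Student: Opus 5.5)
The proof uses only two fields of the bundle of Definition~\ref{def:bundle}: the two-design input (ii) and the Schur input (iii). Fields (i) and (iv) play no role. By (ii), $M$ lies in the invariant subspace $(\mathfrak{g}\otimes\mathfrak{g})^{\mathfrak{g}}$. By (iii), that subspace equals $\mathrm{span}\{C\}$. Rewriting the membership along this equality, we obtain
\[
M \in \mathrm{span}_{\mathbb{C}}\{C\}.
\]

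It remains to unfold membership in the span of a singleton. An element of the $\mathbb{C}$-span of $\{C\}$ is precisely a scalar multiple $\kappa\, C$ with $\kappa\in\mathbb{C}$. In \mathlib{} this is the lemma \texttt{Submodule.mem\_span\_singleton}, which yields $\exists\,\kappa,\ \kappa \cdot C = M$. Taking this $\kappa$ and reversing the equation gives $M = \kappa\, C$, as required.

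In \lean{} the argument should be two lines: rewrite \texttt{invariant\_eq\_spanC} inside \texttt{mem\_invariant}, then apply \texttt{Submodule.mem\_span\_singleton} and destructure. I expect no real obstacle. The only friction is bookkeeping: the direction of the equation in the singleton-span lemma ($\kappa\cdot C = M$ versus $M = \kappa\cdot C$), which is fixed with a \texttt{.symm}, and making sure the rewrite targets the submodule-membership statement rather than an unfolded carrier set.

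The content that makes $M$ a Casimir multiple is therefore entirely charged to the named Schur field (iii). This is the point of the de-circularized interface: the scalar-multiple identity that the earlier version posited is now a consequence of the named inputs, not an assumption.
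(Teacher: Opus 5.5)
Your proposal is correct and matches the paper's proof step for step. The paper also rewrites \texttt{mem\_invariant} with \texttt{invariant\_eq\_spanC} and \texttt{Submodule.mem\_span\_singleton}, destructures the resulting witness, and closes with \texttt{.symm}, using only fields (ii) and (iii).
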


\begin{codeListingBox}{Derived membership in $\mathrm{span}\{C\}$ (\texttt{RagoneInterface.lean}, verbatim)}
\begin{lstlisting}
theorem RagoneSecondMoment.proj_mem {gens : Set (Matrix (Fin N) (Fin N) ℂ)}
    {b : DLAHermBasis gens} {ρ O : Matrix (Fin N) (Fin N) ℂ} (M : RagoneSecondMoment b ρ O) :
    ∃ κ : ℂ, M.secondMoment = κ • b.casimir := by
  have h := M.mem_invariant
  rw [M.invariant_eq_spanC, Submodule.mem_span_singleton] at h
  obtain ⟨a, ha⟩ := h
  exact ⟨a, ha.symm⟩
\end{lstlisting}
\end{codeListingBox}

The proof consumes exactly the two named fields. This is the de-circularization of finding F1 in its smallest visible form: what used to be one assumption is now derived from two inputs, each discharged in the local-product-family construction below.
Non-vacuity is guarded at the interface level: a consistency witness inhabits the bundle without claiming physical content, and on the one-dimensional trivial algebra the Schur field closes outright.

Reductive algebras get a bundle of their own.
\texttt{RagoneReductive} decomposes $\mathfrak{g}$ into Hilbert--Schmidt-orthogonal ideals, each with a Hermitian basis and Casimir. It carries per-ideal invariant membership, per-ideal Schur identities, and \decl{cross_block_exclusion}, which removes cross-ideal invariant blocks from the second moment.
For commuting centre-free ideals, the corresponding cross blocks vanish in $(\mathfrak{g}\otimes\mathfrak{g})^{\mathfrak{g}}$. The record does not carry that commutation theorem, so the exclusion remains a named input. The $\mathfrak{so}(4)$ example also shows that the invariant space can have dimension two.
The product-ensemble instances described below discharge all three fields by construction.

\subsection{The variance law and the family plateaus}\label{sec:trainability}

\begin{theorem}[Variance law, derived; \texttt{variance\_eq\_gPurity}]\label{thm:variance}
Let $\{B_j\}$ be a Hermitian, Hilbert--Schmidt-orthonormal basis of $\mathfrak{g}$ with $\dim\mathfrak{g} > 0$, let $\rho$ and $O$ be Hermitian, and let $M$ be a second-moment bundle over $(\{B_j\}, \rho, O)$.
Then the variance carried by $M$ obeys the single-ideal case of Eq.~(\ref{eq:variance}),
$\mathrm{Var}_M[\ell] = P_{\mathfrak{g}}(\rho)\, P_{\mathfrak{g}}(O) / \dim\mathfrak{g}$.
\end{theorem}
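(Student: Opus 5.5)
The plan is to combine Lemma~\ref{lem:projmem} with the fields (i) and (iv) of Definition~\ref{def:bundle}, and then evaluate two Hilbert--Schmidt pairings against the Casimir. By Lemma~\ref{lem:projmem} there is $\kappa\in\mathbb{C}$ with $M=\kappa\,C$. I will determine $\kappa$ from field (iv) and then substitute into field (i).

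For field (iv), $\langle C, O\otimes O\rangle = \kappa\,\langle C, C\rangle$. I will compute both pairings by expanding $C=\sum_j B_j\otimes B_j$. The Hilbert--Schmidt product on the doubled space factorizes over Kronecker products: $\langle A\otimes A', B\otimes B'\rangle = \langle A,B\rangle\langle A',B'\rangle$, which I will prove via $(A\otimes A')^\dagger=A^\dagger\otimes A'^\dagger$ and $\Tr(X\otimes Y)=\Tr X\,\Tr Y$. Orthonormality then gives $\langle C,C\rangle=\sum_{j,k}\langle B_j,B_k\rangle^2=\dim\mathfrak{g}$. For the other pairing, $\langle C,O\otimes O\rangle=\sum_j\langle B_j,O\rangle^2$. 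Since $B_j$ and $O$ are Hermitian, $\langle B_j,O\rangle=\Tr[B_jO]$ is real, so $\langle B_j,O\rangle^2=|\langle B_j,O\rangle|^2$ and the sum equals $P_{\mathfrak{g}}(O)$. Because $\dim\mathfrak{g}>0$, this yields $\kappa=P_{\mathfrak{g}}(O)/\dim\mathfrak{g}$.

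For field (i), $\sigma^2=\langle\rho\otimes\rho,\kappa C\rangle=\kappa\sum_j\langle\rho,B_j\rangle^2$. The same reality argument, now using that $\rho$ is Hermitian, turns this sum into $P_{\mathfrak{g}}(\rho)$. I must take care that $\langle\cdot,\cdot\rangle$ is conjugate-linear in its first slot, so scalars pulled out from the first argument get conjugated. Here $\kappa$ sits in the second slot, so no conjugation arises; in any case $\kappa$ is real by the formula above. The result is the complex identity $(\sigma^2:\mathbb{C})=P_{\mathfrak{g}}(\rho)P_{\mathfrak{g}}(O)/\dim\mathfrak{g}$, and injectivity of $\mathbb{R}\to\mathbb{C}$ transfers it to $\mathbb{R}$.

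I expect the main obstacle to be the Kronecker factorization of the Hilbert--Schmidt inner product together with the bookkeeping that identifies $\langle B_j,A\rangle^2$ with $|\langle B_j,A\rangle|^2$. This needs a lemma that $\Tr[XY]$ is real for Hermitian $X,Y$, i.e.\ that it equals its own conjugate via $\overline{\Tr[XY]}=\Tr[(XY)^\dagger]=\Tr[YX]$. Once these two facts are stated as standalone lemmas, the remaining steps reduce to finite-sum manipulation with orthonormality and a final field_simp/ring step.
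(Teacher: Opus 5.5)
Your proposal is correct and follows the paper's own three-move proof. Lemma~\ref{lem:projmem} gives $M=\kappa C$; field (iv), together with $\langle C,C\rangle=\dim\mathfrak{g}$ and $\langle C,H\otimes H\rangle=P_{\mathfrak{g}}(H)$ (the paper's \texttt{casimir\_hsInner\_self} and \texttt{casimir\_hsInner\_kron}, which your Kronecker-factorization and Hermitian-reality lemmas reprove), fixes $\kappa=P_{\mathfrak{g}}(O)/\dim\mathfrak{g}$, and field (i) then turns the variance into $\kappa\,P_{\mathfrak{g}}(\rho)$.
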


The formal proof is Casimir arithmetic alone, and it runs in three moves.
First, the two named inputs collapse the unknown: membership in the invariant space (H1) together with the Schur identity (H2) forces the second moment onto the Casimir line, $M = \kappa\, C$ for a scalar $\kappa$, which is exactly Lemma~\ref{lem:projmem}.
Second, pairing the residual-orthogonality field against $O\otimes O$ yields $\kappa\langle C,C\rangle=\langle C,O\otimes O\rangle$. The identities $\langle C,C\rangle=\dim\mathfrak{g}$ and $\langle C,H\otimes H\rangle=P_{\mathfrak{g}}(H)$ then express both sides through purities (\decl{casimir_hsInner_self}, \decl{casimir_hsInner_kron}).
Third, the bundle's defining equation pairs $M = \kappa C$ against $\rho\otimes\rho$, and the same contraction turns the variance into $\kappa\, P_{\mathfrak{g}}(\rho)$; substituting $\kappa = P_{\mathfrak{g}}(O)/\dim\mathfrak{g}$ closes the law.
Nothing about the value is assumed anywhere in the chain; what the interface consumes is the twirl geometry, and what it returns is the formula.
The reductive counterpart (\decl{RagoneReductive.totalVariance_eq}) delivers the full direct-sum law of Eq.~(\ref{eq:variance}) itself, one purity product per ideal, obtained by running the same three moves per ideal and letting the cross-orthogonality of the ideals kill the mixed terms.

Discharging the Schur input is where most of the mathematical work lives.
On $\mathfrak{su}(2)$ the identity $(\mathfrak{g}\otimes\mathfrak{g})^{\mathfrak{g}} = \mathrm{span}\{C\}$ is settled by direct computation (\decl{su2HermBasis_schur}).
For general $n$, the argument uses the Pauli-string basis of $\mathfrak{su}(2^n)$. A symplectic form over $\mathbb{Z}_2^{2n}$ controls the signs of single-term brackets. Its non-degeneracy eliminates off-diagonal invariant coefficients, while connectivity of the anticommutation graph makes the surviving diagonal constant (\decl{suHermBasis_schur}).
The per-family work is packaged once.

\begin{definition}[Pauli--Schur family]\label{def:schurfamily}
A Pauli--Schur family uses Pauli-string labels, indexed by a Hermitian DLA basis and realized as normalized Pauli matrices. It carries four solver inputs: (i) the identity string is excluded; (ii) anticommuting pairs are closed under multiplication; (iii) distinct family strings admit an in-family separating witness; and (iv) a function constant on anticommuting pairs is constant on the family.
\end{definition}

\begin{codeListingBox}{The parametrized Casimir-uniqueness engine interface (\texttt{PauliSchurFamily.lean}, verbatim, docstrings elided)}
\begin{lstlisting}
structure PauliSchurFamily (n : ℕ) {gens : Set (Matrix (Fin (2 ^ n)) (Fin (2 ^ n)) ℂ)}
    (b : DLAHermBasis gens) where
  mem : (Fin n → Fin 4) → Prop
  equiv : Fin b.dim ≃ {s : Fin n → Fin 4 // mem s}
  B_eq : ∀ i, b.B i = rtNinv n • pauliMat (equiv i).1
  not_mem_zero : ¬ mem 0
  xor_closed : ∀ {a c : Fin n → Fin 4}, mem a → mem c → pauliOmega a c = 1 →
    mem (pauliXor a c)
  sep_witness : ∀ {a c : Fin n → Fin 4}, mem a → mem c → a ≠ c →
    ∃ s, mem s ∧ pauliOmega s (pauliXor a c) = 1
  conn_const : ∀ (T : (Fin n → Fin 4) → ℂ),
    (∀ a c, mem a → mem c → pauliOmega a c = 1 → T a = T c) →
    ∀ {x y : Fin n → Fin 4}, mem x → mem y → T x = T y
\end{lstlisting}
\end{codeListingBox}

Producing these four facts is the entire per-family cost; the Schur identity then follows family-independently (\decl{PauliSchurFamily.schur}), and the circuit families treated below pay it.
Family-specific separation and connectivity witnesses instantiate the common interface \decl{PauliSchurFamily}. They discharge the odd-$Y$ realization of $\mathfrak{so}(2^n)$ for $n\ge3$ (\decl{soHermBasis_schur}). The same solver treats the matchgate algebra $\mathfrak{so}(2n)$ on a Majorana-quadratic basis of dimension $n(2n-1)$ for $n\ge3$ (\decl{matchgateSOHermBasis_schur})~\cite{wiersema2024classification}.
The $\mathfrak{so}(4)$ two-block case falls outside the interface by design and is handled as a reductive exception.
All these identities are endpoints of the axioms gate.

The excluded case is excluded for a reason the formalization turns into theorems (finding F2).

\begin{proposition}[The $\mathfrak{so}(4)$ exception]\label{prop:so4}
For $\mathfrak{g} = \mathfrak{so}(4) = \mathfrak{su}(2)\oplus\mathfrak{su}(2)$,
$\dim\,(\mathfrak{g}\otimes\mathfrak{g})^{\mathfrak{g}} = 2$, and in particular
$(\mathfrak{g}\otimes\mathfrak{g})^{\mathfrak{g}} \ne \mathrm{span}\{C\}$.
\end{proposition}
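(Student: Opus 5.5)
Write $\mathfrak{so}(4)=\mathfrak{a}\oplus\mathfrak{b}$ with $\mathfrak{a}\cong\mathfrak{b}\cong\mathfrak{su}(2)$, and work concretely in the two-qubit Pauli basis. The six-dimensional real form is spanned by the purely imaginary antisymmetric two-qubit Paulis (odd number of $Y$'s): $XY, YX, ZY, YZ, IY, YI$. The two ideals are the self-dual and anti-self-dual combinations, for instance $\mathfrak{a}=\mathrm{span}\{XY+YX,\ ZY+YZ,\ IY+YI\}$ (up to signs) and $\mathfrak{b}$ the corresponding differences. Each is a commuting copy of $\mathfrak{su}(2)$, and they are Hilbert--Schmidt orthogonal. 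I would fix Hermitian orthonormal bases $\{A_j\}_{j=1}^3$ of $\mathfrak{a}$ and $\{B_j\}_{j=1}^3$ of $\mathfrak{b}$, so that the full Casimir is $C=C_{\mathfrak{a}}+C_{\mathfrak{b}}$ with $C_{\mathfrak{a}}=\sum_j A_j\otimes A_j$ and $C_{\mathfrak{b}}=\sum_j B_j\otimes B_j$.

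The proof of $\dim(\mathfrak{g}\otimes\mathfrak{g})^{\mathfrak{g}}\ge 2$ is easy. Invariance of $C_{\mathfrak{a}}$ under $\mathrm{ad}_{\mathfrak{a}}$ is Lemma~\ref{lem:antisymm} (the argument behind \texttt{casimir\_mem\_adCommutantGG}) applied to the ideal $\mathfrak{a}$ alone. Invariance under $\mathrm{ad}_{\mathfrak{b}}$ is trivial, since $[\mathfrak{b},\mathfrak{a}]=0$. The same holds for $C_{\mathfrak{b}}$. The two elements are linearly independent because they are Hilbert--Schmidt orthogonal and nonzero: $\langle C_{\mathfrak{a}},C_{\mathfrak{b}}\rangle=\sum_{j,k}|\langle A_j,B_k\rangle|^2=0$, while $\langle C_{\mathfrak{a}},C_{\mathfrak{a}}\rangle=3$. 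This already yields the ``in particular'' clause, since $C_{\mathfrak{a}}\notin\mathrm{span}\{C\}$. Indeed, if $C_{\mathfrak{a}}=\kappa C$, pairing with $C_{\mathfrak{b}}$ gives $0=3\kappa$, and pairing with $C_{\mathfrak{a}}$ gives $3=3\kappa$, a contradiction.

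For the upper bound $\dim\le 2$, I would decompose $\mathfrak{g}\otimes\mathfrak{g}$ into four blocks: $\mathfrak{a}\otimes\mathfrak{a}$, $\mathfrak{a}\otimes\mathfrak{b}$, $\mathfrak{b}\otimes\mathfrak{a}$, and $\mathfrak{b}\otimes\mathfrak{b}$. Each block is invariant under the diagonal action, so an invariant element has invariant components in every block. On $\mathfrak{a}\otimes\mathfrak{b}$, the ideal $\mathfrak{a}$ acts only on the first factor. An invariant $\sum_{jk}t_{jk}A_j\otimes B_k$ therefore needs, for each $k$, the vector $\sum_j t_{jk}A_j$ to be killed by $\mathrm{ad}_{\mathfrak{a}}$. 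That vector lies in the centre of $\mathfrak{su}(2)$, which is zero, so the cross blocks vanish. On $\mathfrak{a}\otimes\mathfrak{a}$, the invariant space is one-dimensional by the $\mathfrak{su}(2)$ Schur identity \texttt{su2HermBasis\_schur}, transported along the isomorphism $\mathfrak{a}\cong\mathfrak{su}(2)$, and it is spanned by $C_{\mathfrak{a}}$. The same holds for $\mathfrak{b}$. Hence the invariant space is exactly $\mathrm{span}\{C_{\mathfrak{a}},C_{\mathfrak{b}}\}$.

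The main obstacle is the formal transport in the last step. The $\mathfrak{su}(2)$ Schur lemma is stated for a specific single-qubit basis, while $\mathfrak{a}$ lives in $4\times4$ matrices. If transporting through a Lie isomorphism proves awkward, I would instead verify the bound by direct linear algebra. I would write a general element of $\mathfrak{g}\otimes\mathfrak{g}$ as $\sum_{s,t}T_{st}P_s\otimes P_t$ over the six Pauli labels and impose invariance under the six generators. This gives an explicit homogeneous linear system on 36 unknowns, whose solution space can be shown to have rank 2 by computation, using the kernel-computable style of the finite-design results.
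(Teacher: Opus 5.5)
Your argument has the right architecture, but the explicit ideals you write down are wrong, and with them the lower bound and your proof of the ``in particular'' clause both fail. The self-dual/anti-self-dual picture applies to the generators $L_{ij}=E_{ij}-E_{ji}$. In the computational basis, $i\,XY=L_{14}-L_{23}$ and $i\,YX=L_{14}+L_{23}$. So the odd-$Y$ Pauli strings already are the (anti-)self-dual combinations, and your sums and differences undo the pairing: $i(XY+YX)=2L_{14}$.

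The actual ideals are spanned by single strings, $\mathfrak{a}_0=\mathrm{span}\{XY,ZY,YI\}$ and $\mathfrak{b}_0=\mathrm{span}\{YX,YZ,IY\}$. All nine cross pairs commute, and each triple obeys the $\mathfrak{su}(2)$ relations, e.g.\ $[ZY,XY]=2i\,YI$.

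Your vectors pair each string with its qubit-swapped mirror, and the swap exchanges $\mathfrak{a}_0$ and $\mathfrak{b}_0$. So for any choice of signs your spans have components in both summands and cannot be ideals, since the only three-dimensional ideals of $\mathfrak{su}(2)\oplus\mathfrak{su}(2)$ are the summands. With the signs as written:
\begin{itemize}
\item your $\mathfrak{a}$ is the diagonal $\mathfrak{su}(2)$, a subalgebra but not an ideal, since $[XY,\,ZY+YZ]=-2i\,YI\notin\mathfrak{a}$;
\item your $\mathfrak{b}$ is not even a subalgebra, since $[XY-YX,\,ZY-YZ]=-2i(YI+IY)\notin\mathfrak{b}$.
\end{itemize}
Hence $[\mathfrak{a},\mathfrak{b}]\neq0$, your four blocks are not preserved by the action, and $C_{\mathfrak{a}}$ is not invariant. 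Write $A_j=(P_j+Q_j)/\sqrt2$ with $P_j\in\mathfrak{a}_0$ and $Q_j\in\mathfrak{b}_0$ normalized mirror strings. Then $C_{\mathfrak{a}}$ has the nonzero component $\tfrac12\sum_j(P_j\otimes Q_j+Q_j\otimes P_j)$ in the cross blocks of the true decomposition. By your own cross-block argument, an invariant element has zero component there. Once you replace your choice by $\mathfrak{a}_0,\mathfrak{b}_0$, every subsequent step goes through verbatim.

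Once repaired, your proof is correct but takes a different route from the paper. The paper gives no written proof beyond the kernel-checked endpoints \texttt{so4\_gTensorGInvariant\_finrank} and \texttt{so4\_singleCasimir\_schur\_false}. Its own explanation is phrased in the Pauli-coordinate engine of Definition~\ref{def:schurfamily}, whose diagonal-constancy step propagates equality of invariant coefficients only along anticommuting pairs. For the six odd-$Y$ strings, the anticommutation graph splits into two triangles, exactly $\mathfrak{a}_0$ and $\mathfrak{b}_0$. The connectivity input therefore fails, and one free constant survives per component.

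Your route is structural: decompose into ideals, kill the cross blocks because a simple ideal has trivial centre, and apply one-dimensional Schur per block. It buys generality: for a semisimple DLA the invariant space has one dimension per simple ideal. Your cross-block step is also precisely the commutation theorem whose absence the paper cites as the reason \texttt{cross\_block\_exclusion} remains a named input of \texttt{RagoneReductive}.

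The paper's route buys economy. It never leaves Pauli coordinates, so it needs no transport of \texttt{su2HermBasis\_schur} into $4\times4$ matrices. It also pinpoints which solver input breaks, which is why the $\mathfrak{so}(2^n)$ discharge requires $n\ge3$. With the corrected ideals, your transport obstacle largely disappears as well: each block is spanned by a connected triangle of Pauli strings, so the same coefficient argument settles it directly.
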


The finrank statement is \decl{so4_gTensorGInvariant_finrank}; the refutation says so as a statement rather than a footnote:

\begin{codeListingBox}{The $\mathfrak{so}(4)$ refutation (\texttt{OrthogonalSO4Schur.lean}, statement)}
\begin{lstlisting}
theorem so4_singleCasimir_schur_false :
    gTensorGInvariant so4HermBasis ≠ Submodule.span ℂ {so4HermBasis.casimir}
\end{lstlisting}
\end{codeListingBox}

The simple-DLA variance formula therefore does not apply to the $n=2$ member of the orthogonal family: $\mathfrak{so}(4)$ is reductive rather than simple. A family-level statement for $\mathfrak{so}(2^n)$ must either require $n\ge3$ or treat $n=2$ through the reductive law.
The correct replacement is the two-Casimir reductive law, whose reference witness evaluates to $\mathrm{Var} = 1/3$ (\decl{so4_totalVariance_eq}).
The same boundary is why the $\mathfrak{so}(2^n)$ discharge requires $n \ge 3$: below that, the connectivity that drives the diagonal-constancy step disconnects.
A companion honesty repair concerns full controllability (finding F3): $\mathfrak{gl}(2^n)$ is not simple but reductive, $\mathfrak{su}(2^n)\oplus\mathbb{C}\cdot\mathbbm{1}$, so we deleted our own single-Casimir claim for it and proved the two-component law instead (\decl{glReductive_totalVariance_eq}).
For traceless observables the centre term vanishes and the law collapses onto $\mathfrak{su}(2^n)$; for observables with trace, the centre direction contributes at order one, and no unconditional plateau statement survives for full $\mathfrak{gl}$.

The plateau capstones combine these dimension formulas with the scale conditions carried by their theorem interfaces.
The generic theorem is quantified over the hypothesis bundle and the accompanying scale conditions: exponentially growing basis dimension, together with the stipulated purity control, yields an exponentially small variance (\decl{ragone_hasBarrenPlateau}).
Each exponential family instantiates it with the Schur input supplied by proof, at an explicit witness sequence whose docstring states the one residual input, the two-design realization, rather than absorbing it (\decl{suN_hasBarrenPlateau_schurDischarged} and \decl{soN_hasBarrenPlateau_schurDischarged}).
The witness objects carry a disclosure of their own.
Informal accounts write the law's test pair as a state and an observable; the diagonal witnesses here are normalized Pauli strings, which are traceless and therefore not density matrices.
The development therefore names them Hermitian witness operators rather than states. A separate nondegenerate $\mathfrak{su}(2)$ evaluation exhibits the law at a genuine state--observable pair.
A Bessel inequality for the basis coefficients, $P_{\mathfrak{g}}(A) \le \|A\|_2^2$ (\decl{DLAHermBasis.gPurity_le_normSq}), converts these into the physically normalized form in which state and observable are constrained only through their Schatten-2 norms (\decl{ragone_hasBarrenPlateau_hsNorm}).

Between the exponential families and the strictly local ones sits a middle regime.
For classical weight-state families the variance scaling rests on a representation-theoretic coroot estimate; the interface carries it as a bundled named hypothesis, and the schema \decl{hasBarrenPlateau_of_weightScale_of_expDim} derives the $b > 1$ plateau from it.
The value of stating the schema this way is that the deep input is visible in the type: a reader can read off which estimate a concrete family must supply.
The transverse-field Ising chain shows why the physical instance must be chosen with care. Its open-chain generators $\{iZ_j,\, iX_j X_{j+1}\}$ produce the free-fermion algebra $\mathfrak{so}(2n)$ of polynomial dimension $n(2n-1)$, so the family cannot feed the exponential schema at all. It lands instead on the no-plateau side: the highest-weight state and the Setup-1 observable have exactly computed purities, every second-moment bundle over this data carries variance $1/(2n-1)$, and the shifted family has no barren plateau (\decl{TFIMWeightScaling.main}).

The law now meets concrete circuits through four families, two with polynomial-dimensional DLAs and two with exponential-dimensional DLAs. Figure~\ref{fig:families} sketches a representative circuit for each.

\paragraph{No plateau: the local product family $\mathfrak{su}(2)^{\oplus n}$.}
Independent single-qubit rotations on $n$ wires generate the direct sum $\mathfrak{su}(2)^{\oplus n}$ of dimension $3n$, realized by the per-qubit embedded bases and their reductive assembly.
For a single-site observable the variance equals $1/3$ at every register size (\decl{localObs_not_hasBarrenPlateau}): no decay and no asymptotics, with the reductive law of Eq.~(\ref{eq:variance}) evaluating in closed form at every $n$.
The development also evaluates the law away from a basis witness. A one-qubit state--observable pair gives $1/108$, making the earlier degeneracy explicit (finding F4). Scaling a single-site observable yields the family value $1/27$.
For this family the development moreover discharges the ensemble-side inputs of the bundle outright, so the closed-form values rest on no residual hypothesis; the family returns in Sec.~\ref{sec:dichotomy} as the worked instance of the variance-reconstruction capstone.

\paragraph{Conditional non-concentration: the matchgate family $\mathfrak{so}(2n)$.}
Nearest-neighbour matchgate circuits, the free-fermion class, generate an algebra isomorphic to $\mathfrak{so}(2n)$~\cite{wiersema2024classification}, with the closed-form dimension $n(2n-1)$ (\decl{matchgateSOHermBasis_dim_closedForm}).
The polynomial dimension places this family on the non-concentrating side under an inverse-polynomial purity floor. For $n\ge3$, the solver of Definition~\ref{def:schurfamily} discharges the Schur input (\decl{matchgateSOHermBasis_schur}). The polynomial-DLA theorem then gives a loss-variance floor and exact reconstruction from polynomially many coordinates (\decl{matchgateSO_polyDLA_family_dichotomy}).
The statement is quantified over the bundle: it supplies the free-fermion design rule without asserting a matchgate-group twirl, which remains a named input.
The family's $n = 2$ member is exactly the reductive $\mathfrak{so}(4)$ exception above, where the two-Casimir law takes over (\decl{matchgateSO4_totalVariance_eq}).

\paragraph{Conditional loss concentration: the universal family $\mathfrak{su}(2^n)$.}
A hardware-efficient ansatz whose single-qubit rotations and entangling layers reach full controllability generates the traceless algebra $\mathfrak{su}(2^n)$, of dimension $4^n - 1$.
The Pauli-string basis \decl{suHermBasis} realizes it, and the solver discharges its Schur identity (\decl{suHermBasis_schur}). For the witness sequence and scale conditions packaged by \decl{suN_hasBarrenPlateau_schurDischarged}, the remaining bundle hypothesis yields variance at most $1/(4^n - 1)$ and hence exponential loss concentration.

\paragraph{Conditional loss concentration: the orthogonal family $\mathfrak{so}(2^n)$.}
Circuits generated by the odd-$Y$ Pauli strings produce the orthogonal algebra, and here the classification is on the nose: the generated DLA equals \mathlib{}'s own $\mathfrak{so}(2^n)$ at every $n$ (\decl{soDLA_eq_orthogonalSo}), with dimension $(4^n - 2^n)/2$ (\decl{soHermBasis_dim_closedForm}).
The dimension is again exponential. Under the corresponding witness, scale, and bundle conditions, the same route yields exponential loss concentration (\decl{soHermBasis_schur}, \decl{soN_hasBarrenPlateau_schurDischarged}) for $n \ge 3$. The $\mathfrak{so}(4)$ member is excluded by Proposition~\ref{prop:so4}, while $\mathfrak{so}(2)$ is the trivial one-dimensional case.

\begin{figure}[t]
\centering
% (a) su(2)^{oplus n}: independent single-qubit rotations
\begin{tikzpicture}[
  g1/.style={draw, rounded corners=1.5pt, fill=blue!8, inner sep=3pt, font=\scriptsize, minimum height=5mm},
]
  \node at (2.2,-0.85) [font=\small] {(a)};
  \foreach \y in {0, 0.7, 1.4} \draw[gray] (0,\y) -- (4.05,\y);
  \foreach \y in {0, 0.7, 1.4} \node[g1] at (1.6,\y) {$R(\bm\theta)$};
  \draw[dashed, gray!70] (0.85,-0.4) rectangle (2.35,1.8);
  \node[font=\scriptsize, gray!90!black, anchor=south west] at (2.4,1.62) {$\times N$};
  \node[font=\scriptsize, anchor=west] at (4.15,0.7) {$\dim\mathfrak{g} = 3n$};
\end{tikzpicture}
\hspace{4mm}
% (b) matchgate: nearest-neighbour two-qubit bricks
\begin{tikzpicture}[
  g2/.style={draw, rounded corners=1.5pt, fill=blue!8, inner sep=2.5pt, font=\scriptsize, minimum height=12mm},
  g1/.style={draw, rounded corners=1.5pt, fill=blue!8, inner sep=3pt, font=\scriptsize, minimum height=5mm},
]
  \node at (2.2,-0.85) [font=\small] {(b)};
  \foreach \y in {0, 0.7, 1.4} \draw[gray] (0,\y) -- (4.05,\y);
  \node[g2] at (1.3,0.35)  {$G(\bm\theta)$};
  \node[g1] at (1.3,1.4) {$R_Z$};
  \node[g2] at (2.75,1.05) {$G(\bm\theta)$};
  \node[g1] at (2.75,0) {$R_Z$};
  \draw[dashed, gray!70] (0.65,-0.4) rectangle (3.4,1.8);
  \node[font=\scriptsize, gray!90!black, anchor=south west] at (3.45,1.62) {$\times N$};
  \node[font=\scriptsize, anchor=west] at (4.15,0.7) {$\dim\mathfrak{g} = n(2n{-}1)$};
\end{tikzpicture}
\\[2mm]
% (c) su(2^n): rotations + entangling ladder
\begin{tikzpicture}[
  g1/.style={draw, rounded corners=1.5pt, fill=blue!8, inner sep=3pt, font=\scriptsize, minimum height=5mm},
]
  \node at (2.2,-0.85) [font=\small] {(c)};
  \foreach \y in {0, 0.7, 1.4} \draw[gray] (0,\y) -- (4.05,\y);
  \foreach \y in {0, 0.7, 1.4} \node[g1] at (1.25,\y) {$R(\bm\theta)$};
  \fill (2.35,1.4) circle (1.6pt); \fill (2.35,0.7) circle (1.6pt);
  \draw[thick] (2.35,1.4) -- (2.35,0.7);
  \fill (2.95,0.7) circle (1.6pt); \fill (2.95,0) circle (1.6pt);
  \draw[thick] (2.95,0.7) -- (2.95,0);
  \draw[dashed, gray!70] (0.7,-0.4) rectangle (3.35,1.8);
  \node[font=\scriptsize, gray!90!black, anchor=south west] at (3.4,1.62) {$\times N$};
  \node[font=\scriptsize, anchor=west] at (4.15,0.7) {$\dim\mathfrak{g} = 4^n{-}1$};
\end{tikzpicture}
\hspace{4mm}
% (d) so(2^n): odd-Y two-qubit rotations
\begin{tikzpicture}[
  g2/.style={draw, rounded corners=1.5pt, fill=blue!8, inner sep=2.5pt, font=\scriptsize, minimum height=12mm},
]
  \node at (2.2,-0.85) [font=\small] {(d)};
  \foreach \y in {0, 0.7, 1.4} \draw[gray] (0,\y) -- (4.05,\y);
  \node[g2, align=center] at (1.35,1.05) {$e^{-i\theta YX}$};
  \node[g2, align=center] at (2.95,0.35) {$e^{-i\theta YX}$};
  \draw[dashed, gray!70] (0.6,-0.4) rectangle (3.75,1.8);
  \node[font=\scriptsize, gray!90!black, anchor=south west] at (3.8,1.62) {$\times N$};
  \node[font=\scriptsize, anchor=west] at (4.15,0.7) {$\dim\mathfrak{g} = \tfrac{4^n-2^n}{2}$};
\end{tikzpicture}
\caption{Schematic circuit families whose generator sets produce the four case-study
algebras, labelled at the right by $\dim\mathfrak{g}$~\cite{larocca2022diagnosing,wiersema2024classification}.
Horizontal lines are qubit wires; $R(\bm\theta)$ is a parametrized single-qubit
rotation; $G(\bm\theta)$ is a parametrized two-qubit matchgate acting on
neighbouring wires; $R_Z$ is a single-qubit $Z$-rotation; connected dots denote an
entangling controlled-$Z$ gate; $e^{-i\theta YX}$ is a two-qubit rotation generated
by an odd-$Y$ Pauli string; the dashed box marks the layer that repeats $N$ times in
the QNN.
(a)~Independent single-qubit rotations: $\mathfrak{su}(2)^{\oplus n}$, with a
non-concentrating local witness proved by the product-family instance.
(b)~Nearest-neighbour matchgate bricks: free-fermion $\mathfrak{so}(2n)$, with a
conditional inverse-polynomial variance floor.
(c)~Rotations with entangling layers to full controllability:
$\mathfrak{su}(2^n)$, where the bundled scale conditions yield exponential loss
concentration. (d)~Odd-$Y$ two-qubit rotations: $\mathfrak{so}(2^n)$, with the
same conditional conclusion for $n\ge3$.}
\label{fig:families}
\end{figure}

\subsection{The capstone: loss variance and exact reconstruction}\label{sec:dichotomy}

This subsection brings together two consequences of the same algebraic structure: a conditional loss-variance law and exact loss reconstruction in DLA coordinates. The reconstruction half is a correctness theorem for $\mathfrak{g}$-sim, a DLA-coordinate scheme for evolving observables rather than states~\cite{goh2023liealgebraic}. We state the scheme before describing its formalization.

\begin{algorithm}[t]
\caption{$\mathfrak{g}$-sim: a DLA-coordinate identity for exact loss reconstruction~\cite{goh2023liealgebraic}}
\label{alg:gsim}
\begin{algorithmic}[1]
\REQUIRE Hermitian basis $\{B_1,\dots,B_d\}$ of $\mathfrak{g}$; quantum data $e_j = \Tr[\rho B_j]$; gate list $A_1,\dots,A_L \in \mathfrak{g}$; coordinates $c$ with $O = \sum_j c_j B_j$
\ENSURE the exact loss $\ell = \Tr[U \rho\, U^\dagger O]$ for $U = e^{A_1}\cdots e^{A_L}$
\STATE $X \gets O$
\FOR{$l = 1, \dots, L$}
  \STATE $c \gets T_l\, c$, where $[T_l]_{jk} = \langle B_j,\, e^{-A_l} B_k\, e^{A_l}\rangle$ is the $d \times d$ transfer matrix of gate $l$ \COMMENT{Heisenberg order: the list is consumed so that the last-applied gate conjugates innermost}
\ENDFOR
\RETURN $\ell = \sum_{j=1}^{d} c_j\, e_j$
\end{algorithmic}
\end{algorithm}

Once the quantum expectation values and transfer matrices are supplied, Algorithm~\ref{alg:gsim} evolves only $\dim\mathfrak{g}$ coordinates and multiplies $\dim\mathfrak{g}\times\dim\mathfrak{g}$ matrices. The formal definitions still use ambient $2^n\times2^n$ matrices; the development proves the coordinate identity, not the cost of constructing its inputs.
Formalizing the identity means proving that each line has the stated mathematical effect, and the development builds the proof in the order of the lines.
The transfer matrix of line 3 is \decl{gsimAd}. The theorem \decl{gsimAd_eq_exp_adMatrix} identifies it with the matrix exponential of the adjoint action in the chosen basis.
Line 3 also requires the Heisenberg-evolved observable to remain in $\mathfrak{g}$. A direct normed-space proof encountered incompatible ambient instances. The landed proof instead identifies the Cauchy antidiagonal sums of the sandwiched series with a normalized Hadamard sequence, entry by entry (finding F7).
The loop is a fold over the gate list (\decl{gsimEvolved}). Its transfer matrices compose against the reversed list; the unreversed convention gives an incorrect loss for noncommuting gates. The closed form \decl{gsimEvolved_coords} fixes this order explicitly (finding F6).
The return line is the reconstruction theorem: the loss of the full circuit is recovered exactly from the $\dim\mathfrak{g}$ expectation values (\decl{gsim_loss_reconstruction_ansatz}).
This is the exact-reconstruction result referred to below. It does not by itself establish an overall classical complexity bound; what is proved is correctness of the coordinate identity, not efficiency of computing its inputs.

The capstone pairs two consequences of the same algebraic structure: the conditional loss-variance law and exact reconstruction in DLA coordinates.

\begin{theorem}[Variance--reconstruction pairing; \texttt{gsim\_variance\_and\_reconstruction}]\label{thm:dichotomy}
Under the hypotheses of Theorem~\ref{thm:variance} with, in addition, $O \in \mathfrak{g}$:
(i) the variance law of Eq.~(\ref{eq:variance}) holds; and
(ii) for every finite gate list $A_1, \dots, A_L \in \mathfrak{g}$,
\begin{equation}\label{eq:reconstruction}
\Tr\!\big[\, e^{A_1}\!\cdots e^{A_L}\; \rho\; e^{-A_L}\!\cdots e^{-A_1}\; O \,\big]
\;=\; \sum_{j=1}^{\dim\mathfrak{g}} c_j \, \Tr[\rho\, B_j],
\end{equation}
where the coefficient vector $c$ is the basis expansion of the Heisenberg-evolved observable, obtained by composing $\dim\mathfrak{g} \times \dim\mathfrak{g}$ transfer matrices along the gate list.
\end{theorem}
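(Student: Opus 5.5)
The pairing is a conjunction, and I will prove the two conjuncts independently from the shared data $(\{B_j\},\rho,O,M)$. Conjunct~(i) is Theorem~\ref{thm:variance} verbatim. Its hypotheses, a Hermitian Hilbert--Schmidt-orthonormal basis with $\dim\mathfrak{g}>0$, Hermitian $\rho$ and $O$, and a second-moment bundle $M$, are exactly those assumed here. The additional assumption $O\in\mathfrak{g}$ is not consumed by this half, so (i) follows by direct invocation of \texttt{variance\_eq\_gPurity}.

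For conjunct~(ii), the first step is to expand the observable in the basis. Since $O\in\mathfrak{g}$ and the basis spans $\mathfrak{g}$ (field \texttt{span\_eq}), orthonormality gives $O=\sum_j c^{(0)}_j B_j$ with $c^{(0)}_j=\langle B_j,O\rangle$. I then move to the Heisenberg picture using trace cyclicity:
\[
\Tr\!\big[U\rho U^{-1}O\big]=\Tr\!\big[\rho\,U^{-1}OU\big],\qquad U=e^{A_1}\cdots e^{A_L}.
\]
Here $U^{-1}=e^{-A_L}\cdots e^{-A_1}$, which is a product identity for matrix exponentials and needs no unitarity. Next I show by induction on the gate list that $O_L:=U^{-1}OU$ lies in $\mathfrak{g}$. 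Each step is a single conjugation $X\mapsto e^{-A}Xe^{A}$ with $A\in\mathfrak{g}$. By $e^{-A}Xe^{A}=\exp(-\mathrm{ad}_A)X$, this is a convergent series of iterated brackets, each of which stays in $\mathfrak{g}$. The finite-dimensional subspace $\mathfrak{g}$ is closed, so the limit stays in $\mathfrak{g}$. In coordinates, each step multiplies the coefficient vector by the transfer matrix $T_l$ of Algorithm~\ref{alg:gsim}, and the steps compose along the reversed list. The closed form \decl{gsimEvolved_coords} and the exponential identification \decl{gsimAd_eq_exp_adMatrix} package exactly this. Writing $O_L=\sum_j c_jB_j$ with $c$ the composed coordinates, linearity of the trace gives $\Tr[\rho O_L]=\sum_j c_j\Tr[\rho B_j]$, which is Eq.~(\ref{eq:reconstruction}). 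This final step is what \decl{gsim\_loss\_reconstruction\_ansatz} records, so in the formal proof (ii) should reduce to applying it to the given basis and gate list.

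I expect the main obstacle to be the closure step, that conjugation by $e^{A}$ keeps the observable inside $\mathfrak{g}$. The paper notes that the direct normed-space route ran into incompatible ambient instances (finding F7). I would first try to apply the already-landed Hadamard-sequence lemma rather than re-prove the result. Failing that, I would argue through the finite-dimensional linear map $\mathrm{ad}_A$ restricted to $\mathfrak{g}$: its exponential is a matrix exponential on $\mathbb{C}^{\dim\mathfrak{g}}$, and it agrees with the ambient conjugation by uniqueness of solutions to $\dot X=[X,A]$. A secondary hazard is the ordering convention (finding F6). I will fix the reversed-list composition before the induction, so that the last-applied gate conjugates innermost, consistent with $U^{-1}OU$.
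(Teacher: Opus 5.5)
Your proposal is correct and follows the paper's proof: the capstone is the conjunction of \texttt{variance\_eq\_gPurity}, which does not use $O\in\mathfrak{g}$, and \texttt{gsim\_loss\_reconstruction\_ansatz}, which is applied to the basis, the gate-list membership hypothesis, and $O\in\mathfrak{g}$, and does not use the bundle $M$. Your sketch of that already-landed reconstruction lemma (trace cyclicity, closure of $\mathfrak{g}$ under $X\mapsto e^{-A}Xe^{A}$, reversed-list composition of transfer matrices) agrees with the paper's description of it.
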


Its \lean{} form fits in one block, proof included:

\begin{codeListingBox}{The variance-reconstruction capstone (\texttt{GSim.lean}, verbatim)}
\begin{lstlisting}
theorem gsim_variance_and_reconstruction (b : DLAHermBasis gens)
    {ρ O : Matrix (Fin N) (Fin N) ℂ} (M : RagoneSecondMoment b ρ O)
    (hρ : ρᴴ = ρ) (hO : Oᴴ = O) (hdim : 0 < b.dim)
    (hOg : O ∈ (dynamicalLieAlgebra gens).toSubmodule) :
    ((M.variance : ℂ) = b.gPurity ρ * b.gPurity O / (b.dim : ℂ))
    ∧ ∀ (Gs : List (Matrix (Fin N) (Fin N) ℂ)),
        (∀ A ∈ Gs, A ∈ (dynamicalLieAlgebra gens).toSubmodule) →
        ((Gs.map NormedSpace.exp).prod * ρ
            * ((Gs.reverse).map (fun A => NormedSpace.exp (-A))).prod * O).trace
          = ∑ j, hsInner (b.B j) (gsimEvolved Gs O) * (ρ * b.B j).trace :=
  ⟨M.variance_eq_gPurity hρ hO hdim,
    fun _ hGs => gsim_loss_reconstruction_ansatz b hGs ρ hOg⟩
\end{lstlisting}
\end{codeListingBox}

Three encoding choices deserve a reader's attention.
Gates enter algebraically, as \texttt{NormedSpace.exp A} with \texttt{A} in the algebra; for the physical unitary $e^{-i\theta H}$ one takes $A = (-\theta)\cdot(iH)$, and no unitarity is needed for the reconstruction identity.
The circuit is a \texttt{List}, and the inverse side multiplies the \emph{reversed} list, the Heisenberg time-ordering that finding F6 reports.
The proof term, finally, is visible in full: the capstone is, by construction, the conjunction of the derived variance law and the proved reconstruction theorem, consuming one and the same bundle \texttt{M}.
The pairing, rather than a new mathematical identity, is the contribution of this statement. Both halves depend on the same named-hypothesis object, which keeps the scope of the joint claim explicit.

The statement that matters physically is the family-level one, and the formalization reaches it as a worked case.
On $\mathfrak{su}(2)^{\oplus n}$ with a single-site observable, $\dim\mathfrak{g} = 3n$. The paired instance \decl{localObs_family_dichotomy} holds at every register size, with variance $1/3$ and reconstruction from $3n$ expectation values. Its scaled companion removes the witness degeneracy discussed for the local product family.

\begin{corollary}[Scaled family witness]\label{cor:scaled}
For every $n \ge 1$, the product family on $\mathfrak{su}(2)^{\oplus n}$ with the scaled single-site observable satisfies
$\mathrm{Var}[\ell] = 1/27$, and for every gate list $A_1,\dots,A_L \in \mathfrak{g}$ the loss is reconstructed exactly as in Eq.~(\ref{eq:reconstruction}).
\end{corollary}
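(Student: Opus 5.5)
The plan is to prove Corollary~\ref{cor:scaled} the same way the capstone of Theorem~\ref{thm:dichotomy} is proved: as a conjunction of a variance evaluation and the reconstruction theorem, both run on one bundle. The difference is that the bundle is now the reductive one for $\mathfrak{su}(2)^{\oplus n}$, and the observable is the unscaled single-site witness $O_k$ multiplied by a real scalar $s$. I will take $s = 1/3$, chosen so that the unscaled family value $1/3$ of \decl{localObs_not_hasBarrenPlateau} becomes $1/27$. The ideal of this scaling is to reuse every object already built for the local product family rather than construct anything new.

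\emph{Step 1: the bundle.} I start from the \texttt{RagoneReductive} instance for $\mathfrak{su}(2)^{\oplus n}$, with the product-Clifford ensemble discharging per-ideal invariant membership, the per-ideal Schur identities (each ideal is an embedded $\mathfrak{su}(2)$, closed by \decl{su2HermBasis_schur}), and \decl{cross_block_exclusion}. To pass to $sO_k$, I rescale every per-ideal second moment by $s^2$. The fields \decl{var_eq} and \decl{mem_invariant} survive because both are linear in the second moment and the invariant space is a submodule. The field \decl{proj_orth} survives because $\langle C_j,(sO)\otimes(sO)\rangle = s^2\langle C_j,O\otimes O\rangle$, and this holds since $s$ is real, so conjugation in the Hilbert--Schmidt pairing does not affect it.

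\emph{Step 2: the value.} Applying \decl{RagoneReductive.totalVariance_eq} gives $\sum_j P_{\mathfrak{g}_j}(\rho)P_{\mathfrak{g}_j}(sO_k)/3$. Orthonormality of the embedded per-qubit bases makes $P_{\mathfrak{g}_j}(O_k)=\delta_{jk}$, so only the $k$-th ideal survives. The purity is quadratic, so $P_{\mathfrak{g}_k}(sO_k)=s^2$. For the witness state used in the family, $P_{\mathfrak{g}_k}(\rho)=1$. The variance is therefore $s^2/3 = 1/27$, and none of these quantities depends on $n$, which gives the claim for every $n\ge1$.

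\emph{Step 3: reconstruction.} Since $O_k$ lies in $\mathfrak{g}$ and $\mathfrak{g}$ is a complex subspace, $sO_k\in\mathfrak{g}$ as well. Theorem~\ref{thm:dichotomy}(ii), namely \decl{gsim_loss_reconstruction_ansatz}, applies verbatim for any gate list in $\mathfrak{g}$; it needs only $O\in\mathfrak{g}$ and a Hermitian basis, not the bundle. Pairing Steps 2 and 3 gives the corollary.

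I expect the main obstacle to be Step 2, and specifically the per-ideal purity bookkeeping inside the reductive assembly. Concretely, I must show that the embedded basis elements $\mathbbm{1}\otimes\cdots\otimes\sigma/\sqrt{2^n}\otimes\cdots$ pair with $O_k$ to give exactly a Kronecker delta uniformly in $n$. My first attempt will be to reduce this to the one-qubit computation via $\Tr[A\otimes B]=\Tr A\,\Tr B$ and the tracelessness of the Pauli matrices. I will also reuse the purity lemmas already proved for the $1/3$ instance, rescaled by the factor $s^2$ pulled out of the sum.
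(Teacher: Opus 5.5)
Your proposal is correct and follows the paper's proof. The corollary is the conjunction of the reductive variance law evaluated at the scaled family data (\texttt{localObs\_productClifford\_scaled\_totalVariance\_eq}) with the reconstruction identity for $\tfrac13 O_k$, whose membership in $\mathfrak{g}$ comes from closure under scalars (\texttt{Submodule.smul\_mem}). Building the scaled bundle by rescaling the unscaled second moment, and calling \texttt{gsim\_loss\_reconstruction\_ansatz} directly instead of projecting the reductive capstone, are only cosmetic differences.

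The one side condition you leave implicit is Hermiticity of $\tfrac13 O_k$ (the paper's \texttt{hOscaled}), which the variance law requires. That, together with keeping the variance field real and giving $P(sO)=|s|^2P(O)=s^2P(O)$, is where realness of $s$ is actually used. By contrast, \texttt{proj\_orth} survives rescaling for any complex $s$, because the Hilbert--Schmidt pairing is linear in its second slot.
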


\begin{codeListingBox}{The scaled family witness (\texttt{GSimLocal.lean}, verbatim)}
\begin{lstlisting}
theorem localObs_family_dichotomy_scaled (hn : 0 < n) :
    (R_local_productClifford_scaled hn).variance = 1 / 27
    ∧ ∀ (Gs : List (Matrix (Fin (2 ^ n)) (Fin (2 ^ n)) ℂ)),
        (∀ A ∈ Gs, A ∈ (dynamicalLieAlgebra (localGens n)).toSubmodule) →
        ((Gs.map NormedSpace.exp).prod * localState
            * ((Gs.reverse).map (fun A => NormedSpace.exp (-A))).prod
            * ((1 / 3 : ℂ) • localObs hn)).trace
          = ∑ j, hsInner ((localHermBasis n).B j)
              (gsimEvolved Gs ((1 / 3 : ℂ) • localObs hn))
              * (localState * (localHermBasis n).B j).trace := by
  have hOscaled : (((1 / 3 : ℂ) • localObs hn)ᴴ =
      (1 / 3 : ℂ) • localObs hn) := by
    rw [Matrix.conjTranspose_smul, localObs_herm hn]
    norm_num
  exact ⟨localObs_productClifford_scaled_totalVariance_eq hn,
    (gsim_variance_and_reconstruction_reductive (localHermBasis n)
      (R_local_productClifford_scaled hn) localState_herm hOscaled
      su2EmbHermBasis_dim_pos (Submodule.smul_mem _ _ (localObs_mem_product_dla hn))).2⟩
\end{lstlisting}
\end{codeListingBox}

Both conjuncts are derived rather than hand-set: the $1/27$ value comes through the reductive variance law evaluated at the family data, and the reconstruction conjunct is the reductive capstone projected at it.
The general family statement follows from the same machinery.

\begin{corollary}[Polynomial-DLA families; \texttt{polyDLA\_family\_dichotomy}]\label{cor:poly}
Suppose that the DLA dimension and the dimension of the distinguished ideal carrying the observable are polynomially bounded in the register size, and that the distinguished-ideal purity product admits an inverse-polynomial floor. The loss variance then has an inverse-polynomial lower bound, in the sense of \decl{not_hasBarrenPlateau_of_invPoly_lower}, while the number of DLA coordinates appearing in Eq.~(\ref{eq:reconstruction}) remains polynomial.
\end{corollary}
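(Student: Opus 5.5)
The corollary splits into two independent conjuncts, as the capstone does: a variance lower bound, and a polynomial count of reconstruction coordinates. I will take as data, for each register size $n$, a reductive bundle (\texttt{RagoneReductive}) with a distinguished ideal $\mathfrak{g}_{j^\star}$ that contains the observable. The hypotheses then come in three parts: polynomial bounds $\dim\mathfrak{g}\le p(n)$ and $\dim\mathfrak{g}_{j^\star}\le q(n)$, and a floor $P_{\mathfrak{g}_{j^\star}}(\rho_n)\,P_{\mathfrak{g}_{j^\star}}(O_n)\ge 1/r(n)$.

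For the variance half, I first apply the reductive law \decl{RagoneReductive.totalVariance_eq}. It writes the variance as the direct sum $\sum_j P_{\mathfrak{g}_j}(\rho)P_{\mathfrak{g}_j}(O)/\dim\mathfrak{g}_j$ of Eq.~(\ref{eq:variance}). Each summand is nonnegative, because purities are sums of squared moduli by Eq.~(\ref{eq:purity-casimir}) and ideal dimensions are positive. Dropping every term except $j^\star$ then gives
\[
\mathrm{Var}[\ell_n]\;\ge\;\frac{P_{\mathfrak{g}_{j^\star}}(\rho_n)P_{\mathfrak{g}_{j^\star}}(O_n)}{\dim\mathfrak{g}_{j^\star}}\;\ge\;\frac{1}{r(n)\,q(n)} .
\]
Multiplying the two polynomial bounds yields an inverse-polynomial floor. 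Feeding it into \decl{not_hasBarrenPlateau_of_invPoly_lower} gives the first conjunct.

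That lemma itself, if it has to be proved rather than cited, is an elementary contradiction. Suppose $|v_n|\le C b^{-n}$ with $b>1$ and $v_n\ge 1/s(n)$ for a polynomial $s$. Then $b^n\le C s(n)$ for all $n$, which is impossible because exponentials eventually dominate polynomials. In \mathlib{} I would route this through \texttt{tendsto\_pow\_const\_div\_const\_pow\_of\_one\_lt}.

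For the reconstruction half, I apply the reconstruction conjunct of the reductive capstone (\decl{gsim_variance_and_reconstruction_reductive}) at each $n$. The sum in Eq.~(\ref{eq:reconstruction}) ranges over the basis index set of size \texttt{b.dim}. The span condition \texttt{span\_eq}, together with linear independence from orthonormality, identifies \texttt{b.dim} with \texttt{dlaDim}, which is at most $p(n)$.

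I expect the main obstacle to be bookkeeping rather than mathematics. Complex-valued purities and variances must be coerced to $\mathbb{R}$ before the inequality manipulation, and the dropped summands must be shown nonnegative as real numbers. I would first establish $P_{\mathfrak{g}_j}(A)\in\mathbb{R}_{\ge0}$ as a lemma on \texttt{gPurity}, namely that its imaginary part is $0$ and its real part is $\ge 0$, and then apply \texttt{Finset.single\_le\_sum}.
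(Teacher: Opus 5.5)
Your proposal is correct and follows the route the paper indicates with ``the same machinery'': the reductive law \texttt{RagoneReductive.totalVariance\_eq}, restricted to the distinguished ideal, gives the inverse-polynomial floor consumed by \decl{not_hasBarrenPlateau_of_invPoly_lower}, and the reconstruction conjunct of the reductive capstone sums over \texttt{b.dim} $=$ \texttt{dlaDim} coordinates. The only difference is cosmetic: because $O$ lies in the distinguished ideal and the ideals are Hilbert--Schmidt orthogonal, the other summands vanish outright, so your nonnegativity-and-drop step is valid but not needed.
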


This family-level statement captures the part of the Cerezo-side thesis formalized here~\cite{cerezo2024does}. The matchgate instance \decl{matchgateSO_polyDLA_family_dichotomy} instantiates it on free-fermion algebras for $n\ge3$.

This theorem is narrower than a full simulation-complexity result.
The reconstruction half is correctness, not complexity: we prove what $\mathfrak{g}$-sim outputs, and make no formal claim about the resources required to obtain the quantum data or run the update.
The variance half is, in general, conditional on the hypothesis bundle; only the local-product-family instances discharge it unconditionally.
The cross-ideal exclusion of the reductive interface is a named input, discharged constructively for the product families and carried visibly everywhere else.
Several ingredients of the informal simulability argument~\cite{cerezo2024does} lie deliberately outside the formal scope: proper and effective subspace confinement beyond the DLA case, average-case rather than exact simulability, and any complexity-class framing of the claim.
What this paper calls the variance-reconstruction pairing refers precisely to the kernel-checked content presented in this section, and nothing beyond it.

A second classical route sits beside $\mathfrak{g}$-sim, formalized to the same standard.
Truncated Pauli propagation evolves an observable in the Pauli basis while discarding terms~\cite{rudolph2025pauli}. The development bounds the expectation-value error (\decl{expectation_truncation_error_le}) and derives a transported-weight bound without a per-layer contraction hypothesis (\decl{expectation_truncation_error_transport_le}). A substochastic Clifford witness supplies a concrete instance (\decl{pauli_propagation_substochastic_witness}). This simulator provides an approximation bound rather than exact reconstruction.

\section{Verification and the Development Record}\label{sec:design}

This section answers one question: why should a reader believe the theorems quoted above without re-deriving them?
The machine-checkable half of the answer runs in continuous integration, where axiom hygiene, declaration existence, and import layering are enforced on every commit. The half that no machine can adjudicate, namely statement faithfulness and the record of how the library was built and corrected, is documented below for the reader.

\subsection{Verification artifacts}

\lean{} checks each proof. On every commit, the axioms gate checks the \numendpoints{} headline endpoints against the classical core of Sec.~\ref{sec:background_itp}; proof stubs and compiler-trusted evaluation are excluded. Companion checks test quotation coverage against the manuscript's whitelist, declaration existence, import layering, the absence of \texttt{sorry}, and registry--code agreement.
These gates do not determine whether a formal statement has the intended physical meaning. A generated audit therefore pairs each of the \numauditrows{} registered statements with its declaration, informal reading, bibliographic anchors, and a hand-written encoding note, completed for every row.
One representation choice makes the exclusion of compiler-trusted evaluation practical.
The discharged two-design inputs rely on facts about a finite Clifford ensemble whose elements carry the coefficient $1/\sqrt{2}$, and equality of real numbers is not decidable by the kernel's native evaluation. Representing the ensemble in exact arithmetic over the quadratic extension $\mathbb{Q}(\sqrt{2})$ restores decidability, so that every finite-group fact reduces by the kernel's own \texttt{decide} procedure with no compiler-trusted shortcut.

A referee replays the machine half of the chain from a fresh clone:

\begin{codeListingBox}{The referee path}
\begin{lstlisting}[language={}]
lake build                                # kernel checks every proof
python scripts/check_no_sorry.py          # no unfinished obligations
python scripts/check_layering.py          # import discipline
python scripts/check_axioms.py            # every endpoint on the classical core
python -m unittest discover -s tests      # registry and naming contracts
\end{lstlisting}
\end{codeListingBox}

The remaining human step is reading the audit, which is exactly where a referee's attention belongs.
All figures in this section are derived from the artifact rather than counted by hand; the convention is non-blank lines and public declarations, with private helpers tallied separately.
Table~\ref{tab:stats} reports the aligned two-pillar scope: the QNN and kernel subtrees together with the signal-processing subtree and the polynomial substrate of Sec.~\ref{sec:expressivity}. A development atlas, a JSON registry that associates each formalization target with the issue that motivated it and the landed declaration that resolved it, currently holds \registryTotal{} entries, of which \registryQML{} target QML results and \registryQMLTheorems{} are theorem-nodes quoted or consumed by the manuscript.

% Numbers below are aligned with the two-pillar tree:
% qml-decirc-aggregate at 381c88b (62-endpoint gate, 33-row audit).
% If the tree moves before submission, re-check; never edit by hand.
\begin{table}[htbp]
\centering
\begin{tabular}{lrr}
\toprule
 & QML surface & Whole library \\
\midrule
\lean{} files & 108 & 157 \\
Lines of code (non-blank) & 43{,}720 & 51{,}141 \\
Theorems and lemmas & 1{,}874 & 2{,}403 \\
Definitions and structures & 370 & 652 \\
\midrule
Gate endpoints (kernel-clean) & \multicolumn{2}{r}{\numendpoints} \\
Hand-audited statement rows & \multicolumn{2}{r}{\numauditrows} \\
Registry entries (QML theorem-nodes) & \multicolumn{2}{r}{\registryQMLTheorems{} of \registryQML{} QML / \registryTotal{} total} \\
Python contract tests & \multicolumn{2}{r}{139} \\
\bottomrule
\end{tabular}
\caption{Development statistics on the aligned two-pillar tree.
Registry entries are tracked in a development atlas; QML theorem-nodes
are the subset that target formalized QML results.}
\label{tab:stats}
\end{table}

\subsection{What formalization revealed}

As noted in Sec.~\ref{sec:intro}, proof drafting in this library followed the agentic loop of Fig.~\ref{fig:formalization}, with the authors responsible for registry curation, statement faithfulness, and gate maintenance.
Because proof acceptance rests on the kernel and the axioms gate rather than on the drafter, the trust chain is invariant to the proof's origin.
The two-pillar organization of Secs.~\ref{sec:expressivity} and~\ref{sec:framework} is reflected in the library by umbrella modules and registry tags; the development atlas tracks \registryTotal{} registry entries, \registryQML{} of which target QML results and \registryQMLTheorems{} of which are theorem-nodes.
This same discipline surfaced eight corrections and clarifications, labeled F1--F8 at the point of discovery in Secs.~\ref{sec:expressivity}--\ref{sec:dichotomy}.
None are errors in the literature; each is a place where an informal argument consumes structure silently and the kernel makes the consumption explicit.
Three illustrate the range.
The invariant subalgebra $(\mathfrak{g}\otimes\mathfrak{g})^{\mathfrak{g}}$ has dimension two rather than one for $\mathfrak{so}(4)$, so the single-Casimir variance law that the informal treatment assumes fails at four qubits (finding F2).
Separately, $\mathfrak{gl}(2^n)$ is reductive rather than simple, forcing a two-component replacement for a claim we had to delete (finding F3).
In the reconstruction pillar, the $\mathfrak{g}$-sim transfer matrices compose against the reversed gate list; the natural ordering produces an incorrect loss for noncommuting gates (finding F6).

Each landing was reviewed with instructions to refute rather than confirm, and the failures are part of the record.
The crux lemma of the Casimir-uniqueness proof required several abandoned strategies before the final proof materialized, and the exponential-adjoint argument failed repeatedly on instance coherence before the entrywise pattern of finding F7 succeeded.
A subsequent review identified and removed a plateau theorem whose $\forall n$ Schur hypothesis the library's own $\mathfrak{so}(4)$ counterexample refutes, and repaired a propagation interface whose nonemptiness hypothesis was provably false for unitary layers.
We report these failures because they expose constraints that another formalization effort can test and reproduce.

\section{Conclusion and Outlook}\label{sec:discussion}

\subsection{Methodological takeaways}
Three lessons from this development should transfer to future quantum formalizations.
First, representation bridges are first-class objects. The Chebyshev--Fourier, eigenphase-to-block, Fisher--Fubini--Study, and transfer-to-adjoint correspondences had to become theorems or named data before the downstream certificates could span the full derivation chain. Lean-QEC reports the same requirement for its symplectic representation~\cite{ehatamm2026endtoend}.
Second, named-hypothesis interfaces make assumptions visible across a theory surface. The de-circularization episode (finding F1) shows that an interface can otherwise conceal the main logical risk.
Third, the numbers a formalization paper quotes, endpoint counts, statistics, audit coverage, should be generated from the artifact and re-checked by the same pipeline that checks the proofs.

\subsection{Toward automated design and verified quantum advantage}\label{sec:outlook}

The central question this work serves is the one Cerezo \emph{et al.}\ posed to the field: if the structures that provably avoid barren plateaus are the very structures a classical simulator exploits, where does the value of variational quantum computing live~\cite{cerezo2024does}?
Figure~\ref{fig:outlook} (left) represents this question as a design map rather than a theorem about the size of either region. One region denotes families with a witnessed inverse-polynomial loss-variance floor. The other denotes families for which the present $\mathfrak{g}$-sim development proves exact reconstruction from DLA coordinates.
The library formalizes these two ingredients under a common hypothesis-bundle architecture, but they are not yet a quantum-advantage criterion. In particular, the reconstruction theorem does not account for data acquisition or runtime, and no lower bound excludes other classical algorithms. A machine-checked advantage claim would require this missing complexity-theoretic layer in addition to the algebraic statements proved here.

This is the outlook we are building toward (Fig.~\ref{fig:outlook}, right).
Mechanized verification could enter a future design loop at two stages. The present library can check the formalized expressivity statements and conditional loss-scaling or reconstruction results for supported architectures. A later complexity layer would be needed before the same loop could assess a quantum-advantage claim. The verification discipline of Sec.~\ref{sec:design}, comprising registry-driven development, axiom hygiene gates, and human faithfulness audit, provides a basis for extending the library to that complexity layer without conflating the two stages.

\begin{figure}[t]
\centering
% left: the advantage Venn diagram
\begin{tikzpicture}[baseline=(l.south)]
  \node (l) at (2.6,-0.6) [font=\small] {};
  \draw[gray, thick] (2.6,1.2) ellipse (2.75 and 1.75);
  \node[font=\scriptsize, gray, anchor=south] at (2.6,3.0) {QNN designs};
  % plateau-free circle first, hatched: the exposed crescent is the advantage sliver
  \draw[pattern=north east lines, pattern color=green!55!black, draw=blue!55!black, thick]
        (3.6,1.1) ellipse (1.35 and 0.9);
  % simulable circle on top, opaque: covers the hatching on the overlap
  \draw[fill=gray!22, draw=gray!60!black] (1.75,1.1) ellipse (1.5 and 1.0);
  \draw[draw=blue!55!black, thick] (3.6,1.1) ellipse (1.35 and 0.9);
  \node[font=\scriptsize, align=center] at (1.35,1.1) {exactly\\reconstructible};
  \node[font=\scriptsize, align=center, blue!40!black,
        fill=white, fill opacity=0.85, text opacity=1, inner sep=1.5pt] at (4.05,1.1) {loss not\\concentrated};
  \draw[-{Latex[length=1.6mm]}, green!45!black, thick] (5.15,2.35) -- (4.5,1.75);
  \node[font=\scriptsize, align=center, green!40!black, anchor=south] at (5.15,2.37) {quantum\\advantage?};
\end{tikzpicture}
\hspace{7mm}
% right: the verified design loop
\begin{tikzpicture}[baseline=(r.south),
  box/.style={draw, rounded corners=2pt, align=center, inner sep=5pt, font=\scriptsize, text width=26mm},
  arr/.style={-{Latex[length=1.8mm]}, thick},
]
  \node (r) at (2.6,-0.6) [font=\small] {};
  % robot: antenna, head, body with indicator, arms, legs
  \draw (0.65,2.75) -- (0.65,2.9); \fill[blue!55!black] (0.65,2.95) circle (1.4pt);
  \draw[rounded corners=1.5pt, fill=blue!8] (0.33,2.3) rectangle (0.97,2.75);
  \fill[blue!55!black] (0.5,2.56) circle (1.4pt); \fill[blue!55!black] (0.8,2.56) circle (1.4pt);
  \draw (0.53,2.4) -- (0.77,2.4);
  \draw[rounded corners=1.5pt, fill=blue!8] (0.27,1.5) rectangle (1.03,2.22);
  \draw[fill=green!30] (0.65,2.0) circle (1.6pt);
  \draw[rounded corners=1pt, fill=gray!15] (0.42,1.6) rectangle (0.88,1.85);
  \draw[thick] (0.27,2.05) -- (0.05,1.8); \fill (0.05,1.8) circle (1.1pt);
  \draw[thick] (1.03,2.05) -- (1.28,1.85); \fill (1.28,1.85) circle (1.1pt);
  \draw[thick] (0.47,1.5) -- (0.47,1.32); \draw[thick] (0.83,1.5) -- (0.83,1.32);
  \draw (0.38,1.32) -- (0.56,1.32); \draw (0.74,1.32) -- (0.92,1.32);
  \node[font=\scriptsize, anchor=east] at (0.0,2.5) {automated designer};
  \node[box, fill=gray!12] (cand) at (3.3,2.1) {candidate QML\\algorithm / QNN};
  \node[box, fill=green!8] (lib) at (3.3,0.35) {\lean{} library:\\selected expressivity\\conditional loss scaling\\exact reconstruction};
  \draw[arr] (1.35,2.1) -- (cand.west);
  \draw[arr] (cand.south) -- (lib.north);
  \draw[arr] (lib.west) .. controls (0.55,0.35) .. (0.62,1.25)
        node[pos=0.3, below left, font=\tiny, align=center] {verified\\design rules};
\end{tikzpicture}
\caption{The outlook. Left: the design-space question of
Ref.~\cite{cerezo2024does}, specialized to the two properties treated here. The
blue region denotes a witnessed absence of loss concentration, and the gray region
denotes exact reconstruction by the formalized $\mathfrak{g}$-sim method. The
hatched sliver is a research target, not an advantage theorem. Right: a prospective
design loop in which an automated designer proposes a model and the library checks
the formalized expressivity, conditional loss-scaling, and reconstruction statements
currently supported.}
\label{fig:outlook}
\end{figure}

\subsection{Related works}
Existing \lean{} developments range from single-result verification to domain infrastructure; this work contributes a connected QML theory layer focused on the expressivity and trainability of QNNs.

Among the \lean{} quantum formalizations reviewed here, the closest single-result precedents verified the generalized quantum Stein's lemma~\cite{meiburg2025formalization} and the Farhi--Goldstone--Gutmann QAOA conjecture~\cite{kol2026machineverified}. The former is built within Lean-QuantumInfo, whose scope extends beyond that theorem toward a reusable quantum-information library.
This development has a different unit of contribution: a connected two-pillar theory surface whose statements consume shared interfaces. Its hygiene checks are therefore applied across the pinned surface and must expand with the manuscript's quotation set.
The trust boundaries differ as well. The 2025 Stein's-lemma artifact reports several standard quantum-information facts as axioms. Lean-Quantum subsequently formalized the data-processing inequality and describes it as the last missing component needed to complete that development~\cite{kasaura2026leanquantum}. Our comparison therefore concerns the published 2025 artifact rather than the current state of the broader library. In the present development, every currently gated endpoint remains within \lean{}'s classical axiom core. This statement concerns proof dependencies, not the discharge of named hypotheses, whose role remains visible in theorem types.
A closer library-plus-capability comparison is Lean-QEC~\cite{ehatamm2026endtoend}, an end-to-end stabilizer-code stack with solver-assisted distance certificates. It applies this shape to error correction, whereas the present capability layer concerns learning models rather than code parameters.
At the theory-infrastructure level, Lean-Quantum develops a basis-independent finite-dimensional framework for states, channels, operator inequalities, and quantum entropies~\cite{kasaura2026leanquantum}. Our matrix-based development instead targets the learning-theoretic layer and exposes its analytic inputs through named hypothesis bundles.

\subsection{Concluding remarks}
This work brings expressivity characterizations and trainability analysis of quantum neural networks into a common, machine-checkable framework in \lean{}.
The formal scope has clear limits. On the expressivity side, every characterization is exact and per-degree; density and universal-approximation theorems are not formalized. On the trainability side, generalized parameter-shift rules lack a circuit-to-cost bridge beyond frequency one, and the polynomial-DLA schema does not yet cover general $\mathfrak{su}(d)$ or permutation-invariant instances. Three deeper foundations also remain to be built: the general DLA classification theorem~\cite{wiersema2024classification}, the measure-theoretic infrastructure for unitary designs and Haar integration, and the formalization of QML branches beyond the current the theory on expressivity and trainability of the QNNs~\cite{Li2022a,Li2022,Jerbi2021,Huang2021,Tomasev2026,Yao2026,Nguyen2024,Du2025,Minervini2026}.
We have kept these boundaries explicit to distinguish proved results from imported assumptions. We hope the library can serve as a foundation on which verified design rules gradually support the search for quantum machine learning algorithms with a plausible advantage.

\section*{Acknowledgment} This work was partially supported by the National Natural Science Foundation of China (Grant Nos.~92576114, 12447107) and the Guangdong Provincial Quantum Science Strategic Initiative (Grant Nos.~GDZX2403008, GDZX2503001, and GDZX2403001).

\bibliographystyle{unsrt}
\bibliography{ref}

% ===================================================================
% \clearpage
% \appendix
% \section{XX}
% \label{app:xxx}

\end{document}